\definecolor{blueviolet}{rgb}{0.2, 0.2, 0.6}
\definecolor{webgreen}{rgb}{0,.5,0}
\definecolor{webbrown}{rgb}{.6,0,0}
\DeclareMathOperator{\Expect}{\mathbb{E}}
\DeclareFixedFont{\ttb}{T1}{txtt}{bx}{n}{9} 
\DeclareFixedFont{\ttm}{T1}{txtt}{m}{n}{9}  
\definecolor{deepblue}{rgb}{0,0,0.5}
\definecolor{deepred}{rgb}{0.6,0,0}
\definecolor{deepgreen}{rgb}{0,0.5,0}
\newcommand\pythonstyle{\lstset{
language=Python,
basicstyle=\ttm,
morekeywords={self},              
keywordstyle=\ttb\color{deepblue},
emph={MyClass,__init__},          
emphstyle=\ttb\color{deepred},    
stringstyle=\color{deepgreen},
frame=tb,                         
showstringspaces=false
}}
\newcommand\pythoninline[1]{{\pythonstyle\lstinline!#1!}}
\definecolor{orange}{RGB}{255,127,0}
\def\bra#1{\ensuremath{\mathinner{\langle{#1}|}}}
\def\ket#1{\ensuremath{\mathinner{|{#1}\rangle}}}
\newcommand{\ketbra}[2]{\lvert #1 \rangle \! \langle #2 \rvert}
\newcommand{\expval}[1]{\langle #1\rangle}
\newcommand{\tr}{\text{tr}}
\newcommand{\Tr}{\text{tr}}
\newcommand{\V}{\boldsymbol{\mathscr{V}}}
\newtheorem{proposition}{Proposition}
\newtheorem{observation}{Observation}
\newtheorem{lemma}{Lemma}
\newtheorem*{rep@proposition}{\rep@title}
\newcommand{\newrepproposition}[2]{%
\newenvironment{rep#1}[1]{%
 \def\rep@title{#2 \ref{##1}}%
 \begin{rep@proposition}}%
 {\end{rep@proposition}}}
\newtheorem*{rep@theorem}{\rep@title}
\newcommand{\newreptheorem}[2]{%
\newenvironment{rep#1}[1]{%
 \def\rep@title{#2 \ref{##1}}%
 \begin{rep@theorem}}%
 {\end{rep@theorem}}}
\newtheorem*{rep@definition}{\rep@title}
\newcommand{\newrepdefinition}[2]{%
\newenvironment{rep#1}[1]{%
 \def\rep@title{#2 \ref{##1}}%
 \begin{rep@definition}}%
 {\end{rep@definition}}}
\newtheorem{theorem}{Theorem}
\newtheorem{corollary}{Corollary}
\newtheorem{definition}{Definition}
\algrenewcommand\alglinenumber[1]{\sf\scriptsize\color{black}{#1}}
\algrenewcommand\algorithmicrequire{\textbf{Input:}}
\algrenewcommand\algorithmicensure{\textbf{Output:}}
\newif\ifptitle
\newif\ifpnumber
\newcounter{para}
\begin{document}
\doparttoc 
\faketableofcontents 


\title{Hardware-efficient learning of quantum many-body states}
\date{\today}
\author{Katherine Van Kirk}
\affiliation{Department of Physics, Harvard University, Cambridge, MA 02138, USA}
\author{Jordan Cotler}
\affiliation{Department of Physics, Harvard University, Cambridge, MA 02138, USA}
\author{Hsin-Yuan Huang}
\affiliation{Institute for Quantum Information and Matter and
Department of Computing and Mathematical Sciences, Caltech, Pasadena, CA, USA}
\author{Mikhail D. Lukin}
\email{lukin@physics.harvard.edu}
\affiliation{Department of Physics, Harvard University, Cambridge, MA 02138, USA}

\begin{abstract}
Efficient characterization of highly entangled multi-particle systems is an outstanding challenge in quantum science.
Recent developments have shown that a modest number of randomized measurements suffices to learn many properties of a quantum many-body system. However, implementing such measurements requires complete control over individual particles, which is unavailable in many experimental platforms. 
In this work, we present rigorous and efficient algorithms for learning quantum many-body states in systems with any degree of control over individual particles, including when every particle is subject to the same global field and no additional ancilla particles are available.
We numerically demonstrate the effectiveness of our algorithms for estimating energy densities in a $U(1)$ lattice gauge theory and classifying topological order using very limited measurement capabilities.
\end{abstract}

\maketitle


\textit{Introduction.} Experimental measurement capabilities and system controls determine what we can learn about nature.
Such constraints pose a particular challenge for studying quantum systems with many degrees of freedom, where the information accessible via any measurement is restricted by the uncertainty principle. 
Recently proposed learning protocols for quantum systems assume universal experimental capabilities that enable informationally-complete measurements~\cite{elben2022randomized, huang2020predicting, jiang2020optimal, stricker2022experimental, mcginley2022shadow, tran2022measuring}.
These ``randomized measurement'' protocols repeatedly prepare a quantum state $\rho$ and perform a measurement randomly sampled from a fixed ensemble. These randomized measurements can be achieved using random unitary evolutions \cite{elben2022randomized, huang2020predicting} or entangling with ancilla qubits and then measuring those ancillas \cite{jiang2020optimal, stricker2022experimental, mcginley2022shadow, tran2022measuring}.
By reusing and classically post-processing the measurement outcomes, one can learn many properties of the quantum state $\rho$.
Recent work demonstrated that judicious post-processing of the measurement data can reconstruct expectation values of an exponentially large set of non-commuting observables \cite{cotler2020quantum, huang2020predicting}. For certain observables, this can be done with exponentially fewer measurements than full quantum state tomography \cite{flammia2012quantum,haah2016sample,o2016efficient}.
For example, one such protocol, classical shadow tomography,
constructs a classical ``shadow'' of the full density matrix \cite{huang2020predicting} without any prior assumptions about the state. 
This protocol and related ideas were inspired by Aaronson's insight that partial descriptions of a quantum state suffice to predict a large number of its properties~\cite{aaronson2018shadow,aaronson2019gentle}.

\begin{figure}[t]
\centering
\includegraphics[scale = 0.32]{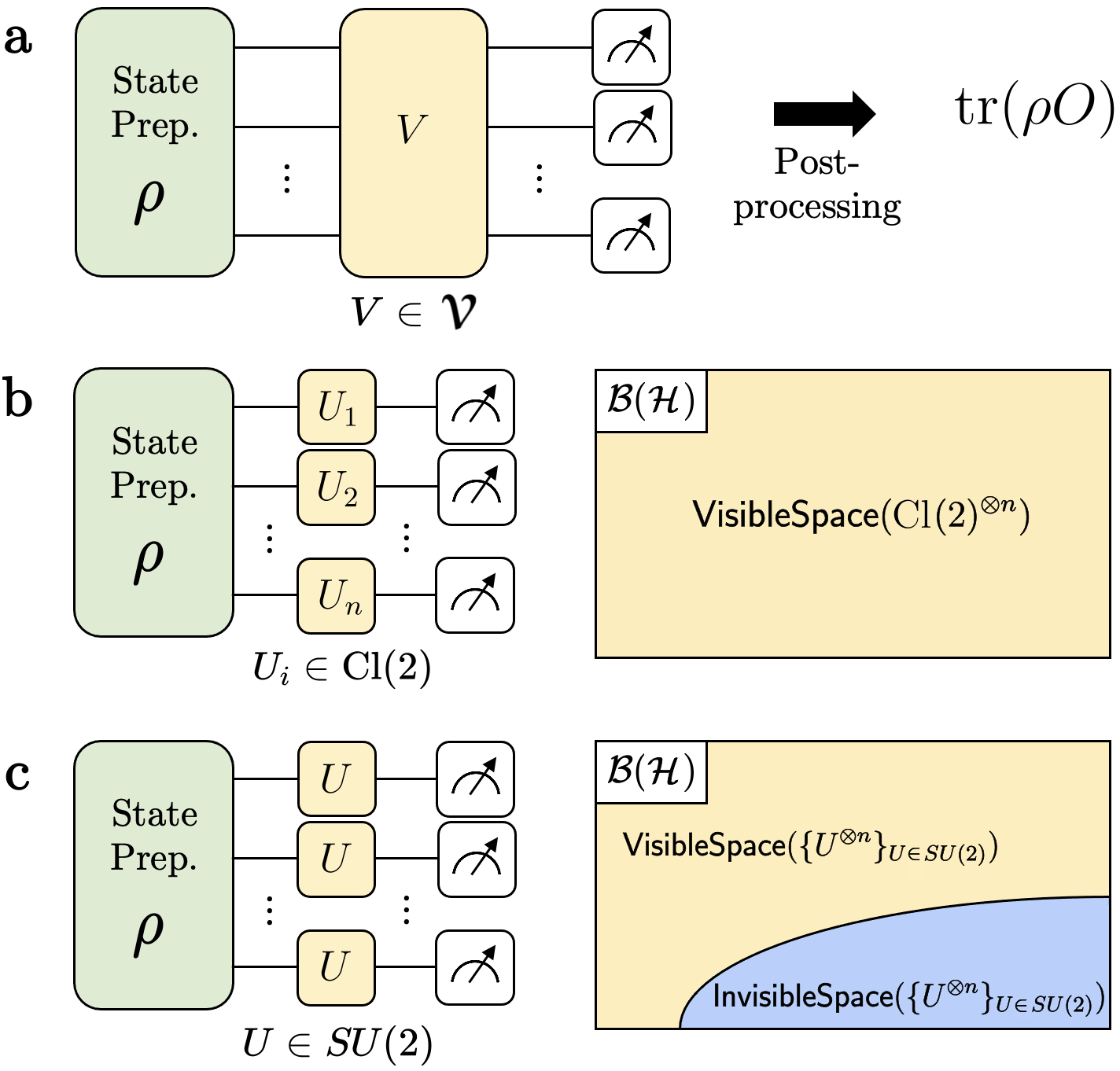}
\caption{\emph{Estimating observables in the visible space.} The visible space $\textnormal{\textsf{VisibleSpace}}$($\V$) is the subspace of the operator space $\mathcal{B}(\mathcal{H})$ containing all observables which can be estimated using the set of implementable unitaries $\V = \{V\}$. \,\,(a) The data obtained from randomized measurements, namely by evolving under $V \in \V$ and measuring in the computational basis, can be used to estimate $\tr(\rho O)$ for any $O \in \textnormal{\textsf{VisibleSpace}}(\V)$. When sampling from (b) $\V = \text{Cl}(2)^{\otimes n}$, the visible space contains all observables, but when sampling from (c) $\V = \{U^{\otimes n}\}_{U\in SU(2)}$, the visible space does not contain all observables -- though it still contains exponentially many.}
\label{fig:Figure1}
\end{figure}

In this Letter we generalize the randomized measurement paradigm. We develop an expanded framework that formalizes which properties of a quantum many-body system can be learned with \textit{specified} measurement capabilities.  
Leveraging available experimental controls, our framework allows one to learn exponentially many properties of a quantum state using only polynomially many measurements in the system size.  This massive multiplexing encompasses and generalizes the theory of classical shadow tomography~\cite{cotler2020quantum, huang2020predicting,huang2021efficient, huang2022learning,chen2021robust,koh2022classical,garcia2021quantum,levy2021classical,hadfield2022measurements,hu2021classical, hu2022logical, hu2022hamiltonian, akhtar2022scalable, bertoni2022shallow,gandhari2022continuous,becker2022classical,mcginley2022shadow,tran2022measuring, rath2021importance,stricker2022experimental}, rendering practical  protocols for contemporary quantum many-body simulators, near term quantum computers, and other controlled many-body systems \cite{semeghini2021probing,satzinger2021realizing,bluvstein2021controlling,martin2022controlling, stas2022robust, choi2019probing}. 
Moreover, by recasting the randomized measurement paradigm in the language of learning theory~\cite{mohri2012foundations}, we improve the efficiency of measuring a set of potentially non-commuting observables $\{O_1, O_2,..., O_M\}$ to within a fixed error $\epsilon$. 

Our key results can be understood as follows: 
consider an experimental platform capable of implementing any unitary in a set $\V = \{V\}$ and then subsequently measuring in the computational basis $\{|b\rangle \langle b|\}_{b \in \{0,1\}^n}$\ (see Figure~\ref{fig:Figure1}).  
With these restrictions the only accessible expectation values are $\langle b |V \rho V^\dagger |b\rangle$, assuming that the state $\rho$ can be prepared on the device. 
Since any observable of the form $V^\dagger |b\rangle \langle b| V$ can be measured, we define the space of \emph{visible observables} as all linear combinations of these $V^\dagger |b\rangle \langle b| V$. 
We can estimate any observable $O_i$ as long as it lives in the visible space. 

Focusing on the visible observables of interest $O_1, ... , O_M$,  we tailor our scheme's randomized measurements and classical post-processing to most efficiently estimate these observables.
For our randomized measurements, we determine which unitaries in $\V$ help estimate our observables of interest. Specifically, since  each $O_i$ can be expressed as a linear combination of $V^\dagger \ketbra{b}{b} V$, the most helpful unitaries $V$ contribute more to the linear combination.  In our measurement protocol, we choose to sample these helpful unitaries with higher probability, thereby improving the measurement efficiency. 
Subsequently, in the post-processing, we can bias our estimator for $\Tr(\rho O_i)$ to reduce its error for a fixed number of measurements. 
This process utilizes a fundamental concept in learning theory, the bias-variance tradeoff, to find the optimal estimator. The bias-variance tradeoff is analogous to the tradeoff in statistics of overfitting versus underfitting a line to a set of data points.
By tailoring our measurements and post-processing to our observables of interest, we maximize the utility of each measurement, 
hence greatly reducing the sample complexity as compared with existing protocols.
In Theorem~\ref{thm:shadowtomtheorem} we provide rigorous guarantees on the number of measurements required to estimate all $\tr(\rho O_i)$ within precision $\epsilon$ with high probability. 
Our scheme’s multiplexed estimation of $M$ observables recapitulates and generalizes the favorable scaling of classical 
shadows, requiring only $\mathcal{O}(\log M)$ measurements. 

 
Our techniques make the observables of interest easy to estimate within the constraints of a given quantum device. These results challenge the ideology of \emph{measure first, ask questions later} \cite{elben2022randomized}. Indeed, in practice the observables of interest are known prior to the experimental runs.  Therefore, it is natural to measure only
what is necessary to estimate these observables and to find the lowest-error estimator for each. We show that this approach significantly improves upon the sample complexity of current state-of-the-art randomized measurements protocols like classical shadows.
Furthermore, we provide several specific examples demonstrating the utility of our framework and methods.  
For instance, we show that \emph{global $SU(2)$ control}, the ability to make each qubit undergo the same Bloch sphere rotation, suffices to efficiently measure the energy density of a $U(1)$ lattice gauge theory.  Additionally, we show that when global $SU(2)$ control is interfaced with rigorous machine learning algorithms \cite{huang2022provably}, it is possible to distinguish certain trivial and topological phases of a quantum many-body system.

\vspace{1mm}
\textit{Learnable properties with specified controls.}
In quantum experiments with limited control, only certain observables can be estimated. 
We can estimate the expectation value $\text{tr}(\rho O)$ of an observable $O$  if and only if $O$ is in the span of $\{V^\dagger \ketbra{b}{b}V\}_{V,b}$. 
We call this span the ``visible space'' because it determines what properties of the quantum system can be learned.  
More formally:
\begin{definition}\label{def:VisibleSpace}
     $\textnormal{\textsf{VisibleSpace}}(\V)$ is the span of $V^{\dagger} \ketbra{b}{b} V$ over every $V \in \V$ and every $b \in \{0,1\}^n$.
\end{definition}

The control provided by an experimental platform, as parameterized by the set of implementable unitaries $\V$, is ingrained in the size and structure of the corresponding $\textnormal{\textsf{VisibleSpace}}(\V)$.
Experimental platforms with complete control, e.g.~universal quantum computers, have a visible space that contains all observables. For instance, it suffices for $\V = \text{Cl}(2)^{\otimes n}$, where $\text{Cl}(2)$ is the single-qubit Clifford group, because with these unitaries one can measure all Pauli strings (see Figure~\ref{fig:Figure1}(b)).  
By contrast, a common form of limited control in contemporary experiments is $\V = \{U^{\otimes n}\}_{U \in SU(2)}$, which we call ``global $SU(2)$ control'' (see Figure~\ref{fig:Figure1}(c)).  This will be a guiding example in our applications below. 
In Appendix~\ref{app:globalSU2} we provide an explicit orthonormal basis of operators for the global $SU(2)$ visible space, and establish that the dimension of this visible space is $\sim 2^n n^2$. Conveniently, this grows exponentially with $n$, meaning that global $SU(2)$ control still enables us to learn exponentially many properties of a quantum system.

\vspace{1mm}
\textit{Efficient protocol for learning many properties.} 
Suppose our quantum simulator can implement the unitaries $\V$. 
Given this type of experimental control, we present a randomized measurement framework to learn properties of a state $\rho$. 
We will first present the randomized measurement protocol and then discuss how to modify it to maximize the effectiveness of each measurement.
Each randomized measurement takes the form:
\begin{enumerate}
    \item Prepare $\rho$.
    \item Sample a unitary $V$ according to some probability density function $p(V)$ on $\V$.
    \item Apply $V$ to $\rho$, obtaining $V \rho V^\dagger$.
    \item Measure $V \rho V^\dagger$ in the computational basis to obtain $|b\rangle$ with probability $\langle b | V \rho V^\dagger | b\rangle$, and record the outcome $b \in \{0, 1\}^n$.
\end{enumerate}
When performing $N$ of these measurements, we package the experimental data as $\{(V_s, b_s)\}_{s=1}^N$, where $V_s$ is the $V$ sampled in the $s$th round, and $b_s$ is the outcome of the measurement in that same round.

Given an observable $O$, we can utilize this experimental data to predict $\Tr(\rho O)$. 
First, notice that the operators we measure in our randomized measurements, $\{V^\dagger \ketbra{b}{b}V\}_{V,b}$, form an overcomplete basis for the visible space, and so we can express any $\widetilde{O} \in \textsf{VisibleSpace}(\V)$ as a linear combination of $V^\dagger \ketbra{b}{b}V\,$s. 
For the coefficients of this linear combination, we want to find a function $\mathcal{K}^{\widetilde{O}}$, mapping a unitary $V$ and an $n$-bit string $b$ to a real number, where
\begin{equation}\label{eqn:OvisKVb}
    \widetilde{O} = \int_{V \in \V} dV \, p(V) \sum_{b \in \{0, 1\}^n} \mathcal{K}^{\widetilde{O}}(V,b) \hspace{1mm} V^{\dagger} \ketbra{b}{b} V
\end{equation}
and $\widetilde{O} \approx O$.
The function $\mathcal{K}^{\widetilde{O}}$ is analogous to a Wigner function, which  for  a harmonic oscillator is  defined on the overcomplete set of coherent states and represents how a state is distributed in phase space. Similarly, the $\mathcal{K}^{\widetilde{O}}$ function allows us to express $\widetilde{O}$ in terms of the overcomplete $V^\dagger \ketbra{b}{b}V\,$ operators and captures how $\widetilde{O}$ is distributed in the visible space.


Using the function $\mathcal{K}^{\widetilde{O}}$, we can predict $\tr(\rho O)$.  Since $\widetilde{O} \approx O$, we can estimate $\tr(\rho O)$ by estimating $\tr(\rho \widetilde{O})$.
For an infinite number of randomized measurements, Eq.~\eqref{eqn:OvisKVb} implies
\begin{equation}\label{eqn:fullavg}
    \tr(\rho \widetilde{O}) = \lim_{N \rightarrow \infty} \frac{1}{N} \sum_{s=1}^N \mathcal{K}^{\widetilde{O}}(V_s, b_s)\,
\end{equation} 
by probabilistic analysis.
For a finite number of randomized measurements $N$, we can estimate $\tr(\rho \widetilde{O})$ with
\begin{equation}\label{eqn:empiricalavg}
    \tr(\rho \widetilde{O}) \approx
    \frac{1}{N} \sum_{s=1}^N \mathcal{K}^{\widetilde{O}}(V_s, b_s)\,.
\end{equation} 
As long as the function $\mathcal{K}^{\widetilde{O}}(V_s,b_s) $ does not fluctuate greatly over different data points $\{ (V_s, b_s) \}_{s=1}^N$, our protocol provides a good estimate for $\Tr(\rho \widetilde{O})$.
The statistical fluctuation of $\mathcal{K}^{\widetilde{O}}$ is characterized formally by the variance $\textnormal{Var}_{\rho}[\mathcal{K}^{\widetilde{O}}]$, which is a function of the unknown state $\rho$ (see Appendix \ref{app:VisibleSpace}).
As we do not have prior information about the state $\rho$, we define the quantity
\begin{align}
\textnormal{Var}_{\textnormal{max}}[\mathcal{K}^{\widetilde{O}}] := \max_{\rho} \textnormal{Var}_{\rho}[\mathcal{K}^{\widetilde{O}}]
\end{align}
which upper bounds 
$\textnormal{Var}_{\rho}[\mathcal{K}^{\widetilde{O}}]$ over all states.
This will be used in the next section to quantify the number of measurements required. 

Since we define $\widetilde{O}$ with an overcomplete basis, many functions $\mathcal{K}^{\widetilde{O}}$ may satisfy Eq.~\eqref{eqn:OvisKVb}. 
While one can choose any such $\mathcal{K}^{\widetilde{O}}$, certain choices reduce the number of measurements required for the protocol. 
For example, we show in Appendix~\ref{app:VisibleSpace} that when our state of interest is maximally mixed, the classical shadow version of $\mathcal{K}^{\widetilde{O}}$ requires the fewest measurements.
We recover classical shadow tomography when choosing
\begin{equation}\label{eqn:shadowVersionKVb}
\mathcal{K}_{\text{CS}}^{\widetilde{O}}(V, b) = \Tr(\widetilde{O} \hat{\rho}_{V,b})\,,
\end{equation}
where $\hat{\rho}_{V,b}$ is the single-shot classical shadow defined in \cite{huang2022learning}.

\vspace{1mm}
\textit{Number of measurements.} 
The following Theorem \ref{thm:shadowtomtheorem} provides a rigorous guarantee of precision in terms of the number of measurements. In this section we develop intuition for this guarantee and then discuss how to modify our protocol to reduce the required number of measurements. 
\begin{theorem}
\label{thm:shadowtomtheorem}
Suppose we are given a quantum device with system controls $\V$ and $M$ observables $O_1, \ldots, O_M$. Consider the functions $\mathcal{K}^{\widetilde{O}_1}, \ldots, \mathcal{K}^{\widetilde{O}_M}$ where all $\widetilde{O}_i \in \textsf{VisibileSpace}(\V)$.
We can predict all $\text{\rm tr}(\rho O_i)$ up to an error $\|O_i-\widetilde{O}_i\|_{\infty} + \epsilon$ using
\begin{equation}\label{eqn:thm1nummmets}
    N = \mathcal{O}\left(\max_{i}\!\left[ \textnormal{Var}_{\textnormal{max}}[\mathcal{K}^{\widetilde{O}_i}] \!\right] \log (M) / \epsilon^2\right)
\end{equation}
randomized measurements.
\end{theorem}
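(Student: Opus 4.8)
The plan is to reduce Theorem~\ref{thm:shadowtomtheorem} to a standard concentration-plus-union-bound argument built on top of Eqs.~\eqref{eqn:OvisKVb}--\eqref{eqn:empiricalavg} and the definition of $\textnormal{Var}_{\textnormal{max}}$. First I would split the target error into a \emph{bias} term and a \emph{statistical} term: since $\widetilde O_i \in \textsf{VisibleSpace}(\V)$, we have $|\tr(\rho O_i) - \tr(\rho\widetilde O_i)| \le \|O_i - \widetilde O_i\|_\infty$ for every state $\rho$ (operator-norm bound on $\tr(\rho \,\cdot\,)$), which accounts for the $\|O_i-\widetilde O_i\|_\infty$ piece. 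It then remains to show that the empirical mean $\widehat\mu_i := \frac1N\sum_{s=1}^N \mathcal{K}^{\widetilde O_i}(V_s,b_s)$ is within $\epsilon$ of $\tr(\rho\widetilde O_i)$ for all $i$ simultaneously, with high probability. By Eq.~\eqref{eqn:fullavg}, $\Expect[\mathcal{K}^{\widetilde O_i}(V_s,b_s)] = \tr(\rho\widetilde O_i)$, so each $\widehat\mu_i$ is an unbiased estimator of $\tr(\rho\widetilde O_i)$, and $\Var[\widehat\mu_i] = \Var_\rho[\mathcal{K}^{\widetilde O_i}]/N \le \Var_{\textnormal{max}}[\mathcal{K}^{\widetilde O_i}]/N$.

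The next step is the concentration argument. The cleanest route that gives the stated $\log M$ scaling without any boundedness assumption on $\mathcal{K}^{\widetilde O_i}$ is the median-of-means estimator: partition the $N$ samples into $K = \Theta(\log M)$ batches of size $L = N/K$, form the batch means, and take $\widehat\mu_i$ to be their median. Chebyshev on a single batch mean gives failure probability $\le \Var_{\textnormal{max}}[\mathcal{K}^{\widetilde O_i}]/(L\epsilon^2) \le 1/4$ once $L = \Omega(\Var_{\textnormal{max}}[\mathcal{K}^{\widetilde O_i}]/\epsilon^2)$; a Chernoff/Hoeffding bound on the number of ``bad'' batches then gives per-observable failure probability $e^{-\Omega(K)}$, and a union bound over the $M$ observables requires $K = \Omega(\log M)$ to make the total failure probability small. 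Multiplying $L$ and $K$, and taking $L$ large enough for the worst observable, yields $N = \mathcal{O}\big(\max_i \Var_{\textnormal{max}}[\mathcal{K}^{\widetilde O_i}]\,\log M/\epsilon^2\big)$, which is exactly Eq.~\eqref{eqn:thm1nummmets}. (If one instead assumes the $\mathcal{K}^{\widetilde O_i}$ are bounded, a direct Hoeffding + union bound also works, but median-of-means is the honest choice here since in general these functions are only controlled in the second moment.) Finally, combining the bias and statistical bounds via the triangle inequality gives the claimed error $\|O_i - \widetilde O_i\|_\infty + \epsilon$ for all $i$.

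The main obstacle is not the concentration step — that is textbook — but making precise the claim underlying Eq.~\eqref{eqn:fullavg}, i.e.\ that $\mathcal{K}^{\widetilde O_i}(V,b)$, viewed as a random variable with $V\sim p(V)$ and $b$ drawn from $\langle b|V\rho V^\dagger|b\rangle$, really has expectation $\tr(\rho\widetilde O_i)$. This is where Eq.~\eqref{eqn:OvisKVb} is used: expanding $\Expect[\mathcal{K}^{\widetilde O_i}(V,b)] = \int dV\,p(V)\sum_b \langle b|V\rho V^\dagger|b\rangle\,\mathcal{K}^{\widetilde O_i}(V,b) = \tr\!\big(\rho \int dV\,p(V)\sum_b \mathcal{K}^{\widetilde O_i}(V,b)\,V^\dagger\ketbra{b}{b}V\big) = \tr(\rho\widetilde O_i)$, where the interchange of $\tr$ with the integral/sum needs the mild regularity that $\mathcal{K}^{\widetilde O_i}$ is integrable against $p(V)$ (which also underlies finiteness of $\Var_{\textnormal{max}}$, defined in Appendix~\ref{app:VisibleSpace}). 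Once that identity is in hand, the rest is the bias/variance decomposition plus median-of-means, and the $\mathcal{O}(\log M)$ multiplexing is immediate from the union bound.
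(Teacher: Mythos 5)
Your proposal is correct, and it fills in precisely the route the paper only sketches. The bias/statistical decomposition is identical to the paper's: $|\tr(\rho O_i)-\tr(\rho\widetilde O_i)|\le\|O_i-\widetilde O_i\|_\infty$ plus unbiasedness of the empirical mean via Eq.~\eqref{eqn:OvisKVb}. Where you diverge is the concentration step. The paper's formal appendix proof applies Bernstein's inequality to the plain empirical average, which requires a boundedness parameter $Q_i := \max_{V,b}|\mathcal{K}^{\widetilde O_i}(V,b)| + \|\widetilde O_i\|_\infty$ and yields $N = 2\log(M/2\delta)\max_i(\textnormal{Var}_{\textnormal{max}}[\mathcal{K}^{\widetilde O_i}] + \tfrac{1}{3}\epsilon Q_i)/\epsilon^2$ --- a bound with an extra additive term not present in the main-text statement. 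The paper then remarks, without details, that a median-of-means estimator removes the $Q_i$ term and gives the $N=\mathcal{O}(\max_i\textnormal{Var}_{\textnormal{max}}[\mathcal{K}^{\widetilde O_i}]\log M/\epsilon^2)$ quoted in the main text; your Chebyshev-per-batch plus Chernoff-over-batches plus union-bound argument is exactly that remark made rigorous. Your observation that median-of-means is the honest choice for the theorem as literally stated (second-moment control only, no boundedness hypothesis on $\mathcal{K}^{\widetilde O_i}$) is correct and is in fact the reason the paper defers to it for the main-text bound. What the Bernstein route buys in exchange is a concrete, non-asymptotic constant and the simpler empirical-average estimator, at the cost of the $Q_i$ dependence; what your route buys is the clean variance-only bound.
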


\noindent 
Theorem \ref{thm:shadowtomtheorem} generalizes one of the main results of classical shadow tomography~\cite{huang2020predicting}, suggesting that we can efficiently estimate the expectation values of exponentially many observables using only a modest number of measurements. 
For example, suppose we want to estimate the $2$-point correlation function $X_i X_j+Y_i Y_j+Z_i Z_j$ across all pairs $i,j$ of qubits. Since we need to estimate this observable across all $M = n^2$ pairs, by Theorem~\ref{thm:shadowtomtheorem} the number of total measurements scales as $\mathcal{O}(\log(n))$. 
While this result is known for quantum systems with complete control~\cite{huang2020predicting}, e.g.~the ability to measure any Pauli string, we find it also holds for simulators with global $SU(2)$ control (see Appendix \ref{app:globalSU2}). In fact, our work extends this result for \textit{any} form of control $\V$ as long as $X_i X_j+Y_i Y_j+Z_i Z_j \in \textsf{VisibleSpace}(\V)$. 

\begin{figure*}[t]
\centering
\includegraphics[scale = 0.35]{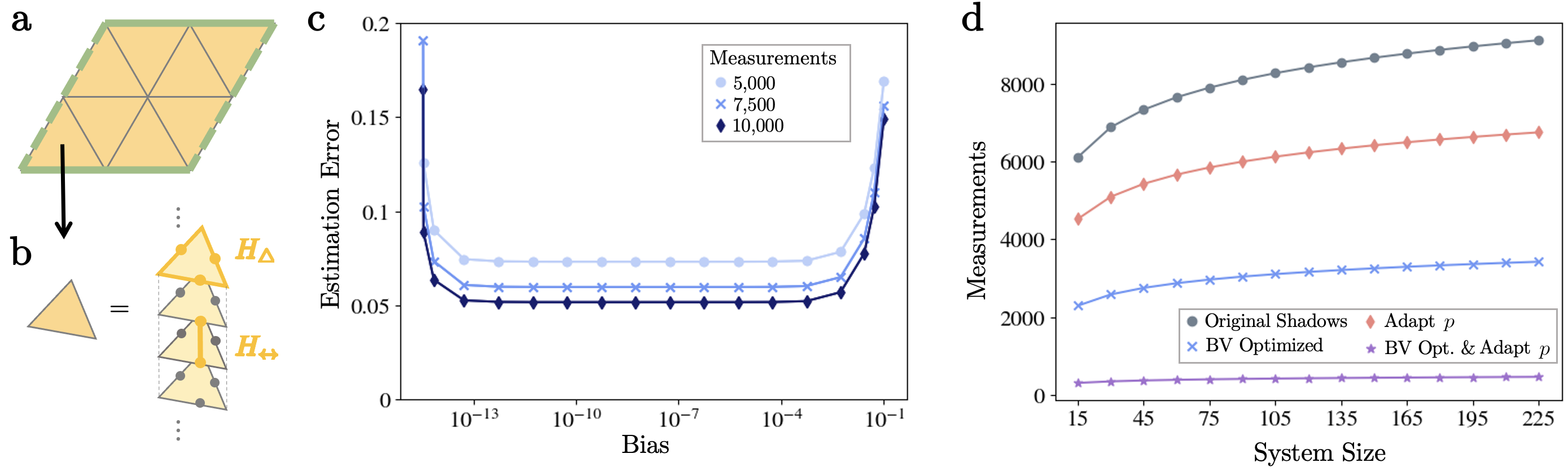}
\caption{\emph{Efficient energy density estimation of a $2+1$ $U(1)$ lattice gauge theory.} (a) The lattice lives in two spatial dimensions, tiled by triangular plaquettes with periodic boundary conditions (connecting the respective solid and dashed green lines). (b) Each spatial triangle extends in an extra dimension to enumerate over bosonic modes. The total Hamiltonian across all spatial triangles can be split into two types of terms: individual triangular plaquette terms $H_{\triangle}$ and link terms $H_{\leftrightarrow}$ that couples qubits in adjacent stacked plaquettes. (c) The bias-variance tradeoff when estimating $H_{\leftrightarrow}$. Biasing an operator can reduce its variance and, therefore, the overall estimation error (left side of bowl). Eventually, the bias will outweigh the gains made on the variance, and the estimation error will begin increasing (right side of bowl).  (d) Compared to classical shadows, our strategies reduce the number of measurements required to estimate the energy density to precision $\epsilon = 0.1$. Here we scale the system size by scaling the number of spatial triangles in our lattice.}
\label{fig:Figure2}
\end{figure*}

Theorem  1 also naturally gives rise to techniques for reducing the required number of measurements. We can make judicious choices to reduce the variances, which also reduces the number of measurements (Eq.~\eqref{eqn:thm1nummmets}). 
Suppose 
we wish to estimate the expectation value of the $k$-local operator $X^{\otimes k}$. 
Ideally one would measure all qubits in the $X$-basis, and $N$ measurements would bring the estimate to within precision $\mathcal{O}(1/\sqrt{N})$. 
However, when performing randomized measurements, not every experimental data point corresponds to measuring $X$ on all sites. The variance in Theorem~\ref{thm:shadowtomtheorem} quantifies how many randomized measurements are needed to have measured \textit{once} in the all-$X$ basis.
For instance, imagine sampling our randomized measurements from $\V=\text{Cl}(2)^{\otimes n}$. Under this set of unitaries, there are $3^k$ possible Pauli strings we could measure on the $k$ sites we are interested in, and only one of them is the all-$X$ Pauli string. This is reflected in the variance, which scales as $\mathcal{O}(3^k)$ when $\mathcal{K} = \mathcal{K}_\textnormal{CS}$ \cite{huang2020predicting}.
By contrast, imagine uniformly sampling randomized measurements from $\V= \{U^{\otimes n}\}_{U \in \text{Cl}(2)}$. Similar to global $SU(2)$ control, we call this ``global $\text{Cl}(2)$ control''. Each data point measures with the all-$X$, all-$Y$, or all-$Z$ Pauli string, and as a result, the variance is $\mathcal{O}(1)$ (see Appendix \ref{app:globalSU2}). In this latter case, since we exponentially more often measure with the all-$X$ Pauli string, we require exponentially fewer measurements.

By increasing the probability that we sample unitaries which are ``useful'' for estimating our observables, we can reduce the total number of required measurements. In the example above, useful unitaries allowed us to estimate the all-$X$ Pauli string. 
Therefore, a key feature of our protocol is the ability to adapt the probability distribution $p$ from which we sample unitaries in our randomized measurements. 
This distribution can be adapted using $\mathcal{K}^{\widetilde{O}}$, which naturally quantifies a unitary's utility: more useful unitaries $V$ have larger $|\mathcal{K}^{\widetilde{O}}(V,b)|$. By constrast, $V$'s with smaller $|\mathcal{K}^{\widetilde{O}}(V,b)|$ contribute less to the empirical average (Eq.~\eqref{eqn:empiricalavg}). In Appendix \ref{app:Adaptivity}, we derive an expression for the optimal probability distribution and find it more often samples unitaries $V$ corresponding to large $|\mathcal{K}^{\widetilde{O}}(V,b)|$. 
We can further reduce the required number of measurements by adapting the estimator $\widetilde{O}$ of our observable $O$.
In the visible space, some observables have small variances and so require fewer measurements. These observables are ``easy'' to predict. Other observables in the visible space have larger variances and so are ``harder'' to predict.
We can truncate the observable $O$'s harder-to-predict parts $O \rightarrow \widetilde{O}$ as long as $\|\widetilde{O} - O\|_\infty$ is within the desired precision. The truncation should not appreciably change the observable.
Since the \textit{biased} observable $\widetilde{O}$ focuses on the parts of $O$ that are easier to predict, we need fewer experimental data points for our estimates. 
This is a type of bias-variance tradeoff \cite{mohri2012foundations}. 
Moreover, in cases where $O$ is outside of but sufficiently close to the visible space, we can still estimate $\tr(\rho O)$ by projecting $O$ onto the visible space. 
In Appendix \ref{app:OptimizingBiasVar} we discuss these ideas in detail and provide explicit methodologies for choosing the optimal $\widetilde{O}$.

\vspace{1mm}
\textit{Applications.}
In this section, we illustrate our randomized measurement protocol with two applications.
The first application is simulating a $U(1)$ lattice gauge theory using a quantum device with limited control.
We show that only having access to global $SU(2)$ measurements suffices to efficiently measure the energy density of any state. By contrast, previous works on estimating energies in lattice gauge theories require a fine degree of control~\cite{huang2020predicting,kokail2019self,tagliacozzo2013simulation}.

We consider a 2+1 dimensional $U(1)$ lattice gauge theory, with the bosonic degrees of freedom truncated to finite dimensions, and specialize to the setting of a triangular lattice with periodic boundary conditions (Fig.~\ref{fig:Figure2}(a)). As shown in~\cite{brower2019lattice}, we can write the Hamiltonian for this theory on a quantum simulator with qubits as $H = \sum_{s}\!\left(\sum_\triangle H^{(s)}_{\triangle} + \sum_\leftrightarrow H^{(s)}_{\leftrightarrow}\right)$, where $\sum_s$ is a sum over bosonic modes, $\sum_\Delta$ is a sum over plaquettes, and $\sum_{\leftrightarrow}$ is a sum over ``links'' which couple qubits in adjacent plaquette layers. 
Fig.~\ref{fig:Figure2}(b) depicts the stack of plaquettes representing the bosonic modes of a single triangle in our 2d lattice.
In Appendix~\ref{app:NumericsLGT}, we provide a detailed description of the Hamiltonian $H$ and prove that it lives entirely in the global $SU(2)$ visible space. Therefore, we can estimate the expectation value of each term in the Hamiltonian -- and hence the local energy density -- by solely utilizing global $SU(2)$ control.

In Fig.~\ref{fig:Figure2}(c), we showcase the utility of bias-variance tradeoffs by plotting the estimation error of a link term $H_\leftrightarrow$ as a function of bias. The plot's bowl shape demonstrates how slightly biasing an operator can lead to a substantially lower estimation error. Moreover, the wide plateau at the minimum estimation error indicates that a large window of biased operators $\widetilde{O}$ gives a nearly-optimal estimation error. In Fig.~\ref{fig:Figure2}(d), we compare our learning protocol to the original classical shadows formulation (Fig.~\ref{fig:Figure2}(d)). We find that our proposed protocol utilizing bias and adapting the sampling distribution requires substantially fewer measurements (see Appendix~\ref{app:NumericsLGT} for details).

\vspace{1mm}
As our second application, we establish that certain topological states can be classified with unsupervised learning algorithms using only global $SU(2)$ measurements.  This greatly expands the practicality of the recent work~\cite{huang2022provably}, which requires tomographically complete measurements. 
We consider a $200$ qubit system where each qubit lives on the edge of a square lattice of length $L=10$ with periodic boundary conditions.
Recall that any two states in a topologically-ordered phase are connected by a small-depth geometrically-local quantum circuit~\cite{chen2010local, wen2017colloquium, zeng2019quantum}.  We generate random states within a topological phase by starting with a representative state -- a product state for the trivial phase and the toric code ground state for a non-trivial phase~\cite{kitaev2003fault} -- and evolving 
under a small-depth geometrically-local random circuit (see Figure~\ref{fig:Figure3}(a)).
Then we perform $1000$ global $SU(2)$ randomized measurements and estimate all local $3$-qubit observables in the visible space. 
These local observables form a ``feature vector'' for each state, which we then analyze using an unsupervised, classical machine learning (ML) model (see Appendix~\ref{app:NumericsPhases}). 

Figure~\ref{fig:Figure3}(b) displays a one-dimensional representation found by the ML model for different random circuit depths. At each depth we consider $50$ states from the toric code phase and $50$ from the trivial phase.
Since these are randomly-generated states in the respective phases, it is \emph{a priori} difficult to distinguish between the topological phases corresponding to these two collections.
For example, Ref.~\cite{huang2022provably} proved it is impossible to distinguish between the two collections using any linear order parameter.
However, we see that the ML model can find a 1D representation that distinguishes the phases, indicating that the global $SU(2)$ visible space is expressive enough to capture defining features of these two different phases.
In other words, there exists a nonlinear function on the global $SU(2)$ visible space, which can act as a nonlinear order parameter and therefore classify trivial and topologically-ordered phases.
Ref.~\cite{huang2022provably} demonstrated that tomographically-complete measurements allow machines to learn and classify quantum phases. Our results are particularly novel since global $SU(2)$ is far from being tomographically complete and yet is similarly capable of distinguishing the phases. 

\begin{figure}[t]
\centering
\includegraphics[scale = 0.33]{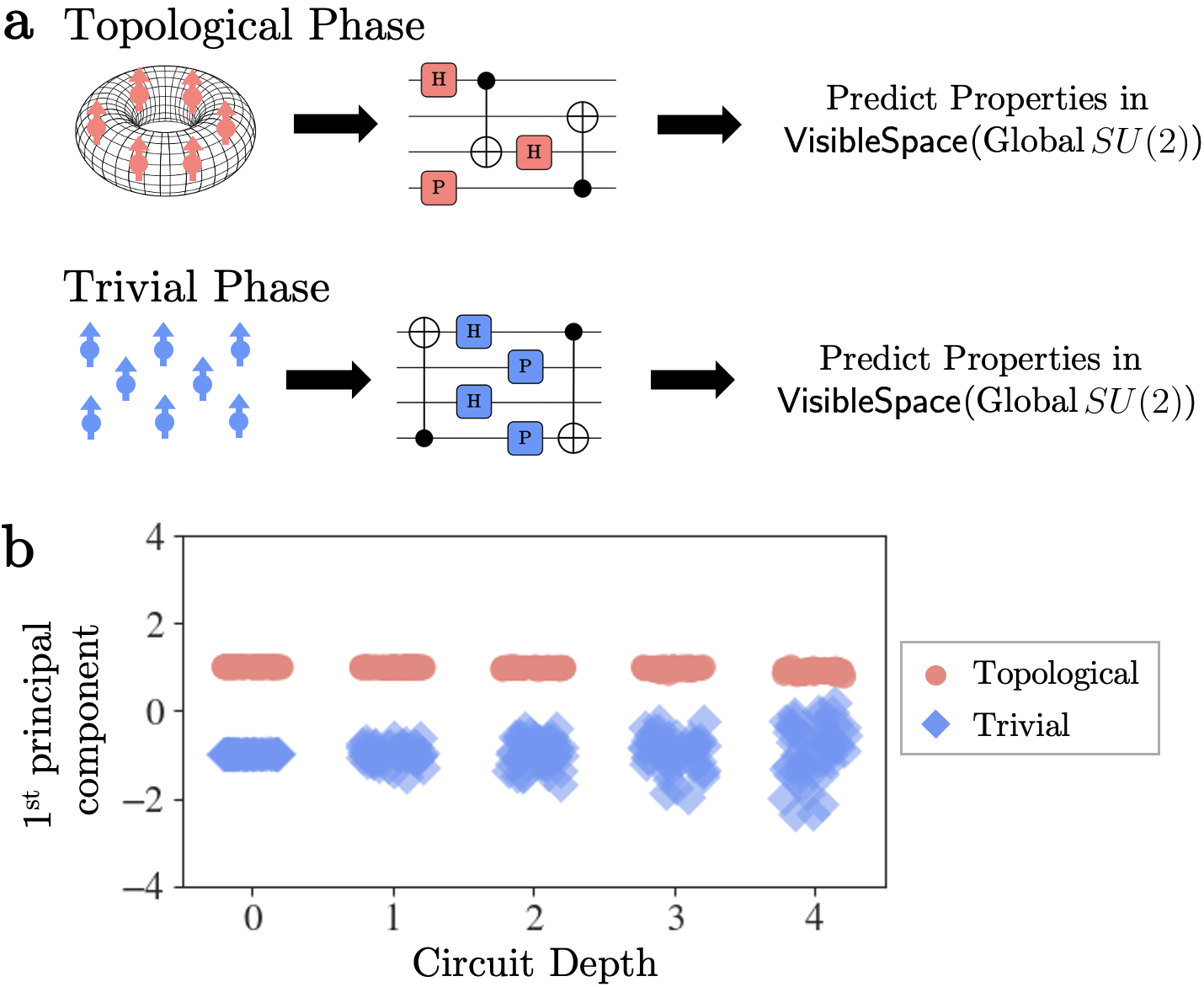}
\caption{\emph{Unsupervised phase prediction with global $SU(2)$.} (a) In a 200-qubit system, we sample a random state in either the topologically-ordered or trivial phase by applying a short-depth random circuit on a prototypical state (the toric code ground state or a product state). For each state, we create a classical representation generated by global $SU(2)$. (b) Using these classical representations, kernel PCA~\cite{huang2022provably,mohri2012foundations} can separate the states in the two phases, meaning that the global $SU(2)$ visible space is expressive enough to capture distinguishing features of these two phases. \label{fig:Figure3}}
\end{figure}

\vspace{1mm}
\textit{Discussion.}
We have developed a framework for efficiently learning many properties of physical systems given specified controls, thus expanding the scope and practicality of classical shadow tomography. The flexibility of our formalism lends itself to adoption in a large range of experimental platforms ~\cite{semeghini2021probing,satzinger2021realizing,bluvstein2021controlling,martin2022controlling, stas2022robust, choi2019probing,preskill2018quantum, altman2021quantum, gross2017quantum,ebadi2021quantum,schafer2020tools, monroe2021programmable, kjaergaard2020superconducting}, and our initial numerical investigations demonstrate the promise of our methods in experimental practice.

Our work can be extended along a number of directions. In particular, in many experimental   
platforms information is encoded in 
 bosonic ~\cite{pelucchi2022potential, slussarenko2019photonic} or fermionic~\cite{kloeffel2013prospects, burkard2021semiconductor} degrees of freedom, as opposed to qubits.  Our methods can be combined with recent generalizations of classical shadow tomography~\cite{chen2021robust, koh2022classical,levy2021classical,huang2021efficient,hu2022logical} to allow for hardware efficient learning in such bosonic and fermionic settings.
Other directions include incorporating adaptivity~\cite{lange2022adaptive} into our learning protocols so that the measurement ensemble is judiciously altered as more data is obtained. In addition, one could incorporate prior knowledge about the state, such as it having an efficient tensor network representation~\cite{akhtar2022scalable}, to further increase the efficiency of our learning protocols. 

Even as the degree of control over large quantum systems improves, certain types of operations and measurements will always be preferable due to higher fidelity or length of implementation time, for example. 
Our approach provides a robust, efficient route to hardware-efficient quantum state learning under these general circumstances, making it invaluable for analyzing experiments involving large-scale quantum systems.



\vspace{1mm}
\textit{Acknowledgements.} 
We would like to thank Richard Kueng, Nathan Leitao, Nishad Maskara, Hannes Pichler, and Leo Zhou for helpful discussions. 
We acknowledge financial support from and the U.S. Department of Energy [DE-SC0021013 and DOE Quantum Systems Accelerator Center (Contract No.: DE-AC02-05CH11231)], the National Science Foundation and CUA. 
KVK acknowledges support from the Fannie and John Hertz Foundation and the National Defense Science and Engineering Graduate (NDSEG) fellowship. 
JC is supported by a Junior Fellowship from the Harvard Society of Fellows, as well as in part by the Department of Energy under grant {DE}-{SC0007870}.
HH is supported by a Google Ph.D. fellowship and a MediaTek Research Young Scholarship.

\newpage
\bibliography{references}

\begin{thebibliography}{10}

\bibitem{elben2022randomized}
A.~Elben, S.~T. Flammia, H.-Y. Huang, R.~Kueng, J.~Preskill, B.~Vermersch, and
  P.~Zoller, ``The randomized measurement toolbox,'' {\em arXiv:2203.11374},
  2022.

\bibitem{huang2020predicting}
H.-Y. Huang, R.~Kueng, and J.~Preskill, ``Predicting many properties of a
  quantum system from very few measurements,'' {\em Nature Physics}, vol.~16,
  no.~10, pp.~1050--1057, 2020.

\bibitem{jiang2020optimal}
Z.~Jiang, A.~Kalev, W.~Mruczkiewicz, and H.~Neven, ``Optimal fermion-to-qubit
  mapping via ternary trees with applications to reduced quantum states
  learning,'' {\em Quantum}, vol.~4, p.~276, 2020.

\bibitem{stricker2022experimental}
R.~Stricker, M.~Meth, L.~Postler, C.~Edmunds, C.~Ferrie, R.~Blatt,
  P.~Schindler, T.~Monz, R.~Kueng, and M.~Ringbauer, ``Experimental
  single-setting quantum state tomography,'' {\em arXiv preprint
  arXiv:2206.00019}, 2022.

\bibitem{mcginley2022shadow}
M.~McGinley and M.~Fava, ``{Shadow tomography from emergent state designs in
  analog quantum simulators},'' {\em arXiv:2212.02543}, 2022.

\bibitem{tran2022measuring}
M.~C. Tran, D.~K. Mark, W.~W. Ho, and S.~Choi, ``{Measuring Arbitrary Physical
  Properties in Analog Quantum Simulation},'' {\em arXiv:2212.02517}, 2022.

\bibitem{cotler2020quantum}
J.~Cotler and F.~Wilczek, ``{Quantum overlapping tomography},'' {\em Physical
  Review Letters}, vol.~124, no.~10, p.~100401, 2020.

\bibitem{flammia2012quantum}
S.~T. Flammia, D.~Gross, Y.-K. Liu, and J.~Eisert, ``Quantum tomography via
  compressed sensing: error bounds, sample complexity and efficient
  estimators,'' {\em New Journal of Physics}, vol.~14, no.~9, p.~095022, 2012.

\bibitem{haah2016sample}
J.~Haah, A.~W. Harrow, Z.~Ji, X.~Wu, and N.~Yu, ``Sample-optimal tomography of
  quantum states,'' in {\em Proceedings of the forty-eighth annual ACM
  symposium on Theory of Computing}, pp.~913--925, 2016.

\bibitem{o2016efficient}
R.~O'Donnell and J.~Wright, ``Efficient quantum tomography,'' in {\em
  Proceedings of the forty-eighth annual ACM symposium on Theory of Computing},
  pp.~899--912, 2016.

\bibitem{aaronson2018shadow}
S.~Aaronson, ``Shadow tomography of quantum states,'' in {\em Proceedings of
  the 50th Annual ACM SIGACT Symposium on Theory of Computing}, pp.~325--338,
  2018.

\bibitem{aaronson2019gentle}
S.~Aaronson and G.~N. Rothblum, ``Gentle measurement of quantum states and
  differential privacy,'' in {\em Proceedings of the 51st Annual ACM SIGACT
  Symposium on Theory of Computing}, pp.~322--333, 2019.

\bibitem{huang2021efficient}
H.-Y. Huang, R.~Kueng, and J.~Preskill, ``{Efficient estimation of Pauli
  observables by derandomization},'' {\em Physical review letters}, vol.~127,
  no.~3, p.~030503, 2021.

\bibitem{huang2022learning}
H.-Y. Huang, ``Learning quantum states from their classical shadows,'' {\em
  Nature Reviews Physics}, vol.~4, no.~2, pp.~81--81, 2022.

\bibitem{chen2021robust}
S.~Chen, W.~Yu, P.~Zeng, and S.~T. Flammia, ``Robust shadow estimation,'' {\em
  PRX Quantum}, vol.~2, no.~3, p.~030348, 2021.

\bibitem{koh2022classical}
D.~E. Koh and S.~Grewal, ``Classical shadows with noise,'' {\em Quantum},
  vol.~6, p.~776, 2022.

\bibitem{garcia2021quantum}
R.~J. Garcia, Y.~Zhou, and A.~Jaffe, ``Quantum scrambling with classical
  shadows,'' {\em Physical Review Research}, vol.~3, no.~3, p.~033155, 2021.

\bibitem{levy2021classical}
R.~Levy, D.~Luo, and B.~K. Clark, ``Classical shadows for quantum process
  tomography on near-term quantum computers,'' {\em arXiv:2110.02965}, 2021.

\bibitem{hadfield2022measurements}
C.~Hadfield, S.~Bravyi, R.~Raymond, and A.~Mezzacapo, ``{Measurements of
  quantum Hamiltonians with locally-biased classical shadows},'' {\em
  Communications in Mathematical Physics}, vol.~391, no.~3, pp.~951--967, 2022.

\bibitem{hu2021classical}
H.-Y. Hu, S.~Choi, and Y.-Z. You, ``Classical shadow tomography with locally
  scrambled quantum dynamics,'' {\em arXiv:2107.04817}, 2021.

\bibitem{hu2022logical}
H.-Y. Hu, R.~LaRose, Y.-Z. You, E.~Rieffel, and Z.~Wang, ``{Logical shadow
  tomography: Efficient estimation of error-mitigated observables},'' {\em
  arXiv:2203.07263}, 2022.

\bibitem{hu2022hamiltonian}
H.-Y. Hu and Y.-Z. You, ``Hamiltonian-driven shadow tomography of quantum
  states,'' {\em Physical Review Research}, vol.~4, no.~1, p.~013054, 2022.

\bibitem{akhtar2022scalable}
A.~A. Akhtar, H.-Y. Hu, and Y.-Z. You, ``{Scalable and Flexible Classical
  Shadow Tomography with Tensor Networks},'' {\em arXiv:2209.02093}, 2022.

\bibitem{bertoni2022shallow}
C.~Bertoni, J.~Haferkamp, M.~Hinsche, M.~Ioannou, J.~Eisert, and H.~Pashayan,
  ``Shallow shadows: Expectation estimation using low-depth random clifford
  circuits,'' {\em arXiv:2209.12924}, 2022.

\bibitem{gandhari2022continuous}
S.~Gandhari, V.~V. Albert, T.~Gerrits, J.~M. Taylor, and M.~J. Gullans,
  ``{Continuous-Variable Shadow Tomography},'' {\em arXiv preprint
  arXiv:2211.05149}, 2022.

\bibitem{becker2022classical}
S.~Becker, N.~Datta, L.~Lami, and C.~Rouz{\'e}, ``Classical shadow tomography
  for continuous variables quantum systems,'' {\em arXiv preprint
  arXiv:2211.07578}, 2022.

\bibitem{rath2021importance}
A.~Rath, R.~van Bijnen, A.~Elben, P.~Zoller, and B.~Vermersch, ``Importance
  sampling of randomized measurements for probing entanglement,'' {\em Physical
  review letters}, vol.~127, no.~20, p.~200503, 2021.

\bibitem{semeghini2021probing}
G.~Semeghini, H.~Levine, A.~Keesling, S.~Ebadi, T.~T. Wang, D.~Bluvstein,
  R.~Verresen, H.~Pichler, M.~Kalinowski, R.~Samajdar, {\em et~al.}, ``Probing
  topological spin liquids on a programmable quantum simulator,'' {\em
  Science}, vol.~374, no.~6572, pp.~1242--1247, 2021.

\bibitem{satzinger2021realizing}
K.~Satzinger, Y.-J. Liu, A.~Smith, C.~Knapp, M.~Newman, C.~Jones, Z.~Chen,
  C.~Quintana, X.~Mi, A.~Dunsworth, {\em et~al.}, ``Realizing topologically
  ordered states on a quantum processor,'' {\em Science}, vol.~374, no.~6572,
  pp.~1237--1241, 2021.

\bibitem{bluvstein2021controlling}
D.~Bluvstein, A.~Omran, H.~Levine, A.~Keesling, G.~Semeghini, S.~Ebadi,
  T.~Wang, A.~Michailidis, N.~Maskara, W.~W. Ho, {\em et~al.}, ``{Controlling
  quantum many-body dynamics in driven Rydberg atom arrays},'' {\em Science},
  vol.~371, no.~6536, pp.~1355--1359, 2021.

\bibitem{martin2022controlling}
L.~S. Martin, H.~Zhou, N.~T. Leitao, N.~Maskara, O.~Makarova, H.~Gao, Q.-Z.
  Zhu, M.~Park, M.~Tyler, H.~Park, {\em et~al.}, ``Controlling local
  thermalization dynamics in a floquet-engineered dipolar ensemble,'' {\em
  arXiv preprint arXiv:2209.09297}, 2022.

\bibitem{stas2022robust}
P.-J. Stas, Y.~Q. Huan, B.~Machielse, E.~N. Knall, A.~Suleymanzade,
  B.~Pingault, M.~Sutula, S.~W. Ding, C.~M. Knaut, D.~R. Assumpcao, {\em
  et~al.}, ``Robust multi-qubit quantum network node with integrated error
  detection,'' {\em Science}, vol.~378, no.~6619, pp.~557--560, 2022.

\bibitem{choi2019probing}
J.~Choi, H.~Zhou, S.~Choi, R.~Landig, W.~W. Ho, J.~Isoya, F.~Jelezko, S.~Onoda,
  H.~Sumiya, D.~A. Abanin, {\em et~al.}, ``Probing quantum thermalization of a
  disordered dipolar spin ensemble with discrete time-crystalline order,'' {\em
  Physical Review Letters}, vol.~122, no.~4, p.~043603, 2019.

\bibitem{mohri2012foundations}
M.~Mohri, A.~Rostamizadeh, and A.~Talwalkar, ``{Foundations of Machine
  Learning},'' 2012.

\bibitem{huang2022provably}
H.-Y. Huang, R.~Kueng, G.~Torlai, V.~V. Albert, and J.~Preskill, ``Provably
  efficient machine learning for quantum many-body problems,'' {\em Science},
  vol.~377, no.~6613, p.~eabk3333, 2022.

\bibitem{kokail2019self}
C.~Kokail, C.~Maier, R.~van Bijnen, T.~Brydges, M.~K. Joshi, P.~Jurcevic, C.~A.
  Muschik, P.~Silvi, R.~Blatt, C.~F. Roos, {\em et~al.}, ``Self-verifying
  variational quantum simulation of lattice models,'' {\em Nature}, vol.~569,
  no.~7756, pp.~355--360, 2019.

\bibitem{tagliacozzo2013simulation}
L.~Tagliacozzo, A.~Celi, P.~Orland, M.~Mitchell, and M.~Lewenstein,
  ``Simulation of non-abelian gauge theories with optical lattices,'' {\em
  Nature Communications}, vol.~4, no.~1, pp.~1--8, 2013.

\bibitem{brower2019lattice}
R.~Brower, D.~Berenstein, and H.~Kawai, ``{Lattice Gauge Theory for a Quantum
  Computer},'' in {\em 37th International Symposium on Lattice Field Theory.
  16-22 June 2019. Wuhan}, p.~112, 2019.

\bibitem{chen2010local}
X.~Chen, Z.-C. Gu, and X.-G. Wen, ``Local unitary transformation, long-range
  quantum entanglement, wave function renormalization, and topological order,''
  {\em Physical review b}, vol.~82, no.~15, p.~155138, 2010.

\bibitem{wen2017colloquium}
X.-G. Wen, ``Colloquium: Zoo of quantum-topological phases of matter,'' {\em
  Reviews of Modern Physics}, vol.~89, no.~4, p.~041004, 2017.

\bibitem{zeng2019quantum}
B.~Zeng, X.~Chen, D.-L. Zhou, X.-G. Wen, {\em et~al.}, {\em Quantum information
  meets quantum matter}.
\newblock Springer, 2019.

\bibitem{kitaev2003fault}
A.~Y. Kitaev, ``Fault-tolerant quantum computation by anyons,'' {\em Annals of
  Physics}, vol.~303, no.~1, pp.~2--30, 2003.

\bibitem{preskill2018quantum}
J.~Preskill, ``{Quantum computing in the NISQ era and beyond},'' {\em Quantum},
  vol.~2, p.~79, 2018.

\bibitem{altman2021quantum}
E.~Altman, K.~R. Brown, G.~Carleo, L.~D. Carr, E.~Demler, C.~Chin, B.~DeMarco,
  S.~E. Economou, M.~A. Eriksson, K.-M.~C. Fu, {\em et~al.}, ``{Quantum
  simulators: Architectures and opportunities},'' {\em PRX Quantum}, vol.~2,
  no.~1, p.~017003, 2021.

\bibitem{gross2017quantum}
C.~Gross and I.~Bloch, ``Quantum simulations with ultracold atoms in optical
  lattices,'' {\em Science}, vol.~357, no.~6355, pp.~995--1001, 2017.

\bibitem{ebadi2021quantum}
S.~Ebadi, T.~T. Wang, H.~Levine, A.~Keesling, G.~Semeghini, A.~Omran,
  D.~Bluvstein, R.~Samajdar, H.~Pichler, W.~W. Ho, {\em et~al.}, ``Quantum
  phases of matter on a 256-atom programmable quantum simulator,'' {\em
  Nature}, vol.~595, no.~7866, pp.~227--232, 2021.

\bibitem{schafer2020tools}
F.~Sch{\"a}fer, T.~Fukuhara, S.~Sugawa, Y.~Takasu, and Y.~Takahashi, ``Tools
  for quantum simulation with ultracold atoms in optical lattices,'' {\em
  Nature Reviews Physics}, vol.~2, no.~8, pp.~411--425, 2020.

\bibitem{monroe2021programmable}
C.~Monroe, W.~C. Campbell, L.-M. Duan, Z.-X. Gong, A.~V. Gorshkov, P.~Hess,
  R.~Islam, K.~Kim, N.~M. Linke, G.~Pagano, {\em et~al.}, ``Programmable
  quantum simulations of spin systems with trapped ions,'' {\em Reviews of
  Modern Physics}, vol.~93, no.~2, p.~025001, 2021.

\bibitem{kjaergaard2020superconducting}
M.~Kjaergaard, M.~E. Schwartz, J.~Braum{\"u}ller, P.~Krantz, J.~I.-J. Wang,
  S.~Gustavsson, and W.~D. Oliver, ``{Superconducting qubits: Current state of
  play},'' {\em Annual Review of Condensed Matter Physics}, vol.~11,
  pp.~369--395, 2020.

\bibitem{pelucchi2022potential}
E.~Pelucchi, G.~Fagas, I.~Aharonovich, D.~Englund, E.~Figueroa, Q.~Gong,
  H.~Hannes, J.~Liu, C.-Y. Lu, N.~Matsuda, {\em et~al.}, ``The potential and
  global outlook of integrated photonics for quantum technologies,'' {\em
  Nature Reviews Physics}, vol.~4, no.~3, pp.~194--208, 2022.

\bibitem{slussarenko2019photonic}
S.~Slussarenko and G.~J. Pryde, ``Photonic quantum information processing: A
  concise review,'' {\em Applied Physics Reviews}, vol.~6, no.~4, p.~041303,
  2019.

\bibitem{kloeffel2013prospects}
C.~Kloeffel and D.~Loss, ``Prospects for spin-based quantum computing in
  quantum dots,'' {\em Annu. Rev. Condens. Matter Phys.}, vol.~4, no.~1,
  pp.~51--81, 2013.

\bibitem{burkard2021semiconductor}
G.~Burkard, T.~D. Ladd, J.~M. Nichol, A.~Pan, and J.~R. Petta, ``Semiconductor
  spin qubits,'' {\em arXiv:2112.08863}, 2021.

\bibitem{lange2022adaptive}
H.~Lange, M.~Kebri{\v{c}}, M.~Buser, U.~Schollw{\"o}ck, F.~Grusdt, and
  A.~Bohrdt, ``Adaptive quantum state tomography with active learning,'' {\em
  arXiv:2203.15719}, 2022.

\bibitem{d20042}
G.~M. D’Ariano, M.~G. Paris, and M.~F. Sacchi, ``{Quantum Tomographic
  Methods},'' {\em Quantum State Estimation}, pp.~7--58, 2004.

\bibitem{foucart2013mathematical}
S.~Foucart and H.~Rauhut, ``{A mathematical introduction to compressive
  sensing},'' {\em Applied and Numerical Harmonic Analysis. Basel:
  Birkh{\"a}user}, 2013.

\bibitem{jerrum1986random}
M.~R. Jerrum, L.~G. Valiant, and V.~V. Vazirani, ``Random generation of
  combinatorial structures from a uniform distribution,'' {\em Theoretical
  computer science}, vol.~43, pp.~169--188, 1986.

\bibitem{nemirovski1983problem}
A.~Nemirovski and D.~Yudin, ``Problem complexity and method efficiency in
  optimization,'' {\em Wiley-Interscience Series in Discrete Mathematics},
  1983.

\bibitem{harris2020array}
C.~R. Harris, K.~J. Millman, S.~J. van~der Walt, R.~Gommers, P.~Virtanen,
  D.~Cournapeau, E.~Wieser, J.~Taylor, S.~Berg, N.~J. Smith, R.~Kern, M.~Picus,
  S.~Hoyer, M.~H. van Kerkwijk, M.~Brett, A.~Haldane, J.~F. del R{\'{i}}o,
  M.~Wiebe, P.~Peterson, P.~G{\'{e}}rard-Marchant, K.~Sheppard, T.~Reddy,
  W.~Weckesser, H.~Abbasi, C.~Gohlke, and T.~E. Oliphant, ``Array programming
  with {NumPy},'' {\em Nature}, vol.~585, pp.~357--362, Sept. 2020.

\end{thebibliography}
\bibliographystyle{ieeetr} 


\onecolumngrid  
\appendix

\begin{appendix}
\clearpage 

\vspace{2.0em}
\begin{center}
\textbf{\Large Appendix}
\end{center}

\renewcommand{\appendixname}{APPENDIX}
\renewcommand{\thesubsection}{\MakeUppercase{\alph{section}}.\arabic{subsection}}
\renewcommand{\thesubsubsection}{\MakeUppercase{\alph{section}}.\arabic{subsection}.\alph{subsubsection}}
\makeatletter
\renewcommand{\p@subsection}{}
\renewcommand{\p@subsubsection}{}
\makeatother

\renewcommand{\figurename}{Supplementary Figure}
\setcounter{figure}{0}
\setcounter{secnumdepth}{3}

\bigskip

\noindent \textbf{\ref{app:VisibleSpace}.~~\hyperref[app:VisibleSpace]{Tomography protocol for predicting observables in the visible space}} \dotfill\textbf{\pageref{app:VisibleSpace}}
\medskip

\noindent \qquad \begin{minipage}{\dimexpr\textwidth-1.3cm}
 \hyperref[sec:VisibleSpace]{The visible space and invisible space}
 $\bullet$ 
 \hyperref[sec:learning-protocol]{Learning protocol} 
 $\bullet$ 
 \hyperref[sec:KVbAndClassicalShadows]{Recovering classical shadow tomography}
\end{minipage}
\medskip

\noindent \textbf{\ref{app:globalSU2}.~~\hyperref[app:globalSU2]{A form of limited control: global $SU(2)$}} \dotfill\textbf{\pageref{app:globalSU2}}
\medskip

\noindent \qquad \begin{minipage}{\dimexpr\textwidth-1.3cm}
 \hyperref[sec:su2Visible]{Global $SU(2)$ visible space}
 $\bullet$ 
 \hyperref[sec:su2MChannel]{Global $SU(2)$ measurement channel}
 $\bullet$ 
 \hyperref[sec:ProofOfVisInvisBasis]{ Proof of visible and invisible space basis}
\end{minipage}
\medskip

\noindent \textbf{\ref{app:globalCL2}.~~\hyperref[app:globalCL2]{Another form of limited control: global $\text{Cl}(2)$}} \dotfill\textbf{\pageref{app:globalCL2}}
\medskip

\noindent \qquad \begin{minipage}{\dimexpr\textwidth-1.3cm}
 \hyperref[subsec:cl2mmtchannel]{Global  $\text{Cl}(2)$ measurement channel}
 $\bullet$ 
 \hyperref[subsec:cl2mmtsamplecomplexity]{Global $\text{Cl}(2)$ sample complexity}
\end{minipage}
\medskip

\noindent \textbf{\ref{app:OptimizingBiasVar}.~~\hyperref[app:OptimizingBiasVar]{Minimizing on the bias-variance tradeoff}} \dotfill\textbf{\pageref{app:OptimizingBiasVar}}
\medskip

\noindent \qquad \begin{minipage}{\dimexpr\textwidth-1.3cm}
 \hyperref[subsec:appBVSmallSystems]{Optimization for small system sizes} $\bullet$ 
 \hyperref[subsec:appBVLargeSystems]{Optimization for local operators and low-depth implementable unitaries}
\end{minipage}
\medskip

\noindent \textbf{\ref{app:Adaptivity}.~~\hyperref[app:Adaptivity]{Adapting the probability density function}} \dotfill\textbf{\pageref{app:Adaptivity}}
\smallskip

\noindent \qquad \begin{minipage}{\dimexpr\textwidth-1.3cm}
 \hyperref[sec:UpdateKforPDF]{How to update $\mathcal{K}^O$ for a new probability density function} 
 $\bullet$ 
 \hyperref[sec:PDFmaxvar]{Optimal probability density function}
 $\bullet$
 \hyperref[sec:PDFmanyop]{Optimal probability density function for many operators}
 $\bullet$
 \hyperref[sec:PDFmaxmixed]{Probability density function for the infinite temperature state}
\end{minipage}
\medskip

\noindent \textbf{\ref{app:NumericsLGT}.~~\hyperref[app:NumericsLGT]{Estimating the energy density of a $U(1)$ lattice gauge theory}} \dotfill\textbf{\pageref{app:NumericsLGT}}
\medskip

\noindent \qquad \begin{minipage}{\dimexpr\textwidth-1.3cm}
 \hyperref[appx:LGT_fig3b]{Showcasing the bias-variance tradeoff}
 $\bullet$ 
 \hyperref[appx:LGT_fig3c]{Reducing the required number of measurements}
\end{minipage}
\medskip

\noindent \textbf{\ref{app:NumericsPhases}.~~\hyperref[app:NumericsPhases]{Classifying topological phases with global $SU(2)$ control}} \dotfill\textbf{\pageref{app:NumericsPhases}}
\medskip

\noindent \qquad \begin{minipage}{\dimexpr\textwidth-1.3cm}
 \hyperref[appx:phases_significance]{Results and their significance} $\bullet$ 
 \hyperref[appx:phases_globalsu2data]{Extracting information about states using global $SU(2)$} $\bullet$ 
 \hyperref[appx:phases_kernelconstructionPCA]{Kernel principal component analysis}
\end{minipage}
\medskip

\vspace{6mm}

Let us briefly summarize the contents of the appendices.  Appendix \ref{app:VisibleSpace} provides a pedagogical introduction to our learning protocol for predicting visible space observables. 
Next we discuss the details behind the global $SU(2)$ ensemble in Appendix \ref{app:globalSU2} and the global Cl$(2)$ ensemble in Appendix \ref{app:globalCL2}. Throughout our work, the global $SU(2)$ and global Cl$(2)$ ensembles are our guiding examples of limited control, and these appendices derive the relevant machinery. 

Appendix \ref{app:OptimizingBiasVar} and Appendix \ref{app:Adaptivity} provide two algorithms, which can be used in isolation or in combination to reduce the number of measurements required for our learning framework. In Appendix \ref{app:OptimizingBiasVar}, we explain how to alter the operators whose expectation values we wish to estimate so as to reduce the overall estimation error.  In Appendix \ref{app:Adaptivity}, in a similar spirit, we explain how to alter the unitary ensemble in our randomized measurement protocol. 

The final two appendices provide details on the applications discussed in the main text. In Appendix \ref{app:NumericsLGT} we estimate the energy density of a $U(1)$ lattice gauge theory, and in Appendix \ref{app:NumericsPhases} we demonstrate that we can classify certain topological phases with limited controls. In each of these appendices, we provide relevant background background and then discuss the implementation of numerics. All of our code is available at \url{https://github.com/katherinevankirk/hardware-efficient-learning}.

\vspace{3mm}
\section{\label{app:VisibleSpace} Tomography protocol for predicting observables in the visible space}

\subsection{\label{sec:VisibleSpace}The visible space and invisible space}

We present a protocol for predicting many properties of some fixed, unknown state $\rho$. Our protocol is both more general and often more efficient than previous techniques such as classical shadow tomography \cite{huang2020predicting}. Consider the following problem statement: having prepared the $n$-qubit state $\rho$, for which operators can you estimate $\tr(\rho O)$ on your experimental platform? In order to address this question, we parameterize the degree of control of an experimental platform via the set of \textit{implementable unitaries}. 
Following the preparation of some state $\rho$, you can evolve under any unitary $V$ in the set of implementable unitaries $\V = \{V\}$ and subsequently perform a projective measurement $\{\Pi_b\}_b$, where the $\Pi_b$'s are projectors satisfying $\sum_b \Pi_b = \mathds{1}$. For simplicity, let us specialize to $\Pi_b = \ketbra{b}{b}$ with $b \in {0,1}^n$.
Suppose we consider systems with $n$ qubits; we have the following definition:
\vspace{3mm}

\begin{definition}[\textbf{Implementable Unitaries}]
    The set of implementable unitaries, for some experimental platform, is the collection $\V = \{V\}$ of all $2^n \times 2^n$ unitary matrices that can evolve an $n$-qubit system prior to readout by projection onto a computational basis state.
\end{definition}
\noindent As such, the set $\V$ \textit{specifies} all possible measurements that can be made by the experimental platform. Here, a ``randomized measurement'' on the state $\rho$ consists of rotating $\rho
\rightarrow V \rho V^{\dagger}$ with some $V$ sampled from $\V$ and then projecting onto one of the computational basis states $\{\ket{b}\}_{b \in \{0,1\}^n}$.  Therefore, the implementable unitaries $\V$ determine for which operators $O$ we can estimate $\tr(\rho O)$.

\vspace{1mm}
We emphasize that there are certain $O$ whose expectation values cannot be estimated with the measurements defined by $\V$.  For example, if it so happens that $V O V^\dagger$ is traceless for all $V \in \V$, we cannot estimate $\tr(\rho O)$ because $\bra{b}VOV^\dagger\ket{b}$
will zero for all $V$ and all computational basis states $|b\rangle$.
We can define the space of operators that are invisible to us under the implementable unitaries $\V$, and we call this space the $\textnormal{\textsf{InvisibleSpace}}$($\V$).
\begin{definition}[\textbf{Invisible Space}]
    Given the implementable unitaries $\V$, an operator $O$ lives in the invisible space $\textnormal{\textsf{InvisibleSpace}}(\V)$ if for all $V \in \V$ and all computational basis states $\ket{b}$, we have $\bra{b} V O V^{\dagger} \ket{b} = 0$.
\end{definition}
\noindent The invisible space is, in fact, a subspace of $\mathcal{B}(\mathcal{H})$.  A related notion arises when we consider the orthogonal complement of the invisible space, providing us with the space of all operators $O$ for which we \textit{can} estimate the expectation value $\tr(\rho O)$. We call this orthogonal subspace the $\textnormal{\textsf{VisibleSpace}}(\V)$, and we have the decomposition
\begin{equation}
\mathcal{B}(\mathcal{H}) \simeq \textnormal{\textsf{VisibleSpace}}(\V) \oplus \textnormal{\textsf{InvisibleSpace}}(\V)\,.
\end{equation}
Sets of implementable unitaries for which the invisible space is empty are known as ``tomographically complete''. The corresponding visible space is $\mathcal{B}(\mathcal{H})$.  As a general rule, a more limited the set of implementable unitaries $\V$ corresponds to a smaller visible space $\textnormal{\textsf{VisibleSpace}}(\V)$.
\vspace{3mm}

\begin{adjustwidth}{1cm}{}
\noindent\textit{Example (Random Paulis). }Suppose that an experimentalist has enough local control to measure any Pauli string.  Thus the visible space is spanned by the set of all $n$-qubit Pauli strings, and accordingly is equal to all of $\mathcal{B}(\mathcal{H})$.  This means that the invisible space is trivial, and so the Pauli strings enable measurements which are tomographically complete.
\end{adjustwidth}
\vspace{4mm}

So far, we have defined the visible space as the orthogonal complement of the invisible space.  Using the lemma and observation below, we will derive another, equivalent definition of $\textnormal{\textsf{VisibleSpace}}(\V)$. This forthcoming definition was quoted in the main text (Definition~\ref{def:VisibleSpace}).
\vspace{3mm}

\begin{lemma} \label{lem:visibleUbbU1}
    Consider some operator $O \in \mathcal{B}(\mathcal{H})$, which is normalized with respect to the Hilbert-Schmidt inner product. If there exists a $V \in \V$ and a computational basis state $b$ such that $| \bra{b}VOV^{\dagger}\ket{b} | = 1$, then $O$ lives entirely in the visible space $\textnormal{\textsf{VisibleSpace}}(\V)$.
\end{lemma}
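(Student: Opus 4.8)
The plan is to recognize that $\bra{b}VOV^{\dagger}\ket{b}$ is nothing but a Hilbert--Schmidt inner product between $O$ and the rank-one projector $P := V^{\dagger}\ketbra{b}{b}V$, and then to invoke the equality case of the Cauchy--Schwarz inequality. First I would record the elementary facts about $P$: it is a rank-one orthogonal projector, so $P^{\dagger}=P$ and $\tr(P^{\dagger}P)=\tr(P^{2})=\tr(P)=1$, i.e.\ $\|P\|_{\mathrm{HS}}=1$. Cyclicity of the trace then gives $\bra{b}VOV^{\dagger}\ket{b}=\tr\!\big(\ketbra{b}{b}\,VOV^{\dagger}\big)=\tr(P^{\dagger}O)=\langle P,O\rangle_{\mathrm{HS}}$.

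Next, since $\|O\|_{\mathrm{HS}}=1$ by hypothesis and $\|P\|_{\mathrm{HS}}=1$, the Cauchy--Schwarz inequality yields $|\langle P,O\rangle_{\mathrm{HS}}|\le 1$, and the assumption $|\bra{b}VOV^{\dagger}\ket{b}|=1$ forces this to be an equality. The equality condition of Cauchy--Schwarz then says that $O$ is proportional to $P$, i.e.\ $O=\lambda P$ for some scalar $\lambda$, and comparing Hilbert--Schmidt norms gives $|\lambda|=1$ (note $O$ need not be Hermitian, so $\lambda$ is a genuine phase, not merely $\pm 1$). Finally I would check that $P$ itself lies in $\textnormal{\textsf{VisibleSpace}}(\V)$: for any $A\in\textnormal{\textsf{InvisibleSpace}}(\V)$, cyclicity of the trace and the defining property of the invisible space give $\langle P,A\rangle_{\mathrm{HS}}=\tr(PA)=\bra{b}VAV^{\dagger}\ket{b}=0$, so $P$ is orthogonal to the invisible space and hence belongs to its orthogonal complement, the visible space. (Equivalently, one could simply note that $P$ is one of the spanning operators of Definition~\ref{def:VisibleSpace}.) Since the visible space is a linear subspace and $O=\lambda P$, we conclude $O\in\textnormal{\textsf{VisibleSpace}}(\V)$.

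There is no serious obstacle here — the argument is a one-line application of Cauchy--Schwarz once the inner-product reformulation is spotted. The only points requiring care are bookkeeping: fixing the convention for the complex Hilbert--Schmidt inner product $\langle A,B\rangle_{\mathrm{HS}}=\tr(A^{\dagger}B)$ and its conjugate symmetry, allowing $O$ to be non-Hermitian when applying the equality case, and — since this lemma is precisely the tool used to justify Definition~\ref{def:VisibleSpace} — establishing $P\in\textnormal{\textsf{VisibleSpace}}(\V)$ via the orthogonal-complement characterization rather than quoting that definition, to avoid circularity.
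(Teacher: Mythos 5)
Your proof is correct, but it follows a genuinely different route from the paper's. The paper argues by contrapositive: it decomposes $O = c_I O_I + c_V O_V$ into normalized invisible and visible parts, notes $|c_V|<1$ whenever $c_I \neq 0$, and bounds $|\bra{b}VOV^\dagger\ket{b}| = |c_V|\,|\bra{b}VO_VV^\dagger\ket{b}| < 1$ using only the \emph{inequality} $|\bra{b}VO_VV^\dagger\ket{b}|\le 1$. You instead identify $\bra{b}VOV^\dagger\ket{b}$ as the Hilbert--Schmidt inner product $\langle P,O\rangle_{\mathrm{HS}}$ with the unit-norm projector $P=V^\dagger\ketbra{b}{b}V$ and invoke the \emph{equality case} of Cauchy--Schwarz, which forces $O=\lambda P$ with $|\lambda|=1$. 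This buys you a strictly stronger conclusion --- $O$ is not merely visible but is exactly a phase times one of the spanning operators $V^\dagger\ketbra{b}{b}V$ --- whereas the paper's decomposition argument only certifies membership in the visible space. Your stronger conclusion still suffices for the lemma's only application (Observation~\ref{obs:visibledef}, where $O$ is literally such a projector), and your care in establishing $P\in\textnormal{\textsf{VisibleSpace}}(\V)$ via orthogonality to the invisible space, rather than by citing Definition~\ref{def:VisibleSpace}, correctly avoids the circularity that would otherwise arise since the lemma is used to justify that definition.
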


\begin{proof}
    We will prove the contrapositive: \textit{If a normalized operator $O$ is not entirely in $\textnormal{\textsf{VisibleSpace}}(\V)$, then for all $V \in \V$ and computational basis states $b$, $| \bra{b}VOV^{\dagger}\ket{b} | \neq 1$}. Consider a normalized operator $O$ that has some nontrivial component in $\textnormal{\textsf{VisibleSpace}}(\V)$. It can be written as a linear combination of its normalized projection onto the invisible space ($O_I$) and its normalized projection onto the visible space ($O_V$):
    \begin{equation} \nonumber
        O = c_I O_I + c_V O_V \textnormal{ where $c_I \neq 0$}\,.
    \end{equation}
    Since $O$, $O_I$, and $O_V$ are all normalized with respect to the Hilbert-Schmidt inner product, we must have $|c_I|^2 + |c_V|^2 = 1$, and since $c_I \neq 0$, we find that $|c_V| < 1$. Moreover, for any $V \in \V$ and any computational basis state $|b\rangle$, 
    \begin{eqnarray} \nonumber
        |\bra{b}VOV^{\dagger}\ket{b} |
        &=& |\bra{b}V ( c_I O_I + c_V O_V ) V^{\dagger}\ket{b}| \\\nonumber
        &=& |c_V| \,|\bra{b}V O_V V^{\dagger}\ket{b}|\\\nonumber
        &<&1
    \end{eqnarray}
    In the expression above, we move from line 1 to line 2 because, by our definition of $O_I \in \textsf{InvisibleSpace}(\V)$, we must have $\bra{b}VO_IV^{\dagger}\ket{b} = 0$ for all $V$ and all $b$. Then, we move from line 2 to line 3 by noticing that $|\bra{b}V O_V V^{\dagger}\ket{b}|$ is at most 1 and use the fact that $|c_V| < 1$. 
\end{proof}
\vspace{3mm}

This lemma provides one way to show that some operator $O$ lives entirely in the visible space. We will now use the lemma to provide an alternate definition for $\textnormal{\textsf{VisibleSpace}}(\V)$.
\vspace{3mm}

\begin{observation}{\label{obs:visibledef}}
     $\textnormal{\textsf{VisibleSpace}}(\V)$ is the span of $V^{\dagger}\ketbra{b}{b}V$ over every $V \in \V$ and every $b \in \{0,1\}^n$.
\end{observation}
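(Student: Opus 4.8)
The plan is to establish the two-sided containment between $\textnormal{\textsf{VisibleSpace}}(\V)$, defined as the orthogonal complement of $\textnormal{\textsf{InvisibleSpace}}(\V)$, and the span $\mathcal{S} := \textnormal{span}\{V^\dagger \ketbra{b}{b} V : V \in \V,\, b \in \{0,1\}^n\}$. First I would show $\mathcal{S} \subseteq \textnormal{\textsf{VisibleSpace}}(\V)$ by observing that each generator $V^\dagger \ketbra{b}{b} V$ is orthogonal (in Hilbert–Schmidt inner product) to every $O_I \in \textnormal{\textsf{InvisibleSpace}}(\V)$: indeed $\langle O_I, V^\dagger\ketbra{b}{b}V\rangle_{HS} = \tr(O_I^\dagger V^\dagger \ketbra{b}{b} V) = \bra{b} V O_I^\dagger V^\dagger \ket{b}$, which vanishes by the definition of the invisible space (applied to $O_I^\dagger$, which is also in the invisible space since the space is closed under adjoint — worth a one-line remark). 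Hence each generator lies in $(\textnormal{\textsf{InvisibleSpace}}(\V))^\perp = \textnormal{\textsf{VisibleSpace}}(\V)$, and since the visible space is a linear subspace, so does their span.

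For the reverse containment $\textnormal{\textsf{VisibleSpace}}(\V) \subseteq \mathcal{S}$, I would argue by contrapositive on the complements: it suffices to show $\mathcal{S}^\perp \subseteq \textnormal{\textsf{InvisibleSpace}}(\V)$, since taking orthogonal complements then reverses the inclusion and $\textnormal{\textsf{VisibleSpace}}(\V) = (\textnormal{\textsf{InvisibleSpace}}(\V))^\perp \subseteq (\mathcal{S}^\perp)^\perp = \mathcal{S}$ (using finite-dimensionality of $\mathcal{B}(\mathcal{H})$ so that double complement is the identity, and $\mathcal{S}$ closed). So take $O \in \mathcal{S}^\perp$. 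Then for every $V \in \V$ and every $b$ we have $0 = \langle V^\dagger\ketbra{b}{b}V, O\rangle_{HS} = \tr(V\ketbra{b}{b}V^\dagger O) = \bra{b} V O^\dagger V^\dagger \ket{b}$. Wait — I should be careful with the placement of daggers; the cleaner route is to note the inner product condition gives $\bra{b} V O V^\dagger \ket{b} = 0$ directly for the generator written as $V^\dagger \ketbra{b}{b} V$ paired against $O$ up to complex conjugation, and since this holds for all $V,b$ it also holds with $O$ replaced by its real/imaginary (Hermitian/anti-Hermitian) parts, so one gets $\bra{b} V O V^\dagger \ket{b}=0$ for all $V,b$, which is exactly the defining condition for $O \in \textnormal{\textsf{InvisibleSpace}}(\V)$.

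The main obstacle — really the only subtlety — is bookkeeping around the non-Hermitian case and complex conjugation in the Hilbert–Schmidt pairing: the definitions of the visible and invisible spaces are phrased via the quadratic form $\bra{b}VOV^\dagger\ket{b}$, while the span $\mathcal{S}$ and its complement are naturally handled via the bilinear/sesquilinear Hilbert–Schmidt inner product, and one must make sure the orthogonality condition $\langle V^\dagger\ketbra{b}{b}V, O\rangle_{HS}=0$ for all $V,b$ genuinely implies $\bra{b}VOV^\dagger\ket{b}=0$ for all $V,b$ even when $O$ is not self-adjoint. This is resolved by the observation that $\textnormal{\textsf{InvisibleSpace}}(\V)$ is closed under taking adjoints (immediate from its definition, since $\bra{b}VO^\dagger V^\dagger\ket{b} = \overline{\bra{b}VOV^\dagger\ket{b}}$), together with decomposing any operator into its Hermitian and anti-Hermitian parts, both of which inherit the vanishing condition. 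Everything else is routine finite-dimensional linear algebra; Lemma~\ref{lem:visibleUbbU1} is not strictly needed for this argument but can be cited to give intuition that the generators $V^\dagger\ketbra{b}{b}V$ are "maximally visible."
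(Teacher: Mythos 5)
Your proof is correct, and it follows the same overall strategy as the paper -- a two-sided containment between the span $\mathcal{S}$ of the $V^\dagger\ketbra{b}{b}V$ and the orthogonal complement of $\textnormal{\textsf{InvisibleSpace}}(\V)$ -- but the two inclusions are executed differently. For $\mathcal{S}\subseteq\textnormal{\textsf{VisibleSpace}}(\V)$, the paper invokes Lemma~\ref{lem:visibleUbbU1}: each generator is Hilbert--Schmidt normalized and satisfies $|\bra{b}V(V^\dagger\ketbra{b}{b}V)V^\dagger\ket{b}|=1$, hence is ``maximally visible.'' You instead verify directly that each generator is Hilbert--Schmidt orthogonal to every invisible operator, which bypasses the lemma entirely and, as a bonus, forces you to confront the conjugation issue ($\tr(O_I^\dagger\, V^\dagger\ketbra{b}{b}V)=\bra{b}VO_I^\dagger V^\dagger\ket{b}$, handled by noting the invisible space is closed under adjoints) that the paper's proof of the reverse inclusion silently glosses over when it writes $\tr(A^\perp V^\dagger\ketbra{b}{b}V)=0$ without a dagger. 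For the other inclusion, the paper argues by contradiction (a visible $A$ with a component outside $\mathcal{S}$ would have that component in the invisible space), which is logically the same as your cleaner statement $\mathcal{S}^\perp\subseteq\textnormal{\textsf{InvisibleSpace}}(\V)$ followed by taking complements in finite dimensions. Net effect: your version is slightly more self-contained and more careful about the sesquilinear pairing for non-Hermitian operators, at the cost of not reusing Lemma~\ref{lem:visibleUbbU1}, which the paper presumably wants to showcase as a standalone criterion for visibility.
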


\begin{proof} 
    First we establish that if $A \in \textnormal{\textsf{VisibleSpace}}(\V)$, then $A$ is in the span of $V^{\dagger}\ketbra{b}{b}V$ over every $V \in \V$ and every $b \in \{0,1\}^n$. We do so with a proof by contradiction. For some $A \in \textnormal{\textsf{VisibleSpace}}(\V)$, assume that some nontrivial component of $A$, call it $A^{\perp}$, lives outside the span of $\{V^{\dagger}\ketbra{b}{b}V\}_{V,b}$. Therefore, $A^{\perp}$ will have no overlap with $V^{\dagger}\ketbra{b}{b}V$, meaning for all $V \in \V$ and all computational basis states $\ket{b}$, we have $\tr(A^{\perp} V^{\dagger}\ketbra{b}{b}V)=0$. However, by definition this implies that $A^{\perp} \in \textnormal{\textsf{InvisibleSpace}}(\V)$, and so we have reached a contraction.

    Next we show that if $A$ is in the span of $V^{\dagger}\ketbra{b}{b}V$ over every $V \in \V$ and every $b \in \{0,1\}^n$, then $A \in \textnormal{\textsf{VisibleSpace}}(\V)$.
    Consider some operator $A = \sum_i c_i \,V_i^{\dagger}\ketbra{b_i}{b_i}V_i$ where $c_i$ is a complex number, $V_i \in \V$, and $b_i \in \{0,1\}^n$.  Writing $O_i = V_i^\dagger |b\rangle \langle b_i| V_i$, we see by Lemma~\ref{lem:visibleUbbU1} that each $O_i$ lives entirely in the visible space since $\langle b_i| V_i O_i V_i |b_i\rangle = 1$.  Thus $A = \sum_i c_i\, O_i$ is a linear combination of operators which each lie entirely in the visible space, and so by linearity $A$ itself lives entirely in the visible space.
\end{proof}

\begin{repdefinition}{def:VisibleSpace}\textnormal{\textbf{(Visible Space)}} Following Observation \ref{obs:visibledef},  $\textnormal{\textsf{VisibleSpace}}(\V)$ is the span of $V^{\dagger} \ketbra{b}{b} V$ over every $V \in \V$ and every $b \in \{0,1\}^n$.
\end{repdefinition}

Finally, let us come back to the intuition we gave at the beginning of this section: the visible space contains the operators we can estimate using $\V$. This intuition matches our new definition for the visible space. Since measurements using the implementable unitaries $\V$ are gathering statistics on expectation values of the form $\tr(\rho V^{\dagger} \ketbra{b}{b} V)$, one can estimate any operator in the span of $V^{\dagger} \ketbra{b}{b} V$ over $V \in \V$ and $b \in \{0,1\}^n$.

\subsection{Learning protocol}\label{sec:learning-protocol}

Equipped with a set of implementable unitaries $\V$ in an experimental setup, an experimentalist can use their measurement outcomes to estimate $\tr(\rho O)$ for any $O \in \textnormal{\textsf{VisibleSpace}}(\V)$. 
In this section, we will discuss \textit{how} one actually estimates these visible observables.  Consider a probability density function $p(V)$ supported on the implementable unitaries $\V$.
Following the definition of the visible space (Definition \ref{def:VisibleSpace}), for any observable $O \in \textnormal{\textsf{VisibleSpace}}(\V)$, there exists a probability density function $p$ and a function $\mathcal{K}^O: \V \times \{0,1\}^n \rightarrow \mathbb{C}$ \cite{d20042} such that
\begin{equation}\label{eqn:OvisKVb2}
    O = \int_{V \in \V} dV \, p(V) \sum_b \mathcal{K}^O(V,b) \hspace{1mm} V^{\dagger} \ketbra{b}{b} V\,.
\end{equation}
Using the representation of a visible space operator in~\eqref{eqn:OvisKVb2}, $\tr(\rho O)$ can be expressed via the following expectation over $V$ and $b$:
\begin{eqnarray}
\tr(\rho O)
&=& \int_{V \in \V} dV \, p(V) \sum_b \bra{b}V\rho V^{\dagger}\ket{b} \mathcal{K}^O(V,b) \\
&=&   \Expect_{V,b\,\sim\,P_\rho} \mathcal{K}^O(V,b)\,. \label{eqn:KVBreconstructTrRhoO}
\end{eqnarray}
In this expectation $V$ is sampled from $p(V)$ and $b$ is sampled with probability $\bra{b} V \rho V^{\dagger} \ket{b}$, meaning the probability of obtaining the outcome $(V,b)$ is $P_\rho(V,b) = p(V)\bra{b}V\rho V^\dagger\ket{b}$. 
Therefore, to estimate $\tr(\rho O)$, one can perform experiments of the following kind: each experiment prepares a new copy of $\rho$, evolves under a random unitary $V$ sampled from $p(V)$, and measures to obtain a bitstring $b \in \{0, 1\}^n$. 
Each of these experiments is a ``randomized measurement.''
Indeed, the probability $P_\rho(V,b)$ in~\eqref{eqn:KVBreconstructTrRhoO} is precisely the probability of experimentally obtaining $(V, b)$ in a randomized measurement. As a result, the set of outcomes $\{ (V_s, b_s) \}_{s=1}^N$ of $N$ randomized measurements can be used to construct the empirical average 
\begin{equation}\label{eqn:empiricalavgAPPX}
    \tr(\rho O) \approx \frac{1}{N} \sum_{s=1}^N \mathcal{K}^{O}(V_s, b_s).
\end{equation}
This average will be a good estimate for $\Tr(\rho O)$ provided the variance of $\mathcal{K}^{O}(V, b)$ is small. This variance depends on the state $\rho$, but since we have no information about the state, we define $\textnormal{Var}_{\textnormal{max}}[\mathcal{K}^{O}]$ to be the variance maximized over all possible states $\rho$:
\begin{equation}
    \textnormal{Var}_{\textnormal{max}}[\mathcal{K}^{O}(V,b)] := \max_{\rho} \textnormal{Var}_{V,b\,\sim\,P_\rho}[\mathcal{K}^{O}(V, b)]\,.
\end{equation}

\subsubsection{Introducing bias into our estimates.}

With an eye towards being resource efficient, we can introduce bias \cite{hadfield2022measurements} into our estimate in order to reduce our estimate's variance.  In particular, given an observable $O$, we can utilize the experimental data $\{ (V_s, b_s) \}_{s=1}^N$ to estimate $\Tr(\rho \widetilde{O})$ for some  $\widetilde{O} \in \textnormal{\textsf{VisibleSpace}}(\V)$ that is close to $O$. 
If we have a good estimate for $\Tr(\rho \widetilde{O})$, then it will also be a good estimate for $\Tr(\rho O)$.  Crucially, a judicious choice of $\widetilde{O}$ can dramatically reduce the variance of the estimate.
Appendix~\ref{app:OptimizingBiasVar} will discuss how to use the $\mathcal{K}(V,b)$ formalism to find an $\widetilde{O}$ close to $O$ that minimizes the estimation error for a fixed number of measurements, but we will briefly build some intuition here. 

Choosing the optimal $\widetilde{O}$ is a type of bias-variance tradeoff: by replacing $O$ with a ``biased'' operator $\widetilde{O}$, we can reduce the ``variance'' of our estimator \cite{hadfield2022measurements,mohri2012foundations}. But we do not want $\widetilde{O}$ to be so biased that $\tr(\rho \widetilde{O})$ no longer approximates $\tr(\rho O)$. Let us parameterize $\widetilde{O}$ as
\begin{equation}
\widetilde{O}(f) := \Expect_{V \sim p}\,\sum_{b \in \{0,1\}^n} f(V,b) \, V^\dagger |b\rangle \langle b| V\,.
\end{equation}
We would like to find the function $f(V,b)$ such that the estimation error $\text{Error}(f) = |\text{tr}(\rho O) - \frac{1}{N} \sum_{s=1}^N f(V_s, b_s)|$ is as small as possible.  Consider the cost function
\begin{equation}\label{eqn:costfunctionBVTradeOff}
    \textnormal{Cost}(f) := \|O-\widetilde{O}(f)\|_{\infty} + \sqrt{\frac{2}{N}\, \text{Var}[f] \, \log\!\left(\frac{1}{2\delta}\right)}\,,
\end{equation} 
where $\text{Var}[f]$ samples $V$ according to $p(V)$ and samples $b$ uniformly. 
Moreover, $\|\cdot\|_\infty$ is the operator norm. 
This cost function approximates the estimation error when $\rho$ is the maximally mixed state, but is empirically a good proxy for the estimation error for generic $\rho$ (see Appendix \ref{app:OptimizingBiasVar}).
Since the cost function is convex, it has a unique minimizer,  and in practice one can approximate $\V$ by subsampling and then use gradient descent to find the minimizing $f$ and thus the optimal biased operator $\widetilde{O}(f)$.  See Appendix~\ref{app:OptimizingBiasVar} for details.

\subsubsection{Predicting expectation values of many observables.}
We can reuse the same data set $\{ (V_s, b_s) \}_{s=1}^N$ to predict the expectation values $\Tr(\rho O_1), \ldots , \Tr(\rho O_M)$ of many observables. We can construct estimates using an empirical average (e.g.~\eqref{eqn:empiricalavgAPPX}), and a rigorous guarantee on precision is given below in Theorem \ref{thm:shadowtomtheorem}.

\begin{reptheorem}{thm:shadowtomtheorem}
Suppose we are given $M$ operators $O_1, \ldots, O_M$ and probability density function $p(V)$ with support over the implementable unitaries $\V$. Consider ``biased'' operators $\widetilde{O}_1, \ldots, \widetilde{O}_M \in \textnormal{\textsf{VisibleSpace}}(\V)$ and fix accuracy parameters $\epsilon$, $\delta$ $\in [0,1]$. 
Further set
\begin{equation} \label{eqn:thm1NumMmts}
    N = 2  \log \!\left(\frac{M}{2\delta}\right) \max_i \frac{\textnormal{Var}_{\textnormal{max}}[\mathcal{K}^{\widetilde{O}_i}] + \frac{1}{3}\epsilon \,Q_i}{\epsilon^2},
\end{equation}
where $Q_i := \max_{V_i,b_i} |\mathcal{K}^{\widetilde{O}_i}(V_i,b_i)| + \|\widetilde{O}_i\|_\infty $. Upon performing $N$ randomized measurements outputting $\{(V_s, b_s)\}_{s=1}^N$, with probability at least $1 - \delta$ we have
\begin{equation}
    \left|\frac{1}{N} \sum_{s=1}^N \mathcal{K}^{\widetilde{O}_i}(V_s, b_s) - \text{\rm tr}(O_i \rho) \right| \leq \|O_i-\widetilde{O}_i\|_{\infty} + \epsilon  \hspace{3mm} \textit{ for all }1 \leq i \leq M\,.
\end{equation}
\end{reptheorem}
\begin{proof}
    The claim follows from Bernstein's inequality \cite{foucart2013mathematical}. Consider the zero-mean random variable $X_i(V_s,b_s) = \mathcal{K}^{\widetilde{O}_i}(V_s, b_s) - \tr(\widetilde{O}_i \rho)$ which satisfies $|X_i| \leq Q_i$ by virtue of the triangle inequality and by maximizing over states and all possible $\{V_i, b_i\}$. Then $N$ randomized measurements $\{ (V_s, b_s) \}_{s=1}^N$ suffice to construct an empirical average that obeys
    \begin{equation} \label{eqn:bernsteinsperformanceguarantee}
        \Pr\left[\left|\frac{1}{N}\sum_{s=1}^N \mathcal{K}^{\widetilde{O}_i}(V_s, b_s) \hspace{1mm} - \tr(\widetilde{O}_i \rho)\right| \geq  \epsilon\right] \leq 2\exp \left(-\frac{\frac{1}{2} \epsilon^2 N}{\sigma_i^2 + \frac{1}{3}Q_i \epsilon}\right),
    \end{equation}
    where $\sigma_i^2$ is the variance $\Expect [X^{2}_i] = \textnormal{Var}_{V,b\,\sim\,P_\rho}[\mathcal{K}^{\widetilde{O}_i}(V, b)]$. 
    However, we assume no knowledge of the state of interest $\rho$, so we will upper bound the variance with the maximum variance over states $\textnormal{Var}_{\textnormal{max}}[\mathcal{K}^{\widetilde{O}_i}(V,b)]$.
    
    In order to bound the estimation error between our empirical average and the \textit{true} expectation value $\tr(\rho O_i)$, we use the triangle inequality (first to second line below) and upper bound $|\tr(\rho (O_i - \widetilde{O}_i))|$ with the operator norm (second to third line):
    \begin{eqnarray}
        \left|\frac{1}{N}\sum_{s=1}^N \mathcal{K}^{\widetilde{O}_i}(V_s, b_s) \hspace{1mm} - \tr(\widetilde{O}_i \rho)\right| 
        &=& 
        \left|\frac{1}{N}\sum_{s=1}^N \mathcal{K}^{\widetilde{O}_i}(V_s, b_s) \hspace{1mm} - \tr(O_i \rho) + \tr(O_i \rho) - \tr(\widetilde{O}_i \rho)\right|
        \\
        &\geq&
        \left|\frac{1}{N}\sum_{s=1}^N \mathcal{K}^{\widetilde{O}_i}(V_s, b_s) \hspace{1mm} - \tr(O_i \rho) \right| \hspace{1mm}-\hspace{1mm} \left|  \tr(O_i \rho) - \tr(\widetilde{O}_i \rho)\right| 
        \\
        &\geq&
        \left|\frac{1}{N}\sum_{s=1}^N \mathcal{K}^{\widetilde{O}_i}(V_s, b_s) \hspace{1mm} - \tr(O_i \rho) \right| \hspace{1mm}-\hspace{1mm} \|O_i-\widetilde{O}_i\|_{\infty}\,.
    \end{eqnarray}
    Using this lower bound, we can bound the performance guarantee~ \eqref{eqn:bernsteinsperformanceguarantee} as 
    \begin{equation}\label{eqn:bernsteinsperformance}
        \Pr\left[\left|\frac{1}{N}\sum_{s=1}^N \mathcal{K}^{\widetilde{O}_i}(V_s, b_s) \hspace{1mm} - \tr(O_i \rho)\right| \geq  \epsilon + \|O_i-\widetilde{O}_i\|_{\infty}\right] \leq 2\exp\left(-\frac{\frac{1}{2} \epsilon^2 N}{\textnormal{Var}_{\textnormal{max}}[\mathcal{K}^{\widetilde{O}_i}] + \frac{1}{3}Q_i \epsilon}\right).
    \end{equation}
    
   The number of randomized measurements $N$ is chosen such that the failure probability (the right hand side of~\eqref{eqn:bernsteinsperformance}) is less than or equal to $\delta/M$ for all $1 \leq i \leq M$. Therefore, one simply has to apply a union bound over all $M$ failure probabilities to deduce the claim.
\end{proof}

One could instead utilize a median-of-means estimator, rather than an empirical average as per~\eqref{eqn:empiricalavgAPPX}, to estimate the $M$ expectation values. In the median-of-means setting, one also needs only $N = \mathcal{O}(\frac{\log (M)}{\epsilon^2} \max_{i} \textnormal{Var}_{\textnormal{max}}[\mathcal{K}^{\widetilde{O}_i}])$ randomized measurements to estimate the expectation values to error $\|O_i-\widetilde{O}_i\|_{\infty} + \epsilon$. This is what is quoted in the main text, and the proof of this bound follows the same steps as the proof for Theorem \ref{thm:shadowtomtheorem} above. However, instead of Bernstein's inequality, one uses the rigorous performance
guarantee for median of means estimation~\cite{jerrum1986random,nemirovski1983problem}.

\subsection{\label{sec:KVbAndClassicalShadows}Recovering classical shadow tomography}

Here we will define the `classical shadow kernel' $\mathcal{K}^O_{\textnormal{CS}}$ and show how our learning procedure, equipped with this kernel, recovers classical shadow tomography. We will also comment on the utility of this choice of kernel when $\rho$ is the maximally mixed state and give a new interpretation of the visible and invisible spaces in the language of classical shadow tomography. 
Throughout this entire subsection, assume that $p(V)$ is a uniform probability density function over $\V$.

\subsubsection{$\mathcal{K}^O_{\textnormal{CS}}$ recovers the original classical shadow protocol.}

Below we define a `classical shadow kernel' $\mathcal{K}^O_{\textnormal{CS}}$, and using the language of classical shadows we verify that the visible operator $O$ can be faithfully represented by~\eqref{eqn:OvisKVb2}.
\begin{definition} \label{def:shadowversionofK}
    Assume that $p(V)$ is uniform over $\V$. For any $O \in \text{\rm\textsf{VisibleSpace}}(\V)$, the classical shadow kernel is defined as
    \begin{equation}
    \mathcal{K}^O_{\textnormal{CS}}(V,b) = \bra{b} V \mathcal{M}^{-1}(O)V^\dagger \ket{b}\,,
\end{equation}
where $\mathcal{M}^{-1}$ is the inverse measurement channel~\cite{huang2020predicting} defined for $\V$ and $p(V)$.
\end{definition}

\vspace{3mm}
\begin{adjustwidth}{1cm}{}
\noindent
\textit{Checking that $\mathcal{K}^O_{\textnormal{CS}}(V,b)$ faithfully represents $O$ via~\eqref{eqn:OvisKVb2}}.
In~\cite{huang2020predicting} the measurement channel is defined as 
$\mathcal{M}(A) = \Expect_{V \sim p} \sum_b V^\dagger\ketbra{b}{b}V \bra{b} VAV^\dagger\ket{b}$.
By substituting our expression for $\mathcal{K}^O_{\textnormal{CS}}(V,b)$ into the right side of~\eqref{eqn:OvisKVb2} and using the definition of the measurement channel, we find that we indeed recover our desired operator $O$:
\begin{eqnarray}
    \Expect_{V \sim p} \sum_b \mathcal{K}^O_{\textnormal{CS}}(V,b) \hspace{1mm} V^{\dagger} \ketbra{b}{b} V 
    &=& \Expect_{V \sim p} \sum_b \bra{b} V \mathcal{M}^{-1}(O)V^\dagger \ket{b} \hspace{1mm} V^{\dagger} \ketbra{b}{b} V \\
    &=& \mathcal{M}(\mathcal{M}^{-1}(O)) \\
    &=& O\,.
\end{eqnarray}
\end{adjustwidth}
\vspace{4mm}

It turns out that this particular choice of $\mathcal{K}^O = \mathcal{K}^O_{\textnormal{CS}}$ recovers the original classical shadow protocol~\cite{huang2020predicting}. Recall that the classical shadow corresponding to evolution under $V$ and measurement of computational basis state $|b\rangle$ is constructed as $\hat{\rho}_{V,b} = \mathcal{M}^{-1}(V^\dagger \ketbra{b}{b} V)$. 
Therefore, using the fact that the measurement channel $\mathcal{M}$ is self-adjoint, one can rewrite $\mathcal{K}^O_{\textnormal{CS}}(V,b)$ as a single shadow estimate of $\Tr(\rho O)$, 
\begin{eqnarray}
    \mathcal{K}^O_{\textnormal{CS}}(V,b) 
    &=& \Tr (O \mathcal{M}^{-1}(V^\dagger \ketbra{b}{b} V) ) \\
    &=&  \Tr (O \hat{\rho}_{V,b})\,.
\end{eqnarray}
In other words, the shadows are ingrained in the definition of $\mathcal{K}^O_{\textnormal{CS}}$.  Therefore, by estimating $\Tr(\rho O)$ with the empirical average $\frac{1}{N} \sum_{i=0}^N \mathcal{K}^O_{\textnormal{CS}}(V_i,b_i)$, one is actually performing $N$-shot classical shadow tomography. 
\vspace{3mm}

\subsubsection{Efficiency of classical shadows for the maximally mixed state.}
The choice of $\mathcal{K}^O = \mathcal{K}^O_{\textnormal{CS}}$ allows for efficient reconstruction of $\tr(\rho O)$ for certain states $\rho$. For example, we will show below that $\mathcal{K}^O_{\textnormal{CS}}$ minimizes the number of measurements required when the state $\rho$ is maximally-mixed.
We formally state and prove this result in the observation below.

\begin{observation}
    Assume that $p(V)$ is uniform over $\V$. 
    For the maximally-mixed state $\rho = \mathds{1}/2^n$, the variance $\textnormal{Var}_{V,b\,\sim\,P_\rho}[\mathcal{K}^{O}(V, b)]$ 
    is minimized for the choice of kernel $\mathcal{K}^O = \mathcal{K}^O_{\textnormal{CS}}$.
\end{observation}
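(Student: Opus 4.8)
The plan is to recast the observation as a Hilbert-space least-norm problem and then solve it by computing an adjoint. The key simplification is that when $\rho = \mathds{1}/2^n$ the readout distribution factorizes: $\bra{b}V\rho V^\dagger\ket{b} = 2^{-n}$ for all $V,b$, so $V$ is drawn from $p$ and $b$ is uniform on $\{0,1\}^n$ independently of $V$. Writing $\mu$ for this product measure and $\langle f,g\rangle_\mu := \Expect_{V\sim p}\sum_b 2^{-n}\,\overline{f(V,b)}\,g(V,b)$ for the associated $L^2$ inner product, Eq.~\eqref{eqn:KVBreconstructTrRhoO} tells us $\Expect_\mu[\mathcal{K}^O] = \tr(\rho O)$ for \emph{every} admissible kernel (i.e.\ every $\mathcal{K}^O$ satisfying~\eqref{eqn:OvisKVb2}). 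Hence $\textnormal{Var}_{V,b\sim P_\rho}[\mathcal{K}^O] = \|\mathcal{K}^O\|_\mu^2 - |\tr(\rho O)|^2$, and minimizing the variance over admissible kernels is exactly minimizing $\|\mathcal{K}^O\|_\mu$ subject to the linear constraint~\eqref{eqn:OvisKVb2}.

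Next I would write that constraint as $\Phi(\mathcal{K}) = O$, where $\Phi:L^2(\mu)\to\mathcal{B}(\mathcal{H})$ is the bounded linear map $\Phi(\mathcal{K}) := \Expect_{V\sim p}\sum_b \mathcal{K}(V,b)\,V^\dagger\ketbra{b}{b}V$, whose range is $\textnormal{\textsf{VisibleSpace}}(\V)$ by Definition~\ref{def:VisibleSpace}. The admissible kernels form a coset of $\ker\Phi$, so the unique minimal-norm element is the one in $(\ker\Phi)^\perp$, i.e.\ in the range of the adjoint $\Phi^\dagger$. This remains valid even for a continuous ensemble $\V$ because the range of $\Phi$ is finite-dimensional, so the range of $\Phi^\dagger$ is closed and equals $(\ker\Phi)^\perp$. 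Thus the optimal kernel has the form $\mathcal{K} = \Phi^\dagger(A)$ for some operator $A$.

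The rest is a short computation. Matching $\langle\Phi(\mathcal{K}),A\rangle_{\mathrm{HS}} = \langle\mathcal{K},\Phi^\dagger A\rangle_\mu$ gives $(\Phi^\dagger A)(V,b) = 2^n\bra{b}VAV^\dagger\ket{b}$, and then $\Phi(\Phi^\dagger A) = 2^n\Expect_{V\sim p}\sum_b V^\dagger\ketbra{b}{b}V\,\bra{b}VAV^\dagger\ket{b} = 2^n\mathcal{M}(A)$, with $\mathcal{M}$ exactly the classical-shadow measurement channel of Definition~\ref{def:shadowversionofK}. Imposing $\Phi(\Phi^\dagger A) = O$ and inverting $\mathcal{M}$ on $\textnormal{\textsf{VisibleSpace}}(\V)$, where $O$ lives, yields $A = \mathcal{M}^{-1}(O)/2^n$, so the variance-minimizing kernel is $\mathcal{K}(V,b) = 2^n\bra{b}V\big(\mathcal{M}^{-1}(O)/2^n\big)V^\dagger\ket{b} = \bra{b}V\mathcal{M}^{-1}(O)V^\dagger\ket{b} = \mathcal{K}^O_{\textnormal{CS}}(V,b)$, which is the assertion.

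The main obstacle, though a mild one, is the functional-analytic bookkeeping in the middle step: confirming that the minimal-norm solution exists, is unique, and lands in the range of $\Phi^\dagger$ when $\V$ is a continuum, and that $\mathcal{M}$ is genuinely invertible on the visible space so that $A$ is well-defined. Everything else is elementary; one could equivalently package the argument as a Lagrange-multiplier computation, in which stationarity of $\|\mathcal{K}\|_\mu^2$ under the matrix-valued constraint forces $\mathcal{K}(V,b) = 2^n\bra{b}V\Lambda V^\dagger\ket{b}$ for an operator multiplier $\Lambda$, which the constraint then pins to $\mathcal{M}^{-1}(O)/2^n$.
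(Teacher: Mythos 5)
Your proposal is correct, and at its core it performs the same computation as the paper: both identify that the variance-minimizing kernel must have the form $\mathcal{K}(V,b) \propto \bra{b}V\Lambda V^\dagger\ket{b}$ for some operator $\Lambda$, and both pin $\Lambda$ down by substituting into the constraint and inverting the measurement channel $\mathcal{M}$ on the visible space, recovering $\mathcal{K}^O_{\textnormal{CS}}$. The difference is in how that form is derived. The paper writes a Lagrangian $\textnormal{Var} + \tr\big(\Lambda(O - \Phi(\mathcal{K}))\big)$ and sets its derivative with respect to $\mathcal{K}(V,b)$ to zero, which only exhibits a stationary point (global minimality is then implicit in the convexity of the quadratic objective under an affine constraint, though the paper does not say so). You instead observe that, for $\rho = \mathds{1}/2^n$, the mean of $\mathcal{K}$ is fixed at $\tr(\rho O)$ across all admissible kernels, so minimizing the variance is exactly a least-norm problem over the coset $\mathcal{K}_0 + \ker\Phi$; the projection theorem then guarantees existence and uniqueness of the minimizer and places it in the range of $\Phi^\dagger$, and your adjoint computation $(\Phi^\dagger A)(V,b) = 2^n\bra{b}VAV^\dagger\ket{b}$ together with $\Phi\Phi^\dagger = 2^n\mathcal{M}$ finishes the argument. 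What your route buys is a genuinely complete optimality proof (not just first-order stationarity) and a clean treatment of the continuous-$\V$ case via the finite-dimensionality of $\textnormal{ran}\,\Phi$; what the paper's route buys is brevity and a formula for the multiplier that transparently matches Definition~\ref{def:shadowversionofK}. You correctly handle the one delicate point, namely that $\mathcal{M}$ is only invertible on $\textnormal{\textsf{VisibleSpace}}(\V)$, which suffices since $O$ is assumed to lie there.
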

\begin{proof} In order to find the kernel extremizing the variance subject to the constraint that $O = \Expect_{V \sim p}  \sum_b \mathcal{K}(V,b) \hspace{1mm} V^{\dagger} \ketbra{b}{b} V$, we will use the method of Lagrange multipliers. We write
\begin{eqnarray}
    \textnormal{cost}(\mathcal{K}^O) 
    &=& 
    \textnormal{Var}_{V,b\,\sim\,P_\rho}[\mathcal{K}^{O}(V, b)] + \tr\!\left(\Lambda \hspace{1mm} \big(O - \Expect_{V\sim p}  \sum_b \mathcal{K}(V,b) \hspace{1mm} V^{\dagger} \ketbra{b}{b} V\big)\right) 
    \\
    &=& 
    \Expect_{V,b\,\sim\,P_\rho}[\mathcal{K}^{O}(V, b)^2] - \tr(\rho O)^2 + \tr\!\left(\Lambda \hspace{1mm} \big(O - \Expect_{V\sim p}  \sum_b \mathcal{K}(V,b) \hspace{1mm} V^{\dagger} \ketbra{b}{b} V\big)\right)\,, 
\end{eqnarray}
where $\Lambda$ is a $2^n \times 2^n$ matrix of Lagrange multipliers.
The $\tr(\rho O)$ in the second line comes from the fact that $\Expect_{V,b\,\sim\,P_\rho}[\mathcal{K}^{O}(V, b)] = \tr(\rho O)$ as discussed in the previous section (see~\eqref{eqn:KVBreconstructTrRhoO}). Setting the derivative of this cost function to zero, we find that 
\begin{equation}
    \frac{\partial \hspace{1mm} \textnormal{cost} (\mathcal{K}^O)}{\partial \mathcal{K}^O (V,b)} = 2\, P_\rho(V,b)\, \mathcal{K}^O(V,b) - p(V) \bra{b} V \Lambda V^\dagger \ket{b} = 0\,.
\end{equation}
By substituting in the probability of obtaining the outcome $(V,b)$, namely $P_\rho(V,b) = p(V) \bra{b}V\rho V^\dagger\ket{b}$, and using $\rho = \mathds{1}/2^n$, we obtain an expression for $\mathcal{K}^O (V,b)$ in terms of our Lagrange multipliers:
\begin{equation} \label{eqn:appdKintermsofLambda}
     \mathcal{K}^O(V,b) = 2^{n-1} \bra{b} V \Lambda V^\dagger \ket{b}\,.
\end{equation}
Plugging this into our constraint, we can solve for $\Lambda$:
\begin{eqnarray}
    O &=& \Expect_{V \sim p}  \sum_b \mathcal{K}(V,b) \hspace{1mm} V^{\dagger} \ketbra{b}{b} V \\
    &=& \Expect_{V \sim p}  \sum_b 2^{n-1} \bra{b} V \Lambda V^\dagger \ket{b} \hspace{1mm} V^{\dagger} \ketbra{b}{b} V \\
    &=& 2^{n-1} \mathcal{M}(\Lambda)\,.
\end{eqnarray}
We thus obtain our Lagrange multiplier matrix $\Lambda = \frac{1}{2^{n-1}}\mathcal{M}^{-1}(O)$, and when we plug this into~\eqref{eqn:appdKintermsofLambda} we recover the classical shadow version of $\mathcal{K}^O$ from Definition~\ref{def:shadowversionofK}.
\end{proof}

\subsubsection{Visible and invisible space in the language of classical shadows.}

One can use the language of classical shadow tomography to develop another, equivalent understanding of the visible and invisible space. It turns out that these spaces are simply the image and nullspace of the classical shadow measurement channel $\mathcal{M}$ (see Observations \ref{obs:mChEquivalenceINVIS} and \ref{obs:mChEquivalenceVIS} below). We will first prove our statement and then give some intuition for the result.

\vspace{3mm}
\begin{observation}{\label{obs:mChEquivalenceINVIS}}
    If $p$ is uniform over $\V$, then the nullspace of the associated measurement channel $\mathcal{M}$ is equivalent to $\text{\rm \textsf{InvisibleSpace}}(\V)$. 
    Thus for some operator $A$,
    \begin{equation}
    \label{eqn:mchannelEquiv}
        A \in \textnormal{\textsf{InvisibleSpace}}(\V)
        \iff
        \mathcal{M}(A) =  \int_{V \in \V} dV \, p(V) \sum_{b} V^{\dagger}\ketbra{b}{b} V \bra{b}V A V^{\dagger}\ket{b} = 0\,.
    \end{equation}
\end{observation}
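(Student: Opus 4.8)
The plan is to establish the two implications in \eqref{eqn:mchannelEquiv} separately. The forward direction is immediate: if $A \in \textnormal{\textsf{InvisibleSpace}}(\V)$, then by definition $\bra{b}VAV^{\dagger}\ket{b} = 0$ for every $V \in \V$ and every computational basis state $\ket{b}$, so every summand of $\mathcal{M}(A) = \Expect_{V\sim p}\sum_b V^{\dagger}\ketbra{b}{b}V\,\bra{b}VAV^{\dagger}\ket{b}$ vanishes and hence $\mathcal{M}(A)=0$. No hypothesis on $p$ is needed here.

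For the reverse direction I would use a positivity argument for $\mathcal{M}$ viewed as a superoperator on $\mathcal{B}(\mathcal{H})$ with the Hilbert–Schmidt inner product. Using $\tr\!\big(A^{\dagger}V^{\dagger}\ketbra{b}{b}V\big) = \overline{\bra{b}VAV^{\dagger}\ket{b}}$, one computes
\[
\tr\!\big(A^{\dagger}\mathcal{M}(A)\big) \;=\; \Expect_{V\sim p}\sum_b \big|\bra{b}VAV^{\dagger}\ket{b}\big|^{2} \;\ge\; 0 .
\]
If $\mathcal{M}(A)=0$ then the left-hand side is zero, which forces $\bra{b}VAV^{\dagger}\ket{b}=0$ for every $b$ and every $V$ in the support of $p$. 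Since $p$ is uniform over $\V$, its support is all of $\V$, so $A \in \textnormal{\textsf{InvisibleSpace}}(\V)$. Equivalently, one may note that $\mathcal{M} = \Expect_{V\sim p}\sum_b \mathcal{P}_{V,b}$ with $\mathcal{P}_{V,b}(\cdot) = V^{\dagger}\ketbra{b}{b}V\,\tr\!\big(V^{\dagger}\ketbra{b}{b}V\,(\cdot)\big)$ a Hilbert–Schmidt orthogonal projector onto the unit-norm operator $V^{\dagger}\ketbra{b}{b}V$; thus $\mathcal{M}$ is a positive semidefinite superoperator whose nullspace is the common nullspace of the $\mathcal{P}_{V,b}$, and $\mathcal{P}_{V,b}(A)=0 \iff \bra{b}VAV^{\dagger}\ket{b}=0$.

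I expect the only delicate point to be the passage from ``$\bra{b}VAV^{\dagger}\ket{b}=0$ for $p$-almost every $V$'' to ``for every $V\in\V$'': for discrete $\V$ this is automatic since uniform means full support, while for a continuous family one invokes continuity of $V\mapsto\bra{b}VAV^{\dagger}\ket{b}$ together with density of the support. This is precisely the step that uses the hypothesis that $p$ is uniform, and it is the one I would state carefully; the rest is a direct calculation.
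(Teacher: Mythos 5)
Your proposal is correct and follows essentially the same route as the paper: the forward direction by direct substitution, and the reverse direction by computing the Hilbert--Schmidt quadratic form $\tr\!\big(A^{\dagger}\mathcal{M}(A)\big)=\Expect_{V\sim p}\sum_b|\bra{b}VAV^{\dagger}\ket{b}|^2$ and using non-negativity of each term. Your version is in fact slightly more careful than the paper's (which writes $\tr(\mathcal{M}(A)A)$, implicitly assuming $A$ Hermitian, and does not flag the almost-everywhere-to-everywhere step for continuous $\V$), but the underlying argument is identical.
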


\begin{proof}
     (\textit{Forward Direction}) 
    If $A \in $ \textsf{InvisibleSpace}($\V$), then $\bra{b}V A V^{\dagger}\ket{b} = 0$ for all $V$ and $b$, and so every term inside the summand of the $\mathcal{M}$ channel will evaluate to $0$. Thus, $A$ must be in the null space of the $\mathcal{M}$ channel. \\ \\
    (\textit{Backward Direction}) 
    For $A$ in the null space of the $\mathcal{M}$ channel, i.e.~$\mathcal{M}(A) = 0$, the following must hold: 
        \begin{equation}
        \nonumber
            \tr(\mathcal{M}(A)A) =  \int_{V \in \V} dV \, p(V) \sum_{b} | \bra{b}V A V^{\dagger}\ket{b} |^2 = 0\,.
        \end{equation}
    Since each term in sum on the right-hand side is non-negative, $\bra{b}V A V^{\dagger}\ket{b} = 0$ for all $V$ and $b$. Thus $A \in $ \textsf{InvisibleSpace}($\V$).
\end{proof}
\vspace{3mm}

\begin{observation}\label{obs:mChEquivalenceVIS}
If $p$ is uniform over $\V$, then $\text{\rm \textsf{VisibleSpace}}(\V)$ is equal to the image of the associated measurement channel $\mathcal{M}$.
\end{observation}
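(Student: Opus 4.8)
The plan is to show the two inclusions between $\textsf{VisibleSpace}(\V)$ and the image of $\mathcal{M}$, leveraging the fact that $\mathcal{M}$ is self-adjoint with respect to the Hilbert--Schmidt inner product together with the characterization of the invisible space as $\ker \mathcal{M}$ that was just established in Observation~\ref{obs:mChEquivalenceINVIS}. First I would recall the orthogonal decomposition $\mathcal{B}(\mathcal{H}) \simeq \textsf{VisibleSpace}(\V) \oplus \textsf{InvisibleSpace}(\V)$ and the linear-algebra fact that for a self-adjoint operator $\mathcal{M}$ on a finite-dimensional inner product space, $\mathcal{B}(\mathcal{H}) = \ker \mathcal{M} \oplus \mathrm{im}\,\mathcal{M}$ as an orthogonal direct sum.

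For the first inclusion, $\mathrm{im}\,\mathcal{M} \subseteq \textsf{VisibleSpace}(\V)$: for any $A$, the definition $\mathcal{M}(A) = \Expect_{V\sim p}\sum_b V^\dagger\ketbra{b}{b}V\,\bra{b}VAV^\dagger\ket{b}$ exhibits $\mathcal{M}(A)$ as a linear combination of operators of the form $V^\dagger\ketbra{b}{b}V$, which lie in $\textsf{VisibleSpace}(\V)$ by Definition~\ref{def:VisibleSpace}; hence $\mathcal{M}(A) \in \textsf{VisibleSpace}(\V)$ by linearity. For the reverse inclusion, I would combine the two orthogonal decompositions: since $\mathcal{M}$ is self-adjoint, $\mathrm{im}\,\mathcal{M} = (\ker \mathcal{M})^\perp$; by Observation~\ref{obs:mChEquivalenceINVIS}, $\ker\mathcal{M} = \textsf{InvisibleSpace}(\V)$, so $\mathrm{im}\,\mathcal{M} = \textsf{InvisibleSpace}(\V)^\perp = \textsf{VisibleSpace}(\V)$. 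That last equality is exactly the definition of the visible space as the orthogonal complement of the invisible space. This actually delivers both inclusions at once, so the first-inclusion argument above is really just a sanity check / alternative route.

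The only step requiring a little care is the self-adjointness of $\mathcal{M}$ with respect to the Hilbert--Schmidt inner product. I would verify that $\tr(\mathcal{M}(A)^\dagger B) = \tr(A^\dagger \mathcal{M}(B))$ by writing both sides out: each becomes $\Expect_{V\sim p}\sum_b \overline{\bra{b}VAV^\dagger\ket{b}}\,\bra{b}VBV^\dagger\ket{b}$, since $\tr\big((V^\dagger\ketbra{b}{b}V)^\dagger \,\cdot\,\big)$ just extracts the $\bra{b}V\cdot V^\dagger\ket{b}$ matrix element and $\ketbra{b}{b}$ is Hermitian. (This is essentially the computation already used in the backward direction of Observation~\ref{obs:mChEquivalenceINVIS}.) Once self-adjointness is in hand, the finite-dimensionality of $\mathcal{B}(\mathcal{H})$ gives the orthogonal decomposition $\mathcal{B}(\mathcal{H}) = \ker\mathcal{M} \oplus \mathrm{im}\,\mathcal{M}$ and the proof closes immediately.

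I expect no serious obstacle here — the result is a clean corollary of Observation~\ref{obs:mChEquivalenceINVIS} plus standard self-adjoint operator theory. The main thing to get right is simply being explicit that $\mathcal{M}$ is self-adjoint on $\mathcal{B}(\mathcal{H})$ equipped with the Hilbert--Schmidt inner product, so that ``orthogonal complement of the nullspace'' legitimately equals ``image.'' After the proof I would add a sentence of intuition: $\mathcal{M}$ acts invertibly precisely on the visible space and annihilates the invisible space, which is why $\mathcal{M}^{-1}$ in Definition~\ref{def:shadowversionofK} is well-defined when restricted to visible-space observables.
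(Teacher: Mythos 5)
Your proof is correct and follows essentially the same route as the paper's, which likewise deduces the result from Observation~\ref{obs:mChEquivalenceINVIS} together with the definition of the visible space as the orthogonal complement of the invisible space. Your explicit verification that $\mathcal{M}$ is self-adjoint with respect to the Hilbert--Schmidt inner product fills in a step the paper leaves implicit when passing from $\ker\mathcal{M} = \textnormal{\textsf{InvisibleSpace}}(\V)$ to $\mathrm{im}\,\mathcal{M} = (\ker\mathcal{M})^\perp$.
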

\begin{proof}
The above follows from observation~\ref{obs:mChEquivalenceINVIS}: $\textnormal{\textsf{VisibleSpace}}(\V)$ is equivalent to the image of the measurement channel $\text{Image}(\mathcal{M})$ because the $\textnormal{\textsf{VisibleSpace}}(\V)$ is the orthogonal complement of $\textnormal{\textsf{InvisibleSpace}}(\V)$. 
\end{proof}

\vspace{4mm}
With the above observations, we can deepen our intuition that the visible space is the set of observables we can estimate using the implementable unitaries $\V$.
Recall that each classical shadow takes the form $\hat{\rho}_{V,b} = \mathcal{M}^{-1}(V^{\dagger}\ketbra{b}{b}V)$. Definition~\ref{def:VisibleSpace} tells us that $V^{\dagger}\ketbra{b}{b}V \in \textnormal{\textsf{VisibleSpace}}(\V)$, and via the observations above, the $\mathcal{M}$ channel is effectively block diagonal with respect to the visible and invisible spaces. Therefore, $\hat{\rho}_{V,b} \in \textnormal{\textsf{VisibleSpace}}(\V)$, and in expectation $\mathbb{E}_{V,b \sim P_\rho(V,b)} \hspace{1mm} \hat{\rho} = \Pi_{\textsf{Visible}}[\rho]$
where $\Pi_{\textsf{Visible}} : \mathcal{B}(\mathcal{H}) \to \V$ is a projection map into the visible space of operators.
As a result, when we use classical shadows generated by implementable unitaries $\V$ to estimate $\braket{O}$, we find that we can only estimate operators $O$ in the visible space.  In particular:
\begin{eqnarray}
     \mathbb{E}_{V,b \sim P_\rho(V,b)} \hspace{1mm} \Tr(\hat{\rho}_{V,b} O) &=&  \Tr( \Pi_{\textsf{Visible}}[\rho] \,O) 
     \\ 
     &=&  \Tr( \rho\, \Pi_{\textsf{Visible}}[O])
     \\ 
     &=& 
     \begin{cases} 
      \Tr(\rho O) & O \in \textnormal{\textsf{VisibleSpace}}(\V) \\
      0 & O \in \textnormal{\textsf{InvisibleSpace}}(\V) \\
      \Tr( \rho \,\Pi_{\textsf{Visible}}[O]) & O  \textnormal{ has support on both}
     \end{cases}\,.
\end{eqnarray} 
Since $\mathbb{E}_{V,b \sim P_\rho(V,b)} \hspace{1mm} \Tr(\hat{\rho} O)$ evaluates to $0$ for any operator $O$ living solely in the invisible space, the expectation values of such operators cannot be estimated using $\V$. Indeed, we can only estimate the expectation values of operators $O$ that live in the visible space.

\section{\label{app:globalSU2} A form of limited control: global $SU(2)$}
\vspace{3mm}
A common form of limited control in contemporary experiments is $\V = \{U^{\otimes n}\}_{U \in SU(2)}$. This form of control, which we call ``global $SU(2)$ control,'' applies some single qubit rotation $U \sim SU(2)$ to every qubit in the system. In this appendix, we provide an explicit orthonormal basis of operators for the global $SU(2)$ visible space as well as invisible space, and
establish that the dimension of the visible space is $\sim 2^n n^2$. Conveniently, this grows exponentially with the system size $n$, meaning that global $SU(2)$ control still enables us to learn exponentially many properties of a quantum system. 
We also derive the measurement channel for global $SU(2)$ assuming that the single qubit unitaries are sampled uniformly. Since $\mathcal{K}^O_{\textnormal{CS}}$ is defined in terms of an $\mathcal{M}$ channel (see Definition \ref{def:shadowversionofK} in Appendix~\ref{app:VisibleSpace}), we can use the global $SU(2)$ $\mathcal{M}$ channel to construct $\mathcal{K}^O_{\textnormal{CS}}$ for this form of control.

\subsection{\label{sec:su2Visible}Global $SU(2)$ visible space}

In this subsection, we discuss the visible and invisible spaces for global $SU(2)$. We construct an explicit basis for the visible and invisible spaces for an arbitrary number of qubits, where the notation for the visible and invisible basis elements will be $\{B\}$ and $\{B^{\perp}\}$, respectively.
Finally, we build intuition by considering the visible and invisible spaces of a $n = 2$ qubit system. 

\begin{definition}[\textbf{Fixed-$\mathds{1}$ Permutation-Invariant Sets}] Consider the set $\text{\rm Pauli}(n)$ of $n$-qubit Pauli strings.  We define special subsets of $\text{\rm Pauli}(n)$ as follows.  Let $R$ be a subset of $\{1,...,n\}$, and $n_X, n_Y, n_Z \in \mathbb{Z}_{\geq 0}$ such that $|R| + n_X + n_Y + n_Z = n$. We define $S_{R,n_X,n_Y,n_Z} \subseteq \text{\rm Pauli}(n)$ as follows: $P \in S_{R,n_X,n_Y,n_Z}$ if and only if
    \begin{center}
        \renewcommand{\theenumi}{\roman{enumi}}%
        \begin{enumerate}
            \item  $P$ contains an $\mathds{1}$ at site $i$ for each $i \in R$, and
            \item $P$ has exactly $n_X$ of the $X$ operators, exactly $n_Y$ of the $Y$ operators, and exactly $n_Z$ of the $Z$ operators.  
        \end{enumerate}
    \end{center}
    We refer to such sets $S_{R,n_X,n_Y,n_Z}$ as \text{\rm fixed-}$\mathds{1}$ \text{\rm permutation-invariant sets} because these sets are invariant under permutation of all of the non-identity elements.
\end{definition}

\vspace{2mm}

\noindent\indent\indent\textit{Example.}  The following are examples of fixed-$\mathds{1}$ permutation-invariant sets for $n = 4$:
\newline
\newline
\indent\indent
$S_{\{2,3\},0,1,1} = \{Y \otimes \mathds{1} \otimes \mathds{1} \otimes Z,\,\, Z\otimes \mathds{1} \otimes \mathds{1} \otimes Y\}$
\newline
\newline
\indent\indent
$S_{\{\}, 3,0,1} = \{X \otimes X \otimes X \otimes Z,\,\,X \otimes X \otimes Z \otimes X,\,\,X \otimes Z \otimes X \otimes X,\,\,Z \otimes X \otimes X \otimes X\}$
\newline
\newline
\indent\indent
$S_{\{1\},1,0,2} = \{\mathds{1}\otimes X \otimes Z \otimes Z,\,\, \mathds{1} \otimes Z \otimes X \otimes Z,\,\, \mathds{1} \otimes Z \otimes Z \otimes X\}$

\vspace{5mm}

We will now define an orthonormal basis for the operator space $\mathcal{B}(\mathcal{H})$ in terms of these fixed-$\mathds{1}$ permutation-invariant sets, and we find (Proposition \ref{prop:basisGlobalSu2VisInvis} below) that this basis can be bi-partitioned such that the $\{B\}$ form an orthonormal basis for the visible space and that the $\{B^{\perp}\}$ form an orthonormal basis for the invisible space. 

\begin{observation} [\textbf{A Basis in terms of Fixed-$\mathds{1}$ Permutation-Invariant Sets}] \label{obs:BiAndBiPERPBasis}
    Consider a fixed-$\mathds{1}$ permutation-invariant set $ S = \{P_1, P_2, ... , P_{|S|}\}$ where we have temporarily dropped the subscript of $S$. We can use this set to define an operator $B_S$: 
    \begin{equation}
        \label{eqn:defineBi}
        B_S = \frac{1}{\sqrt{2^n |S|}} \sum_{j=1}^{|S|} P_j.
    \end{equation}
    In addition to the operator $B_S$, we can further define the following $|S| - 1$ operators:
    \begin{equation}
    \label{eqn:defineBiperp}
        B_{S,k}^{\perp} = \frac{1}{\sqrt{2^n k(k+1)}} \left( \sum_{j = 1}^k P_j - k P_{k+1}\right)
        \hspace{2mm}\text{ for } k = 1, ... , |S| - 1.
    \end{equation}
    The set
    \begin{eqnarray}
    \bigcup_{S} \,\{B_S, B_{S,1}^\perp, B_{S,2}^\perp,...,B_{S,|S|-1}^\perp\}
    \end{eqnarray}
    forms an orthonormal basis for the operator space $\mathcal{B}(\mathcal{H})$.
\end{observation}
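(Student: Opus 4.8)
The plan is to verify three things: (i) the total count of operators $B_S$ and $B_{S,k}^\perp$ equals $\dim \mathcal{B}(\mathcal{H}) = 4^n$; (ii) all of these operators are mutually orthonormal with respect to the Hilbert–Schmidt inner product $\langle A, B\rangle = \tr(A^\dagger B)$; and (iii) since an orthonormal set of the right cardinality in a finite-dimensional inner product space is automatically a basis, the claim follows. So the real content is just (i) and (ii).

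For (i), first I would observe that the fixed-$\mathds{1}$ permutation-invariant sets $S_{R,n_X,n_Y,n_Z}$ partition the set $\text{Pauli}(n)$ of all $4^n$ Pauli strings: every Pauli string $P$ lies in exactly one such set, namely the one determined by its identity support $R$ and its counts $n_X, n_Y, n_Z$ of each non-identity letter. Hence $\sum_S |S| = 4^n$. Each set $S$ contributes exactly $|S|$ operators to our collection — one $B_S$ plus $(|S|-1)$ operators $B_{S,k}^\perp$ — so the total is again $\sum_S |S| = 4^n$, matching the dimension. (One should note $|S|\ge 1$ always, so $B_S$ is always defined and the count is valid even for singleton $S$.)

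For (ii), I would split the verification into the within-a-set part and the across-sets part. Across distinct sets $S \ne S'$: every operator we build from $S$ is a real linear combination of Pauli strings in $S$, every operator built from $S'$ is a combination of Pauli strings in $S'$, these Pauli sets are disjoint, and distinct Pauli strings are Hilbert–Schmidt orthogonal (with $\tr(P^\dagger P) = 2^n$); hence all cross-set inner products vanish. Within a single set $S = \{P_1,\dots,P_{|S|}\}$: here I would invoke the standard fact that $B_S$ together with the $B_{S,k}^\perp$ is exactly the Gram–Schmidt / discrete-Haar-type orthonormal family associated to the $|S|$ orthogonal vectors $P_1/\sqrt{2^n},\dots,P_{|S|}/\sqrt{2^n}$: $B_S$ is the (normalized) uniform average, and $B_{S,k}^\perp$ is the (normalized) difference between the average of the first $k$ of them and the $(k+1)$st. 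A direct computation using $\tr(P_j^\dagger P_\ell) = 2^n \delta_{j\ell}$ gives $\tr(B_S^\dagger B_S) = 1$, $\tr\big((B_{S,k}^\perp)^\dagger B_{S,k}^\perp\big) = 1$, $\tr(B_S^\dagger B_{S,k}^\perp) = \frac{1}{2^n\sqrt{|S|\,k(k+1)}}\big(k - k\big)\cdot 2^n = 0$, and for $k < k'$, $\tr\big((B_{S,k}^\perp)^\dagger B_{S,k'}^\perp\big) = 0$ because $P_{k+1}$ contributes $-k\cdot 1$ against the $+1$ coefficient it carries in $B_{S,k'}^\perp$ while the remaining overlaps cancel. (All operators are Hermitian, so daggers can be dropped; I'd mention this once.)

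The main obstacle — really the only nontrivial bookkeeping — is the within-set orthogonality of the $B_{S,k}^\perp$ among themselves and against $B_S$: one has to track which Pauli strings appear with which coefficients in each operator and confirm the cancellations. This is the classical computation underlying the Helmert / Gram–Schmidt orthogonalization of the constant vector against the simplex, so it is routine, but it must be done carefully with the $k(k+1)$ normalization. Everything else (the partition of Pauli strings, the dimension count, cross-set orthogonality) is immediate. Finally I would conclude by citing the elementary linear-algebra fact that $4^n$ orthonormal vectors in the $4^n$-dimensional space $\mathcal{B}(\mathcal{H})$ form an orthonormal basis.
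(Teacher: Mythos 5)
Your proposal is correct and follows the same route as the paper's proof: count that the $B$'s and $B^\perp$'s number $4^n$ (one operator per Pauli string, since the sets $S$ partition $\text{Pauli}(n)$) and check Hilbert--Schmidt orthonormality, which the paper simply asserts and you verify explicitly via the Helmert/Gram--Schmidt cancellations. Your filled-in computations (norms equal to $1$, and the $k-k=0$ cancellations for $B_S$ against $B_{S,k}^\perp$ and for $B_{S,k}^\perp$ against $B_{S,k'}^\perp$ with $k<k'$) are all correct.
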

\begin{proof}
The collection of all $B$ and $B^{\perp}$ elements defined above form an orthonormal set under the Hilbert-Schmidt inner product. Moreover, since the set of Pauli strings can be partitioned into the fixed-$\mathds{1}$ permutation-invariant sets and since each set $S$ contributes $|S|$ operators, we must have $4^n$ orthonormal operators across the combined set of $B$'s and $B^{\perp}$'s. 
\end{proof}

\begin{proposition} [\textbf{A Basis for the Global $SU(2)$ Visible Space}] \label{prop:basisGlobalSu2VisInvis}
    For $\V = \{U^{\otimes n}\}_{U\in SU(2)}$, an orthonormal basis for $\text{\rm\textsf{VisibleSpace}}(\V)$ is
\begin{align}
\bigcup_S \{B_S\}
\end{align}
and an orthonormal basis for $\text{\rm\textsf{InvisibleSpace}}(\V)$ is
\begin{align}
    \bigcup_{S} \,\{B_{S,1}^\perp, B_{S,2}^\perp,...,B_{S,|S|-1}^\perp\}\,.
\end{align}
\end{proposition}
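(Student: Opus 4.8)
The plan is to exploit Observation~\ref{obs:BiAndBiPERPBasis}, which already establishes that $\{B_S\}\cup\{B_{S,k}^\perp\}$ is an orthonormal basis of $\mathcal{B}(\mathcal{H})$. Since $\mathcal{B}(\mathcal{H})=\textsf{VisibleSpace}(\V)\oplus\textsf{InvisibleSpace}(\V)$ is also an orthogonal decomposition, it is enough to prove the two containments $\mathrm{span}\{B_S\}\subseteq\textsf{VisibleSpace}(\V)$ and $\mathrm{span}\{B_{S,k}^\perp\}\subseteq\textsf{InvisibleSpace}(\V)$; a dimension count (both sides total $4^n$) then upgrades these to equalities. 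The engine for both is a single expansion. For $V=U^{\otimes n}$, writing $\ketbra{b_i}{b_i}=\tfrac{1}{2}(\mathds{1}+(-1)^{b_i}\sigma_z)$, conjugating, and distributing the tensor product gives $V^\dagger\ketbra{b}{b}V=\bigotimes_i\tfrac{1}{2}(\mathds{1}+(-1)^{b_i}\vec m\cdot\vec\sigma)$, where $\vec m=\vec m(U)\in S^2$ is the Bloch vector with $U^\dagger\sigma_z U=\vec m\cdot\vec\sigma$. Grouping the resulting Pauli strings by their identity-support $R$ and letter counts $(k_X,k_Y,k_Z)$ --- each such group being precisely a fixed-$\mathds{1}$ permutation-invariant set $S_{R,k_X,k_Y,k_Z}$ --- turns this into $V^\dagger\ketbra{b}{b}V=2^{-n}\sum_{R}(-1)^{\sum_{i\notin R}b_i}\sum_{k_X+k_Y+k_Z=n-|R|}m_x^{k_X}m_y^{k_Y}m_z^{k_Z}\sqrt{2^n|S_{R,k_X,k_Y,k_Z}|}\,B_{S_{R,k_X,k_Y,k_Z}}$.

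For the invisible direction, I would run the analogous per-site computation on a bare Pauli string: $\bra{b}VPV^\dagger\ket{b}=\prod_i\bra{b_i}UP_iU^\dagger\ket{b_i}$, which factorizes into a sign $(-1)^{\sum_{i\notin R}b_i}$ (with $R$ the identity-support of $P$) times a product of entries of the $SO(3)$ rotation implementing $U$ that depends on $P$ only through its letter counts. Hence $\bra{b}VPV^\dagger\ket{b}$ is constant as $P$ ranges over a fixed set $S_{R,k_X,k_Y,k_Z}$. Since $B_{S,k}^\perp\propto\sum_{j=1}^k P_j - kP_{k+1}$ with all $P_j$ drawn from one such set $S$, these values cancel and $\bra{b}VB_{S,k}^\perp V^\dagger\ket{b}=0$ for every $V\in\V$ and $b$, i.e.\ $B_{S,k}^\perp\in\textsf{InvisibleSpace}(\V)$.

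For the visible direction, the expansion immediately yields $\textsf{VisibleSpace}(\V)=\mathrm{span}\{V^\dagger\ketbra{b}{b}V\}\subseteq\mathrm{span}\{B_S\}$; what remains is that each $B_S$ is genuinely reachable. I would prove this by showing the coefficient functions $(\vec m,b)\mapsto(-1)^{\sum_{i\notin R}b_i}m_x^{k_X}m_y^{k_Y}m_z^{k_Z}$ appearing in the expansion are linearly independent over all valid tuples $(R,k_X,k_Y,k_Z)$, as $U$ ranges over $SU(2)$ (so $\vec m$ sweeps all of $S^2$) and $b$ over $\{0,1\}^n$. Equivalently, any $A\in\textsf{InvisibleSpace}(\V)$ obeys $0=\tr(A\,V^\dagger\ketbra{b}{b}V)=(\text{const})\sum_S f_S(\vec m,b)\,\tr(AB_S)$ for all $U,b$, so linear independence of the $f_S$ forces $\tr(AB_S)=0$ for every $S$, whence $A\in\mathrm{span}\{B_{S,k}^\perp\}$. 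The independence splits into two standard facts: distinct identity-supports $R$ yield distinct, hence linearly independent, characters $b\mapsto\prod_{i\notin R}(-1)^{b_i}$ of $(\mathbb{Z}/2)^n$; and for fixed $R$ --- hence fixed total degree $d=n-|R|$ --- the monomials $m_x^{k_X}m_y^{k_Y}m_z^{k_Z}$ of degree $d$ are linearly independent as functions on $S^2$, since a nonzero homogeneous polynomial cannot vanish on the whole sphere. Together with Observation~\ref{obs:BiAndBiPERPBasis} for orthonormality, this gives $\textsf{VisibleSpace}(\V)=\mathrm{span}\{B_S\}$ and $\textsf{InvisibleSpace}(\V)=\mathrm{span}\{B_{S,k}^\perp\}$.

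I expect the main obstacle to lie in the visible direction: first, keeping the Pauli-expansion bookkeeping exact (verifying that each group of terms assembles into the correct $S_{R,k_X,k_Y,k_Z}$ with the right normalization), and second, the linear-independence argument, which is precisely what turns the cheap inclusion $\textsf{VisibleSpace}(\V)\subseteq\mathrm{span}\{B_S\}$ into an equality. By comparison, invisibility of the $B_{S,k}^\perp$ is a short cancellation once the factorization of $\bra{b}VPV^\dagger\ket{b}$ and the permutation-invariance of each $S$ are in place.
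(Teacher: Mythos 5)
Your proof is correct. The invisible-space half is essentially identical to the paper's argument: you show $\bra{b}V P V^\dagger\ket{b}$ factorizes site-by-site into a sign times a product of rotation-matrix entries depending only on the identity support and the letter counts of $P$, so it is constant over each fixed-$\mathds{1}$ permutation-invariant set $S$, and the combination $\sum_{j=1}^k P_j - kP_{k+1}$ cancels. (The paper does the same computation via the explicit Euler parameterization of $SU(2)$.) The visible-space half, however, is a genuinely different route. The paper proves it indirectly: it first derives the full global $SU(2)$ measurement channel $\mathcal{M}_{SU(2)}(\rho)=\sum_{S,S'}c_{S,S'}\tr(B_{S'}\rho)B_S$ (Proposition~\ref{prop:globalsu2mmtchannel}, a substantial computation involving moments of $[V^\dagger Z V]^{\otimes 2k}$ over $SU(2)$) and then invokes Observation~\ref{obs:mChEquivalenceVIS} to identify the visible space with the image of $\mathcal{M}$. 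You instead expand $V^\dagger\ketbra{b}{b}V$ directly in the $B_S$ basis and show the coefficient functions $(\vec m,b)\mapsto(-1)^{\sum_{i\notin R}b_i}m_x^{k_X}m_y^{k_Y}m_z^{k_Z}$ are linearly independent, using independence of characters of $(\mathbb{Z}/2)^n$ together with the fact that a nonzero homogeneous polynomial of fixed degree cannot vanish on $S^2$ (the fixed-degree restriction is exactly what makes the sphere argument work, and it holds here because the degree is pinned by $|R|$ while distinct $R$'s are already separated by the characters). Your version is more elementary and self-contained --- it requires no integration over $SU(2)$ and no channel formalism --- and it directly establishes both inclusions needed for equality, whereas the paper's route gets the channel computation as a byproduct that it needs elsewhere anyway (e.g.\ to define $\mathcal{K}^O_{\textnormal{CS}}$ for global $SU(2)$). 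One could also note that your argument makes explicit a point the paper leaves implicit: that the image of $\mathcal{M}_{SU(2)}$ is \emph{all} of $\mathrm{span}\{B_S\}$ rather than a proper subspace of it.
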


\noindent We will prove this proposition in Subsection~\ref{sec:ProofOfVisInvisBasis}.

\vspace{4mm}
\begin{adjustwidth}{1cm}{}
\noindent\textit{Example: $n = 2$.}
\newline 
\newline
Consider the global $SU(2)$ visible space for a 2-qubit system. In this setting, there are $13$ orthonormal basis elements for this space corresponding to $13$ distinct $S$'s.  We will just label the $B_S$'s by $B_i$ for $i=1,...,13$ to avoid cluttered notation.  We adopt a similar convention for the $B_{S,j}^\perp$ operators, labelling them as $B_{i,j}^\perp$.  Following Observation~\ref{obs:BiAndBiPERPBasis}, the $B_i$'s are as follows:
\begin{align*}
B_1&=\frac{1}{2}\mathds{1}\otimes\mathds{1} &  
B_2&=\frac{1}{2}\mathds{1}\otimes X &  
B_3&=\frac{1}{2}X\otimes\mathds{1} &  
B_{4}&=\frac{1}{2\sqrt{2}}(X\otimes Y+Y\otimes X)\\
B_5&= \frac{1}{2} X\otimes X &  
B_6&=\frac{1}{2}\mathds{1}\otimes Y  &  
B_7&=\frac{1}{2}Y\otimes\mathds{1}  &  
B_{8}&=\frac{1}{2\sqrt{2}}(Y\otimes Z+Z\otimes Y)\\
B_9&= \frac{1}{2} Y\otimes Y &  
B_{10}&=\frac{1}{2}\mathds{1}\otimes Z &  
B_{11}&=\frac{1}{2}Z\otimes\mathds{1}  &  
B_{12}&=\frac{1}{2\sqrt{2}}(Z\otimes X+X\otimes Z) \\
B_{13}&= \frac{1}{2} Z\otimes Z
\end{align*}

\noindent 
Since the visible and invisible spaces bipartition the operator space, the invisible space is the complement of the invisible space. Using the invisible space definition above, those remaining strings are as follows:
\begin{align*}
\indent\indent B_{11,1}^{\perp}&= \frac{1}{2\sqrt{2}}(X\otimes Y-Y\otimes X)           &  
B_{12,1}^{\perp}&= \frac{1}{2\sqrt{2}}(Y\otimes Z-Z\otimes Y)              &  
B_{13,1}^{\perp}&= \frac{1}{2\sqrt{2}}(Z\otimes X-X\otimes Z)          &  
\end{align*}
\end{adjustwidth}

\begin{proposition}[\textbf{\textit{Size of Global $SU(2)$ Visible Space}}]
    For an $n$-qubit system, we have
    \begin{align}
    \dim \textnormal{\textsf{VisibleSpace}}(\text{\rm Global }SU(2)) = \frac{1}{8} 2^n( n^2 + 7n + 8)\,.
    \end{align}
\end{proposition}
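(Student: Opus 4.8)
The plan is to count the orthonormal basis elements $\{B_S\}$ of $\textsf{VisibleSpace}(\text{Global }SU(2))$ described in Proposition~\ref{prop:basisGlobalSu2VisInvis}. By that proposition, the dimension equals the number of distinct fixed-$\mathds{1}$ permutation-invariant sets $S = S_{R,n_X,n_Y,n_Z}$. Since such a set is determined by the data $(R, n_X, n_Y, n_Z)$ with $|R| + n_X + n_Y + n_Z = n$, and since permuting the non-identity slots does not change the set, the count depends only on the number $m := n_X + n_Y + n_Z$ of non-identity sites (not which sites, since $R$ is then forced to have size $n - m$, but $R$ itself \emph{is} a choice of which sites are identities). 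So first I would reorganize the sum: for each $m$ from $0$ to $n$, choose which $m$ of the $n$ sites carry non-identity Paulis ($\binom{n}{m}$ ways to pick the complement of $R$), and then choose the multiset type $(n_X, n_Y, n_Z)$ with $n_X + n_Y + n_Z = m$.

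The second step is to count the number of triples $(n_X, n_Y, n_Z) \in \mathbb{Z}_{\geq 0}^3$ with $n_X + n_Y + n_Z = m$: this is the standard stars-and-bars count $\binom{m+2}{2} = \tfrac{(m+1)(m+2)}{2}$. Hence
\begin{equation}
\dim \textsf{VisibleSpace} = \sum_{m=0}^{n} \binom{n}{m}\,\frac{(m+1)(m+2)}{2}.
\end{equation}
The third step is to evaluate this sum in closed form. I would expand $(m+1)(m+2) = m^2 + 3m + 2$ and use the three standard binomial moment identities $\sum_m \binom{n}{m} = 2^n$, $\sum_m m\binom{n}{m} = n 2^{n-1}$, and $\sum_m m^2 \binom{n}{m} = n(n+1)2^{n-2}$. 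Substituting gives $\tfrac12\big(n(n+1)2^{n-2} + 3n 2^{n-1} + 2\cdot 2^n\big)$, and factoring out $2^{n-2}$ yields $\tfrac{2^{n-2}}{2}\big(n(n+1) + 6n + 8\big) = \tfrac{2^{n-3}}(n^2 + 7n + 8) \cdot 2 = \tfrac{1}{8}2^n(n^2 + 7n + 8)$, matching the claim.

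I do not expect a serious obstacle here; this is essentially a bookkeeping computation once Proposition~\ref{prop:basisGlobalSu2VisInvis} is granted. The one point requiring a little care is making sure the indexing is right: a fixed-$\mathds{1}$ permutation-invariant set is genuinely determined by $(R, n_X, n_Y, n_Z)$ and distinct such tuples give distinct sets, so there is no overcounting, and $R$ must be summed over all subsets of the chosen size — this is where the $\binom{n}{m}$ (rather than a single term per $m$) enters. It is also worth a sanity check at $n = 2$: the formula gives $\tfrac{1}{8}\cdot 4 \cdot (4 + 14 + 8) = \tfrac{1}{8}\cdot 4 \cdot 26 = 13$, consistent with the thirteen basis elements $B_1, \dots, B_{13}$ exhibited in the worked example above.
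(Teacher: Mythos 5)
Your proposal is correct and follows essentially the same route as the paper: both count the fixed-$\mathds{1}$ permutation-invariant sets by choosing the identity sites binomially and counting the compositions $(n_X,n_Y,n_Z)$ by stars and bars, the only difference being that you index by the number $m$ of non-identity sites while the paper indexes by $n_I = n - m$. You additionally spell out the closed-form evaluation via binomial moment identities, which the paper leaves implicit (aside from a harmless stray ``$\cdot\,2$'' in your intermediate line, the algebra checks out).
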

\begin{proof}
Since each visible basis element $B$ is defined in terms of a fixed-$\mathds{1}$ permutation-invariant set $S$, the size of the visible space is the number of unique fixed-$\mathds{1}$ permutation-invariant sets.  Thus we have
\begin{eqnarray}
    \nonumber
    \dim \textnormal{\textsf{VisibleSpace}}(\text{\rm Global }SU(2)) & = & \text{number of unique sets $S$} \\
    \nonumber
    & = & \sum_{n_I = 0}^n (\#\text{ of ways to place $n_I$ $\mathds{1}$s)} \times (\#\text{ of ways to place $n - n_I$ non-$\mathds{1}$ Paulis)} \\
    \nonumber
    & = & \sum_{n_I = 0}^n \binom{n}{n_I}  \binom{n - n_I + 2}{n - n_I} \\
    \nonumber
    & = & \frac{1}{8} 2^n( n^2 + 7n + 8)\,.
\end{eqnarray}
\end{proof}

\subsection{\label{sec:su2MChannel}Global $SU(2)$ measurement channel}

Now we will derive the global $SU(2)$ measurement channel. Not only does it provide a convenient choice of $\mathcal{K}^O$ (as per Definition \ref{def:shadowversionofK}), but it will also help us prove Proposition \ref{prop:basisGlobalSu2VisInvis}, which establishes orthonormal bases for the for the global $SU(2)$ visible and invisible spaces. 
Before considering the specific case of global $SU(2)$, we will start with the original definition of the measurement channel $\mathcal{M}$ in the context of classical shadow tomography, and put it into a form that will help us derive $\mathcal{M}$ for global $SU(2)$.

\begin{lemma}\label{lem:NewFormMchannel}
    Consider a probability density function $p(V)$ over $\V$. Then the associated measurement channel \cite{huang2020predicting}
    can be written as follows:
    \begin{equation}\label{eqn:Mchannel3}
        \mathcal{M}(\rho)  = \text{\rm tr}_{B}[O_{AB}^{\V}(\mathds{1} \otimes \rho)]\,
    \end{equation}
    where $O_{AB}^{\V} \in \mathcal{B}(\mathcal{H}_A \otimes \mathcal{H}_B)$ for $\mathcal{H}_A \simeq \mathcal{H}_B \simeq \mathbb{C}^{2^n}$ is
    \begin{equation}\label{eqn:OAB}
        O_{AB}^{\V} := \int_{V \in \V} dV \, p(V)  \hspace{2mm} \frac{1}{2^n} \sum_{k=0}^{n} \frac{1}{k! (n-k)!} \sum_{\pi \in S_n} V^{\dagger}\pi(\mathds{1}^{\otimes n-k}Z^{\otimes k}) \pi^\dagger V\otimes V^{\dagger} \pi(\mathds{1}^{\otimes n-k}Z^{\otimes k}) \pi^\dagger V
    \end{equation}
    and the $\pi : (\mathbb{C}^2)^{\otimes n} \to (\mathbb{C}^2)^{\otimes n}$ are representations of the permutation group $S_n$.
\end{lemma}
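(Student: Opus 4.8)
The plan is to strip away the conjugations by $V$ and reduce the statement to a single $V$-independent operator identity on $\mathcal{H}_A\otimes\mathcal{H}_B$, which then follows from a short Pauli expansion. Write $[n]=\{1,\dots,n\}$, and for $R\subseteq[n]$ let $Z_R$ denote the Pauli string with a $Z$ on each site of $R$ and $\mathds{1}$ elsewhere; note that $\mathds{1}^{\otimes n-k}Z^{\otimes k}=Z_{R_0}$ for $R_0=\{n-k+1,\dots,n\}$ and that $\pi Z_{R_0}\pi^{\dagger}=Z_{\pi(R_0)}$ under the permutation representation. \textbf{Step 1 (recast $\mathcal{M}$ as a partial trace).} Starting from $\mathcal{M}(\rho)=\E_{V\sim p}\sum_b V^{\dagger}\ketbra{b}{b}V\,\bra{b}V\rho V^{\dagger}\ket{b}$ (from \cite{huang2020predicting}) and the elementary identity $\tr_B[(A_1\otimes A_2)(\mathds{1}\otimes\rho)]=A_1\,\tr(A_2\rho)$ with $A_1=A_2=V^{\dagger}\ketbra{b}{b}V$, one obtains
\[
\mathcal{M}(\rho)=\tr_B\!\Big[\Big(\E_{V\sim p}\sum_b (V^{\dagger}\ketbra{b}{b}V)^{\otimes 2}\Big)(\mathds{1}\otimes\rho)\Big].
\]
Comparing with \eqref{eqn:Mchannel3}, it suffices to show $O_{AB}^{\V}=\E_{V\sim p}\sum_b (V^{\dagger}\ketbra{b}{b}V)^{\otimes 2}$. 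Since this and the defining expression \eqref{eqn:OAB} are both of the form $\E_{V\sim p}(V^{\dagger}\otimes V^{\dagger})\,X\,(V\otimes V)$, it is enough to prove the $V$-free identity
\[
\sum_{b\in\{0,1\}^n}\ketbra{b}{b}^{\otimes 2}=\frac{1}{2^n}\sum_{k=0}^{n}\frac{1}{k!\,(n-k)!}\sum_{\pi\in S_n}\big[\pi(\mathds{1}^{\otimes n-k}Z^{\otimes k})\pi^{\dagger}\big]^{\otimes 2}.
\]

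\textbf{Step 2 (Pauli-expand and sum over $b$).} Write $\ketbra{b}{b}=\bigotimes_{i=1}^n\tfrac12(\mathds{1}+(-1)^{b_i}Z)=\tfrac{1}{2^n}\sum_{R\subseteq[n]}(-1)^{\sum_{i\in R}b_i}Z_R$. Expanding the two tensor copies,
\[
\sum_b\ketbra{b}{b}^{\otimes 2}=\frac{1}{4^n}\sum_{R,R'\subseteq[n]}\Big(\sum_{b\in\{0,1\}^n}(-1)^{\sum_{i\in R\triangle R'}b_i}\Big)\,Z_R\otimes Z_{R'}.
\]
The inner sum equals $2^n$ if $R=R'$ and $0$ otherwise, so only the diagonal terms survive and the whole expression collapses to $\frac{1}{2^n}\sum_{R\subseteq[n]}Z_R^{\otimes 2}$.

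\textbf{Step 3 (subset sum as a permutation average) and conclusion.} Group by cardinality: $\sum_{R\subseteq[n]}Z_R^{\otimes 2}=\sum_{k=0}^{n}\sum_{|R|=k}Z_R^{\otimes 2}$. Since $\pi\mapsto\pi(R_0)$ maps $S_n$ onto the size-$k$ subsets with each fiber of size $k!(n-k)!$ (the stabilizer $S_k\times S_{n-k}$ of the partition $\{R_0,[n]\setminus R_0\}$), we get $\sum_{\pi\in S_n}[\pi(\mathds{1}^{\otimes n-k}Z^{\otimes k})\pi^{\dagger}]^{\otimes 2}=k!(n-k)!\sum_{|R|=k}Z_R^{\otimes 2}$. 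Solving for $\sum_{|R|=k}Z_R^{\otimes2}$ and summing over $k$ reproduces the right-hand side of the identity in Step 1; feeding it back through Steps 1--2 yields \eqref{eqn:Mchannel3}.

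\textbf{Expected obstacle.} Conceptually there is no deep difficulty here; the only thing requiring care is the combinatorial bookkeeping --- the symmetric-difference cancellation $\sum_b(-1)^{\sum_{i\in R\triangle R'}b_i}=2^n\delta_{R,R'}$ in Step 2 and the multiplicity $k!(n-k)!$ in Step 3. A useful consistency check is $n=1$, where the identity reads $\ketbra{0}{0}^{\otimes 2}+\ketbra{1}{1}^{\otimes 2}=\tfrac12(\mathds{1}\otimes\mathds{1}+Z\otimes Z)$.
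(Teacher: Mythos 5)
Your proof is correct and follows essentially the same route as the paper: rewrite $\mathcal{M}(\rho)$ as a partial trace against $\E_{V}\sum_b (V^{\dagger}\ketbra{b}{b}V)^{\otimes 2}$, then expand $\sum_b \ketbra{bb}{bb}$ in $\mathds{1}/Z$ Pauli strings and reorganize the subset sum as a permutation average with multiplicity $k!(n-k)!$. The only difference is that you spell out the symmetric-difference cancellation and the fiber-counting explicitly, where the paper asserts the $n$-qubit identity by extending the single-qubit case.
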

\begin{proof} By basic tensorial manipulations, we can write
\begin{eqnarray}
\label{eqn:Mchannel1}
\mathcal{M}(\rho) & = & \int_{V \in \V} dV \, p(V) \sum_{b} V^{\dagger}\ketbra{b}{b} V \bra{b}V\rho V^{\dagger}\ket{b} \\
& = & \int_{V \in \V} dV \, p(V) \sum_{b} \Tr_{B}[(V^{\dagger}\ketbra{b}{b} V \otimes V^{\dagger}\ketbra{b}{b} V)(\mathds{1} \otimes \rho)] \\
& = & \Tr_{B}[Q(\mathds{1} \otimes \rho)]\,,\label{eqn:B11}
\end{eqnarray}
where Q is defined as
\begin{equation}\label{eqn:intermediateOabcsha}
    Q
    = 
    \int_{V \in \V} dV \, p(V)  \hspace{2mm} V^{\dagger \otimes 2} \hspace{1mm}  \sum_{b} \ketbra{bb}{bb} \hspace{1mm} V^{\otimes 2},
\end{equation}

It suffices to show that $Q$  takes the form of $O_{AB}^{\V}$ in~\eqref{eqn:OAB}. 
In addition to using the computational basis states $\ket{b}$, this expression for Q can also be written in terms of the $n$-qubit permutation group $S_n$. Notice that for a single qubit we have $\sum_{b \in \{0,1\}}\ketbra{bb}{bb} = \frac{1}{2} (\mathds{1}\mathds{1} + ZZ)$. This can be extended to $n$ qubits, where the expression becomes a sum of all length-$n$ $\mathds{1}$,$Z$ strings tensored twice:
\begin{equation}
    \sum_{b} \ketbra{bb}{bb} = \frac{1}{2^n} \sum_{k=0}^{n} \frac{1}{k! (n-k)!} \sum_{\pi \in S_n} \pi(\mathds{1}^{\otimes n-k}Z^{\otimes k}) \pi^\dagger \otimes\pi(\mathds{1}^{\otimes n-k}Z^{\otimes k}) \pi^\dagger\,.
\end{equation}
Substituting this into~\eqref{eqn:intermediateOabcsha}, we find the desired expression 
\begin{equation}
Q  =  \int_{V \in \V} dV \, p(V)  \hspace{2mm} \frac{1}{2^n} \sum_{k=0}^{n} \frac{1}{k! (n-k)!} \sum_{\pi \in S_n} V^{\dagger}\pi(\mathds{1}^{\otimes n-k}Z^{\otimes k}) \pi^\dagger V\otimes V^{\dagger} \pi(\mathds{1}^{\otimes n-k}Z^{\otimes k}) \pi^\dagger V,
\end{equation}
which matches~\eqref{eqn:OAB}.
\end{proof}

The above lemma shows us that the structure of our measurement channel $\mathcal{M}$ can be encapsulated by an operator $O_{AB}^{\V}$ acting on a $2n$-qubit system, where the first $n$-qubit subsystem is notated as $A$ and the second subsystem is notated as $B$.
Thus, the problem of finding an explicit expression for the $\mathcal{M}$ channel corresponds to finding an explicit expression for $O_{AB}^{\V}$.
Indeed, when we find an expression for the global $SU(2)$ measurement channel below, we will (i) explicitly evaluate $O_{AB}^{\V}$ and (ii) substitute the newly-derived $O_{AB}^{\V}$ expression into $\mathcal{M}(\rho)  = \Tr_{B}[O_{AB}^{\V}(\mathds{1} \otimes \rho)]$. 


\begin{proposition}[\textbf{\textit{Global $SU(2)$ Measurement Channel}}]\label{prop:globalsu2mmtchannel}
    Assume that $p(V)$ is uniform over $\V$. The global $SU(2)$ measurement channel can be written as
    \begin{equation}
        \label{eqn:su2Mch}
        \mathcal{M}_{SU(2)}(\rho) =
        \sum_{S,S'} c_{S,S'} \,\text{\rm tr}(B_{S'} \rho)\, B_{S}
    \end{equation}
    where $B_S$ is the operator defined in Observation~\ref{obs:BiAndBiPERPBasis} (see Appendix~\ref{sec:su2Visible}), and the coefficients $c_{S,S'}$ are defined below.  Let $S = S_{R, n_X, n_Y, n_Z}$ and $S' = S_{R', n_X', n_Y', n_Z'}$.  Then
    \begin{equation}\label{eqn:ctensor}
        c_{S,S'} = \begin{cases} 
      2 \frac{1}{\sqrt{n_X!\hspace{1mm} n_Y!\hspace{1mm} n_Z!\hspace{1mm} n_X'!\hspace{1mm} n_Y'!\hspace{1mm} n_Z'!}} 
        \hspace{1mm} \frac{(n_X + n_X')!(n_Y + n_Y')!(n_Z + n_Z')!}{(\frac{n_X + n_X'}{2})!(\frac{n_Y + n_Y'}{2})!(\frac{n_Z + n_Z'}{2})!}
        \hspace{1mm} \frac{K!(K+1)!}{(2K+2)!}  
        & \textnormal{conditions (a), (b)}
         \\
      0 & \textnormal{otherwise}
    \end{cases}
    \end{equation}
    where $K := \frac{1}{2} (n_X + n_Y + n_Z + n_X' + n_Y' + n_Z')$, and the two conditions are defined as follows: (a) $n_{\alpha} + n_{\alpha}' \text{ is even for all }\alpha \in \{X,Y,Z\}$, and (b) $R = R'$.
\end{proposition}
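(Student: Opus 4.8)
The plan is to start from the reformulation of the measurement channel in Lemma~\ref{lem:NewFormMchannel} and carry out the Haar average explicitly for $\V=\{U^{\otimes n}\}_{U\in SU(2)}$. Since here $V^{\otimes2}=U^{\otimes2n}$, the operator $Q$ of \eqref{eqn:intermediateOabcsha} becomes $Q=\int dU\,U^{\dagger\otimes2n}\big(\sum_b\ketbra{bb}{bb}\big)U^{\otimes2n}$ with $\int dU$ the Haar integral. Using $\sum_{b\in\{0,1\}}\ketbra{bb}{bb}=\tfrac12(\mathds{1}\mathds{1}+ZZ)$ on each pair consisting of qubit $i$ of the first copy and qubit $i$ of the second, expand $\sum_b\ketbra{bb}{bb}=\tfrac1{2^n}\sum_{T\subseteq[n]}\big(\bigotimes_{i\in T}Z_iZ_{n+i}\big)$, with identities on the complement. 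The crucial point is that conjugating a single-qubit Pauli by $U$ is the vector (adjoint $SO(3)$) representation, $U^\dagger ZU=\sum_\alpha R_{z\alpha}(U)\,\sigma_\alpha$ with $R(U)\in SO(3)$, and — because the control is \emph{global} — the \emph{same} matrix $R(U)$ acts on both copies. Hence $U^{\dagger\otimes2n}\big(\bigotimes_{i\in T}Z_iZ_{n+i}\big)U^{\otimes2n}=\bigotimes_{i\in T}\big(\sum_{\alpha,\beta}R_{z\alpha}R_{z\beta}\,\sigma_\alpha\otimes\sigma_\beta\big)$, so every surviving group-element factor involves only the third row $\vec v:=(R_{zx},R_{zy},R_{zz})$ of $R(U)$.

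Next I would push the Haar measure forward: as $U$ runs over $SU(2)$, $R(U)$ is Haar on $SO(3)$ and its third row $\vec v$ is uniform on $S^2$; since all matrix elements in $Q$ lie in that single row, the integral collapses to $Q=\tfrac1{2^n}\sum_T\Expect_{\vec v\sim S^2}\big[\bigotimes_{i\in T}(\vec v\cdot\vec\sigma)_i\otimes(\vec v\cdot\vec\sigma)_{n+i}\big]$. Taking the partial trace as in \eqref{eqn:Mchannel3} then gives
\[
\mathcal{M}_{SU(2)}(\rho)=\frac1{2^n}\sum_{T\subseteq[n]}\Expect_{\vec v\sim S^2}\!\left[\Big(\bigotimes_{i\in T}(\vec v\cdot\vec\sigma)_i\Big)\,\tr\!\Big(\big(\bigotimes_{i\in T}(\vec v\cdot\vec\sigma)_i\big)\rho\Big)\right],
\]
with identities understood on $[n]\setminus T$. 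Expanding $\bigotimes_{i\in T}(\vec v\cdot\vec\sigma)_i$ over Pauli strings supported exactly on $T$ and grouping by the numbers $n_X,n_Y,n_Z$ of $X,Y,Z$ factors, each group is precisely a fixed-$\mathds{1}$ permutation-invariant set $S_{[n]\setminus T,\,n_X,n_Y,n_Z}$; invoking $\sum_{P\in S}P=\sqrt{2^n|S|}\,B_S$ from Observation~\ref{obs:BiAndBiPERPBasis} and $|S_{[n]\setminus T,\vec n}|=\binom{|T|}{n_X,n_Y,n_Z}$ rewrites $\mathcal{M}_{SU(2)}$ in the $\{B_S\}$ basis. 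Because the Pauli outside the trace and the one inside share the support $T$, every pair $(S,S')$ that appears has identity set $R=R'=[n]\setminus T$ — this is condition (b), and it forces $n_X+n_Y+n_Z=n_X'+n_Y'+n_Z'=:K$. Reading off the coefficient leaves $c_{S,S'}=\sqrt{|S|\,|S'|}\;\Expect_{\vec v\sim S^2}\big[v_x^{\,n_X+n_X'}v_y^{\,n_Y+n_Y'}v_z^{\,n_Z+n_Z'}\big]$.

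To finish I would evaluate the spherical moment with the standard Beta/Gamma identity $\Expect_{\vec v\sim S^2}[v_x^{2a_1}v_y^{2a_2}v_z^{2a_3}]=\tfrac{2\,(a+1)!}{(2a+2)!}\prod_j\tfrac{(2a_j)!}{a_j!}$, $a=a_1+a_2+a_3$, which vanishes unless all three exponents are even — exactly condition (a). Substituting $a_j=(n_{\alpha_j}+n_{\alpha_j}')/2$, $a=K$, together with $|S|=K!/(n_X!\,n_Y!\,n_Z!)$ and $|S'|=K!/(n_X'!\,n_Y'!\,n_Z'!)$, and simplifying, reproduces the claimed expression for $c_{S,S'}$. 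The main obstacle is not a single conceptual step but the bookkeeping of this last stage: one must track the $2^{-n}$ prefactor, the two normalization factors $\sqrt{2^n|S|}$ from passing to the $B_S$ basis, the multinomial counts $|S|$, and the closed form of the $S^2$ moment, and check that they collapse to precisely $c_{S,S'}$; a secondary point needing care is justifying the push-forward of Haar measure to $S^2$ and confirming that no other moments of $R(U)$ (off-row entries, or Levi-Civita contributions specific to $SO(3)$) ever enter, which holds here only because every factor involves the single row $\vec v$.
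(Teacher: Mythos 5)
Your proposal is correct and follows essentially the same route as the paper's proof: both start from the reformulation in Lemma~\ref{lem:NewFormMchannel}, expand $\sum_b \ketbra{bb}{bb}$ into $\mathds{1}/Z$ Pauli strings, reduce the Haar average of $[V^\dagger Z V]^{\otimes 2k}$ to moments of the rotated $Z$-axis, and regroup the result into $B_S \otimes B_{S'}$ with multinomial normalization factors. The only (cosmetic) difference is that you evaluate the moment by pushing Haar measure forward to the uniform distribution on $S^2$ and quoting the closed-form spherical-moment identity, whereas the paper parameterizes $SU(2)$ with Euler angles and integrates directly — the two computations are identical, and your bookkeeping of $\sqrt{|S|\,|S'|}$ and the $2^{-n}$ prefactor reproduces \eqref{eqn:ctensor} exactly.
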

\begin{proof}
Evaluating the measurement channel for the global $SU(2)$ ensemble has two parts: (i) evaluating $O_{AB}^{\V} = O_{AB}^{\textnormal{Global } SU(2)}$ and (ii) substituting the newly-derived expression for $O_{AB}^{\textnormal{Global }SU(2)}$ into our expression for the $\mathcal{M}$ channel. First, we will use our knowledge of the global $SU(2)$ ensemble to evaluate $O_{AB}^{\textnormal{Global }SU(2)}$. We will start from~\eqref{eqn:OAB} in Lemma \ref{lem:NewFormMchannel}:
\begin{equation} \label{eqn:intermediateOAB}
    O_{AB}^{\textnormal{Global }SU(2)}  =
    \mathbb{E}_{V \sim SU(2)}  \hspace{2mm} \frac{1}{2^n} \sum_{k=0}^{n} \frac{1}{k! (n-k)!} \sum_{\pi \in S_n} \pi(\mathds{1}^{\otimes n-k}\otimes[V^{\dagger}ZV]^{\otimes k}) \pi^\dagger \otimes \pi(\mathds{1}^{\otimes n-k}\otimes[V^{\dagger}ZV]^{\otimes k}) \pi^\dagger\,.
\end{equation}
We can simplify this $2k$-qubit $O_{AB}^{\textnormal{Global }SU(2)}$ expression by evaluating $\mathbb{E}_{V \sim SU(2)} [V^{\dagger}ZV]^{\otimes 2k}$. Using the Euler parameterization for $SU(2)$, we write $V = V(\theta, \phi, \psi)$, where $V(\theta, \phi, \psi) = e^{i\sigma_3 \frac{\phi}{2}} e^{i\sigma_2 \frac{\theta}{2}} e^{i\sigma_3 \frac{\psi}{2}}$. Under an $SU(2)$ rotation the $Z$ Pauli transforms as $Z \rightarrow VZV^\dagger = \cos\theta \hspace{1mm} Z - \sin\theta (\cos\phi \hspace{1mm} X - \sin\phi \hspace{1mm} Y)$.  We write
\begin{eqnarray}
    \label{eqn:preexpansion2k}
    \mathbb{E}_{V \sim SU(2)} [V^{\dagger}ZV]^{\otimes 2k} 
    & = &  \mathbb{E}_{V(\theta, \phi, \psi) \sim SU(2)} [ (-\sin\theta \cos\phi) \hspace{1mm} X + (\sin\theta \sin\phi) \hspace{1mm} Y + (\cos\theta) \hspace{1mm} Z]^{\otimes 2k} 
    \\ \label{eqn:vXYZallform}
    & = & \sum_{\{n_X,n_Y,n_Z\, : \, n_X + n_Y + n_Z = 2k\}} v(n_X,n_Y,n_Z) \left[\sum_{\pi \in S_{2k}} \pi(X^{\otimes n_X} \otimes Y^{\otimes n_Y} \otimes Z^{\otimes n_Z})\pi^\dagger\right]\,.
\end{eqnarray}
By expanding the tensor $[ \,\cdot\, ]^{\otimes 2k}$ on the right side of~\eqref{eqn:preexpansion2k}, we find permutations of Pauli strings with $n_X$ $X$'s, $n_Y$ $Y$'s, and $n_Z$ $Z$'s such that $n_X+n_Y+n_Z=2k$. Notice that the bracketed term in~\eqref{eqn:vXYZallform} is just a sum of these Pauli strings. And we absorbed all constants into the function $v(n_X,n_Y,n_Z)$:
\begin{equation} \label{eqn:defofvnXnYnZ}
    v(n_X,n_Y,n_Z) = \frac{1}{n_X!\hspace{1mm}n_Y!\hspace{1mm}n_Z!}
    \hspace{2mm}
    \mathbb{E}_{V(\theta, \phi, \psi) \sim SU(2)} \hspace{1mm} (-\sin\theta \cos\phi)^{n_X} (\sin\theta \sin\phi)^{n_Y} (\cos\theta)^{n_Z}.
\end{equation}
The $n_X!\hspace{1mm}n_Y!\hspace{1mm}n_Z!$ contribution comes from canceling out distinct permutations that yield the same Pauli string in~\eqref{eqn:vXYZallform}.
Moreover, it turns out that we can further simplify this expression for $v(n_X,n_Y,n_Z)$ by integrating over $\theta$, $\phi$, and $\psi$ for $V(\theta, \phi, \psi) \in SU(2)$. Performing this integral we obtain
\begin{eqnarray}
    \mathbb{E}_{V(\theta, \phi, \psi) \sim SU(2)} \hspace{1mm} (&-&\sin\theta \cos\phi)^{n_X} (\sin\theta \sin\phi)^{n_Y} (\cos\theta)^{n_Z} \\
    &=& 
    \frac{1}{(4\pi)^2} \int_0^\pi d\theta \int_0^{2\pi} d\phi \int_0^{4\pi} d\psi \sin \phi (-\sin\theta \cos\phi)^{n_X} (\sin\theta \sin\phi)^{n_Y} (\cos\theta)^{n_Z} \\
    &=& \frac{1}{4} (1 + (-1)^{n_X})(1 + (-1)^{n_Y})(1 + (-1)^{n_Z}) \frac{n_X!\hspace{1mm}n_Y!\hspace{1mm}n_Z!}{\frac{n_X}{2}!\hspace{1mm}\frac{n_Y}{2}!\hspace{1mm}\frac{n_Z}{2}!} \frac{(k+1)!}{(2k+2)!}
\end{eqnarray}
Finally, substituting this into our expression for $v(n_X,n_Y,n_Z)$ from~\eqref{eqn:defofvnXnYnZ}, we find
\begin{eqnarray}
    v(n_X,n_Y,n_Z) &=& \frac{1}{4} (1 + (-1)^{n_X})(1 + (-1)^{n_Y})(1 + (-1)^{n_Z}) \frac{1}{\frac{n_X}{2}!\hspace{1mm}\frac{n_Y}{2}!\hspace{1mm}\frac{n_Z}{2}!} \frac{(k+1)!}{(2k+2)!} \\ \label{egn:v}
     &=&  \label{eqn:vnXnYnZCases}
     \begin{cases} 
      2 \hspace{1mm} [\frac{n_X}{2}!\hspace{1mm}\frac{n_Y}{2}!\hspace{1mm}\frac{n_Z}{2}!]^{-1} 
      \frac{(k+1)!}{(2k+2)!}  & n_X,n_Y,n_Z \textnormal{ all even} \\
      0 & \textnormal{otherwise}
     \end{cases}
\end{eqnarray}

Now that we have simplified our expression for $v(n_X,n_Y,n_Z)$, we can return to~\eqref{eqn:vXYZallform} and write the bracketed expression in terms of the basis elements defined in Observation \ref{obs:BiAndBiPERPBasis}.  Let us consider $k$-qubit basis elements, which have the property that $R = \{\}$, i.e.~there are no components equal to the identity. We call these basis elements $B_S^{(k)}$. These $B_S^{(k)}$ operators are otherwise defined exactly as in Observation~\ref{obs:BiAndBiPERPBasis}.
Using these particular operators,~\eqref{eqn:vXYZallform}'s bracketed term with $n_X$ $X$'s, $n_Y$ $Y$'s, and $n_Z$ $Z$'s can be rewritten a sum over terms like $B_{S_1}^{(k)} \otimes B_{S_2}^{(k)}$ where $(n_{X,1}+n_{X,2}) = n_X$, $(n_{Y,1}+n_{Y,2}) = n_Y$, and $ (n_{Z,1}+n_{Z,2})= n_Z$. 
After including the necessary normalization factors (see Observation \ref{obs:BiAndBiPERPBasis}) for the operators, we can rewrite~\eqref{eqn:vXYZallform} as 
\begin{align} \label{eqn:sumofXYZs}
    &\mathbb{E}_{V \sim SU(2)} [V^{\dagger}ZV]^{\otimes 2k} = \sum_{\Big\{S_1, S_2 \, : \,\substack{n_{X,1} + n_{X,2} = n_X \\ n_{Y,1} + n_{Y,2} = n_Y \\ n_{Z,1} + n_{Z,2} = n_Z} \Big\}}  v(n_{X,1}+n_{X,2},n_{Y,1}+n_{Y,2},n_{Z,1}+n_{Z,2})\\
    & \qquad \qquad \qquad \qquad \qquad \qquad \qquad \qquad \qquad \qquad \qquad \times
    \frac{2^n \hspace{1mm} k! \hspace{1mm} (n_{X,1}+n_{X,2})!(n_{Y,1}+n_{Y,2})!(n_{Z,1}+n_{Z,2})! }{\sqrt{n_{X,1}!\hspace{1mm}n_{Y,1}!\hspace{1mm}n_{Z,1}! \hspace{1mm} n_{X,2}!\hspace{1mm}n_{Y,2}!\hspace{1mm}n_{Z,2}!}}  
    B_{S_1}^{(k)} \otimes B_{S_2}^{(k)}\,. \nonumber
\end{align}
At this point, we have derived an entirely new expression for  $\mathbb{E}_{V \sim SU(2)} [V^{\dagger}ZV]^{\otimes 2k}$. 
Plugging this back into our equation for $O_{AB}^{\text{Global }SU(2)}$ in~\eqref{eqn:intermediateOAB}, we find that the permutations $\pi(\mathds{1}^{\otimes n-k}\otimes[V^{\dagger}ZV]^{\otimes k}) \pi^\dagger$ intersperse identities among the Pauli strings in our $B_{S}^{(k)}$ operators. These permutations yield $B_S$ operators on $n$ qubits with $|R| = n-k$.  Therefore, we can write
\begin{equation}
    O_{AB}  
    = \sum_{S_1,S_2} c_{S_1,S_2} \hspace{1mm} B_{S_1} \otimes B_{S_2},
\end{equation}
where the coefficients $c_{S_1,S_2}$ are defined as follows:
    \begin{equation}
       c_{S_1,S_2} = \begin{cases} 
      2 \frac{1}{\sqrt{n_{X,1}!\hspace{1mm}n_{Y,1}!\hspace{1mm}n_{Z,1}! \hspace{1mm} n_{X,2}!\hspace{1mm}n_{Y,2}!\hspace{1mm}n_{Z,2}!}} 
        \hspace{1mm} \frac{(n_{X,1}+n_{X,2})!(n_{Y,1}+n_{Y,2})!(n_{Z,1}+n_{Z,2})!}{(\frac{n_{X,1}+n_{X,2}}{2})!(\frac{n_{Y,1}+n_{Y,2}}{2})!(\frac{n_{Z,1}+n_{Z,2}}{2})!}
        \hspace{1mm} \frac{K!(K+1)!}{(2K+2)!}  
        & \textnormal{conditions (a), (b)}
         \\
      0 & \textnormal{otherwise}
    \end{cases}
    \end{equation}
    In this expression $K := \frac{1}{2} (n_{X,1} + n_{Y,1} + n_{Z,1} + n_{X,2} + n_{Y,2} + n_{Z,2})$, and we have the conditions: (a) $n_{\alpha,1} + n_{\alpha,2} \text{ is even for all }\alpha \in \{X,Y,Z\}$, and (b) $R_1 = R_2$.
Here condition (a) comes from Eq.~\eqref{eqn:vnXnYnZCases} for $v$. Condition (b)  requires that $B_{S_1}$ and $B_{S_2}$ have $\mathds{1}$ Paulis on the same sites, and so $R_1 = R_2$.

Since we now have an expression for $O_{AB}^{\textnormal{Global }SU(2)}$, we can substitute our newly-derived expression for $O_{AB}^{\textnormal{Global }SU(2)}$ into our definition of the $\mathcal{M}$ channel from Lemma \ref{lem:NewFormMchannel} (see~\eqref{eqn:Mchannel3}) and recover the expression stated in our proposition above.
\end{proof}

\subsection{\label{sec:ProofOfVisInvisBasis} Proof of visible and invisible space basis}

\noindent Here we prove Proposition~\ref{prop:basisGlobalSu2VisInvis}, which was first stated in Appendix~\ref{sec:su2Visible}.

\begin{repproposition}{prop:basisGlobalSu2VisInvis} [\textbf{A Basis for the Global $SU(2)$ Visible Space}]
    For $\V = \{U^{\otimes n}\}_{U\in SU(2)}$, an orthonormal basis for $\text{\rm\textsf{VisibleSpace}}(\V)$ is
\begin{align}
\bigcup_S \{B_S\}
\end{align}
and an orthonormal basis for $\text{\rm\textsf{InvisibleSpace}}(\V)$ is
\begin{align}
    \bigcup_{S} \,\{B_{S,1}^\perp, B_{S,2}^\perp,...,B_{S,|S|-1}^\perp\}\,.
\end{align}
\end{repproposition}
\begin{proof}
In Observation \ref{obs:BiAndBiPERPBasis}, we found that the $B$'s and $B^\perp$'s together form an orthonormal basis for the operator space $\mathcal{B}(\mathcal{H})$. 
Therefore, to show that these two sets of operators constitute orthonormal bases for the invisible and visible spaces, respectively, we will show that all the $B$'s are entirely in the visible space and all the $B^{\perp}$'s are entirely in the invisible space.

\vspace{2mm}
\noindent\textit{(Invisible Space Basis)} Consider any $V \in SU(2)$ and computational basis state $|b\rangle$. We will show that for all invisible basis elements $B_{S,k}^\perp$, we have $\bra{b} V^{\otimes n} B_{S,k}^\perp V^{\dagger \otimes n} \ket{b} = 0$, and thus $B_{S,k}^\perp$ must be in $\text{\textsf{InvisibleSpace}}(\text{Global }SU(2))$. We will start by substituting in our proposed expression for $B_{S,k}^\perp$. 
\begin{eqnarray} \label{eqn:bkperpbb}
    \bra{b} V^{\otimes n} B_{S,k}^\perp V^{\dagger \otimes n} \ket{b} 
    & = & \bra{b} V^{\otimes n} \cdot\frac{1}{\sqrt{2^n k(k+1)}} \left(\sum_{j = 1}^k P_j - k P_{k+1}\right) \cdot V^{\dagger \otimes n} \ket{b} \\ 
    & = & \frac{1}{\sqrt{2^n k(k+1)}}
    \left(\sum_{j = 1}^k 
    \bra{b} V^{\otimes n} P_j V^{\dagger \otimes n} \ket{b} - k \bra{b} V^{\otimes n} P_{k+1} V^{\dagger \otimes n} \ket{b} \right)\,.
\end{eqnarray}
Recall that $B_{S,k}^\perp$ is constructed using some fixed-$\mathds{1}$ permutation-invariant set $S$ containing Pauli strings $P$ with (i) $\mathds{1}$s on the same sites and (ii) $n_X$ $X$ Paulis, $n_Y$ $Y$ Paulis, and $n_Z$ $Z$ Paulis on any of the remaining sites. Using the Euler parameterization of $SU(2)$, we can evaluate $\bra{b} V^{\otimes n} P V^{\dagger \otimes n} \ket{b}$ for all Pauli strings $P \in S$.  We start by writing $\bra{b} V^{\otimes n} P V^{\dagger \otimes n} \ket{b}$ as
\begin{eqnarray} \label{eqn:eulerrep1}
    \bra{b} V^{\otimes n} P V^{\dagger \otimes n} \ket{b}
    & = & \bra{b} V(\theta, \phi, \psi)^{\otimes n} P V(\theta, \phi, \psi)^{\dagger \otimes n} \ket{b} \\ \label{eqn:eulerrep2}
    & = & (\sin\theta \hspace{1mm} \cos\phi)^{n_X}(\sin\theta \hspace{1mm} \sin\phi)^{n_Y}(\cos\theta)^{n_Z} \bra{\Tilde{b}} Z^{\otimes (n_X + n_Y + n_Z)} \ket{\Tilde{b}} \,.
\end{eqnarray}
In the expression above, the unitary transformation $V(\theta, \phi, \psi)^{\otimes n}$ rotates $P$ to a linear combination of other Pauli strings. These other Pauli strings will all have $\mathds{1}$ on the same sites, and \textit{one} of them will have $Z$ on all remaining sites. This particular Pauli string of only $\mathds{1}$'s and $Z$'s will be the only term that survives the expectation $\bra{b} \,\cdot\, \ket{b}$ on the right-hand side of~\eqref{eqn:eulerrep1}, because all other terms contain $X$ or $Y$ Paulis. As a result,~\eqref{eqn:eulerrep2} only retains this particular Pauli string of $\mathds{1}$'s and $Z$'s, and $\ket{\Tilde{b}} \in (\mathbb{C}^2)^{\otimes(n_X + n_Y + n_Z)}$ represents the projection of $\ket{b}$ onto the sites with the $Z$ Paulis.
Finally, notice that equation~\eqref{eqn:eulerrep2} is the \textit{same} for all Pauli strings $P \in S$, and therefore we find 
\begin{eqnarray}\nonumber
    \bra{b} V^{\otimes n} B_{S,k}^\perp V^{\dagger \otimes n} \ket{b} 
    & = & \frac{1}{\sqrt{2^n k(k+1)}}
    (\sin\theta \hspace{1mm} \cos\phi)^{n_X}(\sin\theta \hspace{1mm} \sin\phi)^{n_Y}(\cos\theta)^{n_Z} \bra{\Tilde{b}} Z^{\otimes (n_X + n_Y + n_Z)} \ket{\Tilde{b}} 
    \left(\sum_{j = 1}^k 
    1 - k\right) \\ \nonumber
     &=& 0\,.
\end{eqnarray}

\vspace{2mm}
\noindent\textit{(Visible Basis)} Following Proposition \ref{prop:globalsu2mmtchannel}, the image of the global $SU(2)$ measurement channel is the span of the $B_S$ basis elements. Therefore, since all $B_S$'s are in the image of the measurement channel, by Observation \ref{obs:mChEquivalenceVIS} all $B_S$'s are are also in the visible space.
\end{proof}

\section{\label{app:globalCL2}Another form of limited control: global $\text{Cl}(2)$ }

Another form of limited control is $\V = \{U^{\otimes n}\}_{U \in \textnormal{Cl}(2)}$. For this form of control, which we call ``global $\text{Cl}(2)$ control,'' we choose either the $X$, $Y$, or $Z$ basis and then measure \textit{every} qubit in that basis.  Here we provide an explicit orthonormal basis of operators for the global $\text{Cl}(2)$ visible space, derive the associated measurement channel $\mathcal{M}$, and bound the sample complexity of our measurement protocol with global $\text{Cl}(2)$ control. Since our results with this ensemble are straightforward and intuitive, we use the ensemble in main text to develop intuition for Theorem~\ref{thm:shadowtomtheorem}.

\subsection{\label{subsec:cl2mmtchannel} Global  $\text{Cl}(2)$ measurement channel}

First, we will derive the measurement channel $\mathcal{M}$ for global $\text{Cl}(2)$. This channel will allow us to define the global $\text{Cl}(2)$ visible space and then upper bound the sample complexity when using this ensemble with the choice of kernel $\mathcal{K} = \mathcal{K}_\textnormal{CS}$. We will find that the global $\text{Cl}(2)$ visible space contains all Pauli strings made up of the identity and all $X$, all $Y$, or all $Z$ Paulis. 
In other words, the visible space for $n$ qubits contains all strings in the set $\bigcup_{\sigma \in \{X,Y,Z\}} \text{Pauli}_{\mathds{1},\sigma}(n)$, where $\text{Pauli}_{\mathds{1},\sigma}(n)$ is the set of $n$-qubit Pauli strings constructed from only $\mathds{1}$'s and $\sigma$'s.

\begin{proposition}[\textbf{\textit{Global $\text{\rm Cl}(2)$ Measurement Channel}}]\label{prop:globalCl2mmtchannel}
    Assume that $p(V)$ is uniform over $\V$. The global \text{\rm Cl}$(2)$ measurement channel for $n$ qubits can be written as
    \begin{equation}
        \label{eqn:CL2Mch}
        \mathcal{M}_{\text{\rm Cl}(2)}(A) =  \frac{1}{3 \cdot 2^n} \sum_{\sigma \in \{X,Y,Z\}} \sum_{P \in \text{\rm Pauli}_{\mathds{1},\sigma}(n)}  \text{\rm tr}(A P) P\,,
    \end{equation}
    where $\text{\rm Pauli}_{\mathds{1},\sigma}(n)$ is the set of $n$-qubit Pauli strings constructed from only $\mathds{1}$'s and $\sigma$'s. Note that the normalization of a Pauli string $P$ in this set is $\text{\rm tr}(P P^\dagger) = 2^n$.
\end{proposition}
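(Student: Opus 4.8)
The plan is to evaluate the channel directly from the reformulation in Lemma~\ref{lem:NewFormMchannel}. Recall from Eq.~\eqref{eqn:intermediateOabcsha} that $\mathcal{M}(A) = \text{tr}_B[Q\,(\mathds{1}\otimes A)]$ with $Q = \mathbb{E}_{V\sim\V}\, V^{\dagger\otimes 2}\big(\sum_b \ketbra{bb}{bb}\big) V^{\otimes 2}$; I would specialize to $V = U^{\otimes n}$ with $U$ drawn uniformly from $\text{Cl}(2)$. First, using $\sum_{b\in\{0,1\}}\ketbra{bb}{bb} = \tfrac12(\mathds{1}\otimes\mathds{1} + Z\otimes Z)$ on each qubit, one gets $\sum_b\ketbra{bb}{bb} = \tfrac{1}{2^n}\sum_{T\subseteq [n]} Z^{(T)}\otimes Z^{(T)}$, where $[n] := \{1,\ldots,n\}$ and $Z^{(T)}$ denotes the $n$-qubit Pauli string carrying $Z$ on the sites in $T$ and $\mathds{1}$ elsewhere. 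Conjugating by $(U^{\otimes n})^{\otimes 2}$ then sends each summand to $W^{(T)}\otimes W^{(T)}$, where $W := U^\dagger Z U$ and $W^{(T)}$ carries $W$ on the sites in $T$ and $\mathds{1}$ elsewhere; the crucial point is that the \emph{same} single-qubit operator $W$ appears on every site, which is exactly the footprint of using the global unitary $U^{\otimes n}$ rather than independent per-site unitaries. Hence $Q = \tfrac{1}{2^n}\sum_{T\subseteq [n]} \mathbb{E}_{U}\, W^{(T)}\otimes W^{(T)}$.

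The core step is the single-qubit average $\mathbb{E}_{U\sim\text{Cl}(2)}\,W^{(T)}\otimes W^{(T)}$. By orbit-stabilizer applied to the $24$-element group $\text{Cl}(2)$ acting by conjugation on $Z$, the operator $W = U^\dagger Z U$ is uniformly distributed over the six signed Paulis $\{\pm X,\pm Y,\pm Z\}$. Since $W^{(T)}\otimes W^{(T)}$ is an even ($2|T|$-fold) tensor power in $W$, we have $(s\sigma)^{(T)}\otimes(s\sigma)^{(T)} = \sigma^{(T)}\otimes\sigma^{(T)}$ for any sign $s$, so the six terms collapse to $\tfrac{2}{6}\sum_{\sigma\in\{X,Y,Z\}}\sigma^{(T)}\otimes\sigma^{(T)} = \tfrac13\sum_{\sigma}\sigma^{(T)}\otimes\sigma^{(T)}$. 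Summing over $T\subseteq [n]$ and noting that $\{\sigma^{(T)}: T\subseteq [n]\}$ is precisely $\text{Pauli}_{\mathds{1},\sigma}(n)$, we obtain $Q = \tfrac{1}{3\cdot 2^n}\sum_{\sigma\in\{X,Y,Z\}}\sum_{P\in\text{Pauli}_{\mathds{1},\sigma}(n)} P\otimes P$. Plugging this into $\mathcal{M}(A) = \text{tr}_B[Q\,(\mathds{1}\otimes A)]$ and using $\text{tr}_B[(P\otimes P)(\mathds{1}\otimes A)] = P\,\text{tr}(PA)$ together with the Hermiticity of $P$ gives exactly Eq.~\eqref{eqn:CL2Mch}.

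I expect the only genuine subtlety to be the bookkeeping of inter-site correlations: because the ensemble is $\{U^{\otimes n}\}_{U\in\text{Cl}(2)}$ and not a product of independent single-qubit Cliffords, $\mathcal{M}_{\text{Cl}(2)}$ is \emph{not} a tensor product of single-qubit channels, and one must track that the random Pauli axis $\sigma$ is shared across all $n$ qubits — this is what restricts the output to $\bigcup_{\sigma}\text{Pauli}_{\mathds{1},\sigma}(n)$ instead of all Pauli strings in $\{\mathds{1},X,Y,Z\}^{\otimes n}$. The sign cancellation $(s\sigma)^{(T)}\otimes(s\sigma)^{(T)}=\sigma^{(T)}\otimes\sigma^{(T)}$ is worth stating explicitly, since it is what merges $+\sigma$ and $-\sigma$ into a single term of multiplicity two; the remaining manipulations are routine Pauli algebra. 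An equivalent and perhaps more intuitive route: measuring $U^{\otimes n}\rho U^{\dagger\otimes n}$ in the computational basis is the same as measuring $\rho$ in the common eigenbasis of $W^{\otimes n}$, which for $U\sim\text{Cl}(2)$ is the all-$X$, all-$Y$, or all-$Z$ product basis, each with probability $1/3$; running the standard classical-shadow computation on this mixture of three measurement settings reproduces the same formula.
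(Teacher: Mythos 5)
Your proposal is correct and follows essentially the same route as the paper: both start from the reformulation $\mathcal{M}(A) = \tr_B[Q(\mathds{1}\otimes A)]$ of Lemma~\ref{lem:NewFormMchannel}, expand $\sum_b \ketbra{bb}{bb}$ into doubled $\mathds{1}/Z$ strings, and use the fact that averaging $U^\dagger Z U$ over $\text{Cl}(2)$ yields a uniform mixture over the three Pauli axes, so that $Q = \frac{1}{3\cdot 2^n}\sum_\sigma\sum_{P\in\text{Pauli}_{\mathds{1},\sigma}(n)} P\otimes P$. Your subset indexing $T\subseteq[n]$ is just a cosmetic variant of the paper's $(k,\pi)$ permutation sum, and your explicit remark that the signs of $\pm\sigma$ cancel because the tensor power $2|T|$ is even is a welcome touch of extra care.
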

\begin{proof}
Evaluating the measurement channel for the global $\text{Cl}(2)$ ensemble has two parts: (i) evaluating $O_{AB}^{\V} = O_{AB}^{\textnormal{Global } \textnormal{Cl}(2)}$ and (ii) substituting the newly-evaluated $O_{AB}^{\textnormal{Global } \textnormal{Cl}(2)}$ into our expression for the measurement channel from Lemma \ref{lem:NewFormMchannel}. First we will evaluate $O_{AB}^{\textnormal{Global } \textnormal{Cl}(2)}$. We will start from~\eqref{eqn:OAB} in Lemma \ref{lem:NewFormMchannel}:
\begin{equation}
    O_{AB}^{\textnormal{Global Cl}(2)} = \int_{V \in \V} dV \, p(V)  \hspace{2mm} \frac{1}{2^n} \sum_{k=0}^{n} \frac{1}{k! (n-k)!} \sum_{\pi \in S_n} V^{\dagger}\pi(\mathds{1}^{\otimes n-k}Z^{\otimes k}) \pi^\dagger V\otimes V^{\dagger} \pi(\mathds{1}^{\otimes n-k}Z^{\otimes k}) \pi^\dagger V\,.
\end{equation}
Setting $V = U^{\otimes n}$ where $U \in \text{Cl}(2)$, we can simplify this expression by evaluating $\mathbb{E}_{U \sim \textnormal{Cl}(2)} [U^{\dagger}ZU]^{\otimes k}$ as
\begin{equation}
    \mathbb{E}_{U \sim \textnormal{Cl}(2)} [U^{\dagger}ZU]^{\otimes k} = \frac{1}{3} \big( X^{\otimes k} + Y^{\otimes k} + Z^{\otimes k}\big)\,.
\end{equation}
We have probability $\frac{1}{3}$ each of measuring in the $X$, $Y$, or $Z$ bases. Therefore, the operator $O_{AB}^{\textnormal{Global Cl}(2)}$ becomes 
\begin{equation}
    O_{AB}^{\textnormal{Global Cl}(2)} = \frac{1}{3 \cdot 2^n} \sum_{\sigma \in \{X,Y,Z\}} \sum_{k=0}^{n} \frac{1}{k! (n-k)!} \sum_{\pi \in S_n} \left[ \pi(\mathds{1}^{\otimes n-k}\sigma^{\otimes k}) \pi^\dagger \right]^{\otimes 2}.
\end{equation}
Next, notice that the summand of the $\sigma$-indexed sum is simply a sum of Pauli strings (tensored twice) made up of only $\mathds{1}$ and $\sigma$ single-site Paulis.
Using the set $\text{Pauli}_{\mathds{1},\sigma}(n)$, which contains all $n$-qubit Pauli strings constructed from only $\mathds{1}$'s and $\sigma$'s, we can express $O_{AB}^{\textnormal{Global Cl}(2)}$ as
\begin{equation}
    O_{AB}^{\textnormal{Global Cl}(2)} = \frac{1}{3 \cdot 2^n} \sum_{\sigma \in \{X,Y,Z\}} \sum_{P \in \text{Pauli}_{\mathds{1},\sigma}(n)}  P^{\otimes 2}.
\end{equation}
Plugging this back into~\eqref{eqn:Mchannel3} from Lemma \ref{lem:NewFormMchannel}, we find the desired expression  
\begin{equation}
    \mathcal{M}_{\text{\rm Cl}(2)}(A) =  \frac{1}{3 \cdot 2^n} \sum_{\sigma \in \{X,Y,Z\}} \sum_{P \in \text{Pauli}_{\mathds{1},\sigma}(n)}  \tr(A P) P.
\end{equation}
\end{proof}

\begin{corollary} [\textbf{The Global} $\text{Cl}(2)$ \textbf{Visible Space}]
    For $\V = \{U^{\otimes n}\}_{U\in \textnormal{Cl}(2)}$, a basis for $\text{\rm\textsf{VisibleSpace}}(\V)$ is all Pauli strings in the set
\begin{align}
\bigcup_{\sigma \in \{X,Y,Z\}} \text{\rm Pauli}_{\mathds{1},\sigma}(n)\,,
\end{align}
where $\text{Pauli}_{\mathds{1},\sigma}(n)$ is the set of $n$-qubit Pauli strings constructed from only $\mathds{1}$'s and $\sigma$'s.
\end{corollary}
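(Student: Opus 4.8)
The plan is to read off the answer directly from the explicit measurement channel, using the fact (Observation~\ref{obs:mChEquivalenceVIS}, applicable here because $p(V)$ is taken uniform over $\V$) that $\textnormal{\textsf{VisibleSpace}}(\V)$ coincides with $\text{Image}(\mathcal{M}_{\textnormal{Cl}(2)})$. Proposition~\ref{prop:globalCl2mmtchannel} already gives
\begin{equation}
\mathcal{M}_{\text{\rm Cl}(2)}(A) = \frac{1}{3\cdot 2^n}\sum_{\sigma\in\{X,Y,Z\}}\sum_{P\in\text{\rm Pauli}_{\mathds{1},\sigma}(n)} \tr(AP)\,P\,,
\end{equation}
so every output of the channel is manifestly a linear combination of the Pauli strings in $\bigcup_{\sigma}\text{\rm Pauli}_{\mathds{1},\sigma}(n)$; this immediately shows $\text{Image}(\mathcal{M}_{\textnormal{Cl}(2)})$ is contained in their span.

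For the reverse inclusion I would show that each such Pauli string is (proportional to) an eigen-operator of the channel with a nonzero eigenvalue, hence lies in the image. Fix $P\in\text{\rm Pauli}_{\mathds{1},\sigma_0}(n)$. Since distinct $n$-qubit Pauli strings are Hilbert--Schmidt orthogonal with $\tr(PP^\dagger)=2^n$, the inner sums collapse onto the term $P'=P$. A string $P\neq\mathds{1}^{\otimes n}$ belongs to exactly one of the three families, giving $\mathcal{M}_{\text{\rm Cl}(2)}(P)=\tfrac13 P$, while the all-identity string belongs to all three, giving $\mathcal{M}_{\text{\rm Cl}(2)}(\mathds{1}^{\otimes n})=\mathds{1}^{\otimes n}$. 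In either case the image is nonzero, so $P\in\text{Image}(\mathcal{M}_{\textnormal{Cl}(2)})=\textnormal{\textsf{VisibleSpace}}(\V)$. Combining the two inclusions, $\textnormal{\textsf{VisibleSpace}}(\V)$ equals the span of $\bigcup_\sigma \text{\rm Pauli}_{\mathds{1},\sigma}(n)$, and pairwise orthogonality of Pauli strings then gives linear independence, so these strings form a basis as claimed.

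The only genuine subtlety — the step I would be most careful about — is that $\bigcup_\sigma \text{\rm Pauli}_{\mathds{1},\sigma}(n)$ must be treated as a \emph{set}: $\mathds{1}^{\otimes n}$ is shared by all three families, so it should not be double-counted when checking linear independence or when computing the dimension (which is $3(2^n-1)+1 = 3\cdot 2^n - 2$, not $3\cdot 2^n$). Everything else is routine given Proposition~\ref{prop:globalCl2mmtchannel} and Pauli orthogonality. As an alternative route, one could avoid the channel altogether and argue straight from Definition~\ref{def:VisibleSpace}: measuring every qubit in the $\sigma$ basis (i.e.\ taking $V=U^{\otimes n}$ with $U$ the single-qubit Clifford rotating $\sigma\to Z$) simultaneously measures all $P\in\text{\rm Pauli}_{\mathds{1},\sigma}(n)$, exhibiting each such $P$ explicitly as a linear combination of the $V^\dagger\ketbra{b}{b}V$.
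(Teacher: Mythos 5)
Your proposal is correct and follows essentially the same route as the paper: invoke Observation~\ref{obs:mChEquivalenceVIS} to identify the visible space with the image of $\mathcal{M}_{\text{Cl}(2)}$, then read the spanning set off the explicit channel from Proposition~\ref{prop:globalCl2mmtchannel}. You actually supply more detail than the paper does — the eigenvalue computation ($\tfrac13$ for non-identity strings, $1$ for $\mathds{1}^{\otimes n}$) establishing the reverse inclusion, and the caveat about not double-counting $\mathds{1}^{\otimes n}$ — both of which are correct and worth making explicit.
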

\begin{proof}
By Proposition \ref{prop:globalCl2mmtchannel}, the image of the global $\text{Cl}(2)$ measurement channel is the span of all Pauli strings in $\text{Pauli}_{\mathds{1},X}(n)$, $\text{Pauli}_{\mathds{1},Y}(n)$, or $\text{Pauli}_{\mathds{1},Z}(n)$. Since by Observation  \ref{obs:mChEquivalenceVIS} the visible space is the image of the measurement channel, the visible space is this same span of all Pauli strings in $\text{Pauli}_{\mathds{1},X}(n)$, $\text{Pauli}_{\mathds{1},Y}(n)$, or $\text{Pauli}_{\mathds{1},Z}(n)$ 
And note that since Pauli strings are orthogonal, this basis for the global $\text{Cl}(2)$ visible space is not overcomplete. 
\end{proof}

\subsection{\label{subsec:cl2mmtsamplecomplexity} Global $\text{Cl}(2)$ sample complexity}

Now that we have an analytic expression for the global $\text{Cl}(2)$ measurement channel, we can provide rigorous guarantees on the sample complexity when using this ensemble and the estimator $\mathcal{K} = \mathcal{K}_\textnormal{CS}$. In particular, we can upper bound Var$_\text{max} \mathcal{K}^O_\textnormal{CS}$ in Theorem \ref{thm:shadowtomtheorem} with the shadow norm \cite{huang2020predicting}. This canonical object in classical shadow tomography takes the form ~\cite{huang2020predicting}
\begin{equation}
    \|O\|^2_{\textnormal{shadow}} = \max_{\sigma \,:\, \textnormal{state}} \tr \big( \sigma \Lambda(\mathcal{M}^{-1}(O)) \big) \,,
\end{equation}
where $\mathcal{M}$ is the global $\text{Cl}(2)$ measurement channel, for which we just derived an analytic expression (Eq.~\eqref{eqn:CL2Mch}). 
We define the nonlinear ``shadow map'' $\Lambda: \mathcal{B}(\mathcal{H}) \rightarrow \mathcal{B}(\mathcal{H})$ below. 

\begin{definition}[\textbf{Shadow Map}] \label{def:shadowmap} Assume that $p(V)$ is uniform over $\V$. The shadow map over $n$ qubits is
\begin{equation}\label{eqn:shadowmapdef}
    \Lambda (O) = \Expect_{V \sim \V} \sum_{b \in \{0,1\}^n} V^\dagger \ketbra{b}{b} V \bra{b} V O V^\dagger \ket{b}^2.
\end{equation}
\end{definition}

\begin{lemma}\label{lem:NewFormShadowMapchannel}
    Consider a probability density function $p(V)$ over $\V$. Then the associated shadow map can be written as follows:
    \begin{equation} \label{eqn:fullshadowmaplemma3}
        \Lambda (O)  = \text{\rm tr}_{AB}[O_{ABC}^{\V}(O \otimes O \otimes \mathds{1})]\,
    \end{equation}
    where $O_{ABC}^{\V} \in \mathcal{B}(\mathcal{H}_A \otimes \mathcal{H}_B \otimes \mathcal{H}_C)$ for $\mathcal{H}_A \simeq \mathcal{H}_B \simeq \mathcal{H}_C \simeq \mathbb{C}^{2^n}$ is
    \begin{equation}\label{eqn:OABC}
        O_{ABC}^{\V} := \frac{1}{4^n} \Expect_{V \sim \V}    \sum_{P_1, P_2 \in \text{\rm Pauli}_{\mathds{1},Z}(n)}  V^{\dagger}P_1V \otimes V^{\dagger}P_2 V \otimes V^{\dagger}P_1 P_2 V
    \end{equation}
    and $\text{Pauli}_{\mathds{1},Z}(n)$ is the set of $n$-qubit Pauli strings constructed from only $\mathds{1}$'s and $Z$'s.
\end{lemma}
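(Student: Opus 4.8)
\emph{Proof proposal.} The plan is to mirror the proof of Lemma~\ref{lem:NewFormMchannel}, lifting that two-copy argument to three copies. First I would rewrite the squared amplitude appearing in Definition~\ref{def:shadowmap} as a partial trace over two auxiliary copies $A$ and $B$ of the Hilbert space. Since $\bra{b}VOV^\dagger\ket{b} = \tr\!\big(V^\dagger\ketbra{b}{b}V\, O\big)$, we have $\bra{b}VOV^\dagger\ket{b}^2 = \tr_{AB}\!\big[(V^\dagger\ketbra{b}{b}V \otimes V^\dagger\ketbra{b}{b}V)(O \otimes O)\big]$. Inserting this into~\eqref{eqn:shadowmapdef} and treating the leading factor $V^\dagger\ketbra{b}{b}V$ as living in a third copy $C$ that is \emph{not} traced out, one obtains $\Lambda(O) = \tr_{AB}\!\big[Q'\,(O \otimes O \otimes \mathds{1})\big]$ with $Q' := \Expect_{V\sim\V}\sum_{b} V^{\dagger\otimes 3}\,\ketbra{bbb}{bbb}\,V^{\otimes 3}$. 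It then remains to identify $Q'$ with $O_{ABC}^{\V}$ in~\eqref{eqn:OABC}.

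The key computation is the identity $\sum_{b\in\{0,1\}^n}\ketbra{bbb}{bbb} = \frac{1}{4^n}\sum_{P_1,P_2\in\textnormal{Pauli}_{\mathds{1},Z}(n)} P_1\otimes P_2\otimes P_1 P_2$. I would verify it first for a single qubit: using $\ketbra{0}{0}=\tfrac12(\mathds{1}+Z)$ and $\ketbra{1}{1}=\tfrac12(\mathds{1}-Z)$, one expands $\ketbra{000}{000}+\ketbra{111}{111} = \tfrac18\big[(\mathds{1}+Z)^{\otimes3}+(\mathds{1}-Z)^{\otimes3}\big] = \tfrac14\big[\mathds{1}\mathds{1}\mathds{1}+\mathds{1}ZZ+Z\mathds{1}Z+ZZ\mathds{1}\big]$, which is exactly $\tfrac14\sum_{P_1,P_2\in\{\mathds{1},Z\}} P_1\otimes P_2\otimes P_1 P_2$ after using $Z^2=\mathds{1}$. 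Since $\sum_{b\in\{0,1\}^n}\ketbra{bbb}{bbb}$ factorizes as a tensor product over sites of the single-qubit expression, the general identity follows by regrouping the per-site tensor factors into the three copies $A$, $B$, $C$ and using $(\bigotimes_i P_1^{(i)})(\bigotimes_i P_2^{(i)}) = \bigotimes_i P_1^{(i)}P_2^{(i)}$ for diagonal (hence commuting) Paulis.

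Substituting this identity into $Q'$ and pushing $V^{\otimes 3}$ through the tensor product gives $Q' = \tfrac{1}{4^n}\Expect_{V\sim\V}\sum_{P_1,P_2\in\textnormal{Pauli}_{\mathds{1},Z}(n)} V^\dagger P_1 V\otimes V^\dagger P_2 V\otimes V^\dagger P_1 P_2 V = O_{ABC}^{\V}$, which combined with the first paragraph yields the claimed form $\Lambda(O)=\tr_{AB}\big[O_{ABC}^{\V}(O\otimes O\otimes\mathds{1})\big]$.

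The step I expect to require the most care is the bookkeeping in the tensor-factor regrouping: one must track that the product-over-sites ordering (each site contributing an $A$-slot, a $B$-slot and a $C$-slot) can be permuted into the copy-over-copies ordering used in~\eqref{eqn:OABC}, and that the commuting diagonal Paulis multiply slot-wise so that the third factor is genuinely $P_1 P_2$ as an operator on all $n$ qubits. None of this is deep, but it is where an off-by-a-permutation error would hide; everything else is a direct transcription of the Lemma~\ref{lem:NewFormMchannel} argument with two copies replaced by three.
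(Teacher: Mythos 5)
Your proposal is correct and follows essentially the same route as the paper's proof: rewrite $\Lambda(O)$ as a partial trace over two auxiliary copies against $Q=\Expect_{V}\sum_b (V^\dagger\ketbra{b}{b}V)^{\otimes 3}$, establish the single-qubit identity $\sum_b\ketbra{bbb}{bbb}=\tfrac14\sum_{\sigma_A,\sigma_B\in\{\mathds{1},Z\}}\sigma_A\otimes\sigma_B\otimes\sigma_A\sigma_B$, tensor it over sites, and conjugate by $V^{\otimes 3}$. The tensor-factor regrouping you flag as delicate is the same (implicit) step the paper takes, so there is nothing to add.
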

\begin{proof}
The expression for the shadow map (Eq.~\eqref{eqn:shadowmapdef}) can be rewritten as
\begin{eqnarray}
    \Lambda(O) 
    &=& \tr_{AB}\left[\Expect_{V \sim \V} \sum_b (V^\dagger \ketbra{b}{b} V)^{\otimes 3} (  O \otimes O \otimes\mathds{1}) \right] \\
    &=& \tr_{AB}\left[Q (  O \otimes O \otimes\mathds{1}) \right] 
\end{eqnarray}
where $Q$ is defined as
\begin{equation}
    Q
    = \Expect_{V \sim \V} \sum_b (V^\dagger \ketbra{b}{b} V)^{\otimes 3}.
\end{equation}
The inside of the trace lives in the Hilbert space $\mathcal{H}_A \otimes \mathcal{H}_B \otimes \mathcal{H}_C$, and subsystems $A$, $B$, $C$ run from left to right. At this point, it suffices to show that  $Q$ takes the form of $O_{ABC}^{\V}$ in~\eqref{eqn:OABC}. Rearranging our expression for $Q$, we can consider the sum over all $\ketbra{bbb}{bbb}$ as 
\begin{equation}\label{eqn:rewrite3xVbbv}
    \Expect_{V \sim \V} \sum_b (V^\dagger \ketbra{b}{b} V)^{\otimes 3} = \Expect_{V \sim \V} V^{\dagger \otimes 3} \sum_b \ketbra{bbb}{bbb}  V^{\otimes 3}\,.
\end{equation}
For a single qubit corresponding to $n=1$, the sum $\sum_b \ketbra{bbb}{bbb}$ takes the form 
\begin{eqnarray}
    \sum_{b\in\{0,1\}} \ketbra{bbb}{bbb} &=& \frac{1}{4}(\mathds{1}\otimes\mathds{1}\otimes\mathds{1} + \mathds{1}\otimes Z \otimes Z + Z\otimes\mathds{1}\otimes Z + Z \otimes Z \otimes\mathds{1}) \\
    &=& \frac{1}{4}  \sum_{\sigma_A, \sigma_B \in \{\mathds{1}, Z\}}  \sigma_A \otimes \sigma_B \otimes \sigma_A\sigma_B,
\end{eqnarray}
where $\sigma_A$ and $\sigma_B$ are single qubit Pauli $Z$'s or $\mathds{1}$'s.
We can readily generalize this to $n$ qubits and find
\begin{equation}
    \sum_{b} \ketbra{bbb}{bbb}  =  \frac{1}{4^n}  \sum_{P_1, P_2 \in \text{Pauli}_{\mathds{1},Z}(n)}  P_1 \otimes P_2 \otimes P_1 P_2\,,
\end{equation}
where $\text{Pauli}_{\mathds{1},Z}(n)$ is the set of $n$-qubit Pauli strings constructed from only $\mathds{1}$'s and $Z$'s. Plugging this back into equation \eqref{eqn:rewrite3xVbbv}, we find the desired expression  
\begin{equation}
    Q = \frac{1}{4^n} \Expect_{V \sim \V}    \sum_{P_1, P_2 \in \text{Pauli}_{\mathds{1},Z}(n)}  V^{\dagger}P_1V \otimes V^{\dagger}P_2 V \otimes V^{\dagger}P_1 P_2 V
\end{equation}
which matches~\eqref{eqn:OABC}.

\end{proof}

\begin{proposition}[\textbf{\textit{Global} $\text{\rm Cl}(2)$ \textbf{\textit{Shadow Map}}}]\label{prop:globalCl2shadowchannel}
    Assume that $p(V)$ is uniform over $\V$. The global $\text{\rm Cl}(2)$ shadow map for $n$ qubits can be written as
    \begin{equation}
        \label{eqn:CL2shadowmap}
        \Lambda_{\text{\rm Cl}(2)}(O) =\Lambda_{\text{\rm Cl}(2)}(O) = \frac{1}{3 \cdot 4^n} \sum_{\sigma \in \{X,Y,Z\}} \sum_{P_1, P_2 \in \text{Pauli}_{\mathds{1},\sigma}(n)}  \text{\rm tr}(P_1 O) \,\text{\rm tr}( P_2 O) P_1 P_2\,.
    \end{equation}
    where $\text{Pauli}_{\mathds{1},\sigma}(n)$ is the set of $n$-qubit Pauli strings constructed from only $\mathds{1}$'s and $\sigma$'s. Note that the normalization of a Pauli string $P$ in this set is $\text{\rm tr}(P P^\dagger) = 2^n$.
\end{proposition}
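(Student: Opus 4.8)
The plan is to specialize the general shadow-map formula of Lemma~\ref{lem:NewFormShadowMapchannel} to the global $\text{Cl}(2)$ ensemble, in exact parallel with the derivation of the global $\text{Cl}(2)$ measurement channel in Proposition~\ref{prop:globalCl2mmtchannel}. By Lemma~\ref{lem:NewFormShadowMapchannel}, $\Lambda_{\text{Cl}(2)}(O) = \tr_{AB}[O_{ABC}^{\V}(O \otimes O \otimes \mathds{1})]$ where $O_{ABC}^{\V}$ is built from $\Expect_{V \sim \V}\, V^{\dagger}P_1V \otimes V^{\dagger}P_2 V \otimes V^{\dagger}P_1 P_2 V$ summed over $P_1, P_2 \in \text{Pauli}_{\mathds{1},Z}(n)$. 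So it suffices to evaluate this expectation for $V = U^{\otimes n}$ with $U \in \text{Cl}(2)$ uniform, and then substitute the result back into the trace.

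For the key computation, set $W := U^{\dagger}ZU$. Under the uniform single-qubit Clifford measure $W$ is uniformly distributed over $\{\pm X, \pm Y, \pm Z\}$, so $\Expect_U W^{\otimes m}$ equals $\tfrac13 \sum_{\sigma \in \{X,Y,Z\}} \sigma^{\otimes m}$ when $m$ is even and vanishes when $m$ is odd. Now fix $P_1, P_2 \in \text{Pauli}_{\mathds{1},Z}(n)$ with $Z$-supports $A_1, A_2 \subseteq \{1,\dots,n\}$. Conjugation by $U^{\otimes n}$ replaces each $Z$ by $W$, so $V^{\dagger}P_1V \otimes V^{\dagger}P_2 V \otimes V^{\dagger}P_1 P_2 V$ is a tensor product in which $W$ appears once per site of $A_1$ (in the $A$ slot), once per site of $A_2$ (in the $B$ slot), and once per site of $A_1 \triangle A_2$ (in the $C$ slot). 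Because $|A_1| + |A_2| + |A_1 \triangle A_2| = 2|A_1 \cup A_2|$ is always even, the Clifford average never kills the term, and the identity above replaces every $W$ simultaneously by one $\sigma \in \{X,Y,Z\}$ with weight $\tfrac13$. For fixed $\sigma$ this produces $P_1^{(\sigma)} \otimes P_2^{(\sigma)} \otimes P_1^{(\sigma)} P_2^{(\sigma)}$, where $P_i^{(\sigma)} \in \text{Pauli}_{\mathds{1},\sigma}(n)$ is the string with $\sigma$ on $A_i$; as $(P_1,P_2)$ ranges over $\text{Pauli}_{\mathds{1},Z}(n)^2$ the pair $(P_1^{(\sigma)},P_2^{(\sigma)})$ ranges over $\text{Pauli}_{\mathds{1},\sigma}(n)^2$. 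Hence $O_{ABC}^{\text{Cl}(2)} = \tfrac{1}{3\cdot 4^n}\sum_{\sigma}\sum_{P_1,P_2 \in \text{Pauli}_{\mathds{1},\sigma}(n)} P_1 \otimes P_2 \otimes P_1 P_2$. Substituting this into Lemma~\ref{lem:NewFormShadowMapchannel} and using $\tr_{AB}[(P_1 \otimes P_2 \otimes P_1 P_2)(O \otimes O \otimes \mathds{1})] = \tr(P_1 O)\,\tr(P_2 O)\, P_1 P_2$ yields exactly the claimed formula for $\Lambda_{\text{Cl}(2)}(O)$.

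I expect the only genuine obstacle to be the bookkeeping in the middle step: since the same Clifford $U$ acts on every qubit, the expectation does not factorize site by site, so one must track the full tensor structure across all $n$ sites and all three factors $A$, $B$, $C$ at once — in particular verifying the parity identity $|A_1| + |A_2| + |A_1 \triangle A_2| = 2|A_1 \cup A_2|$ that guarantees the term survives averaging, and checking that the single substitution $W \mapsto \sigma$ lands the three slots on $P_1^{(\sigma)}$, $P_2^{(\sigma)}$, and $P_1^{(\sigma)}P_2^{(\sigma)}$ respectively. This is conceptually identical to the ``$O_{AB}$'' manipulation already carried out in the proof of Proposition~\ref{prop:globalCl2mmtchannel}, so that template transfers directly with one extra tensor slot.
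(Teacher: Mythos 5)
Your proposal is correct and follows essentially the same route as the paper's proof: specialize Lemma~\ref{lem:NewFormShadowMapchannel}, evaluate $\Expect_{U \sim \text{Cl}(2)}[U^\dagger Z U]^{\otimes m}$ to obtain $O_{ABC}^{\text{Cl}(2)} = \frac{1}{3\cdot 4^n}\sum_\sigma \sum_{P_1,P_2} P_1\otimes P_2\otimes P_1P_2$, and substitute back. Your explicit verification of the parity identity $|A_1|+|A_2|+|A_1\triangle A_2| = 2|A_1\cup A_2|$ is a welcome extra check that the paper leaves implicit (its formula for $\Expect_U[U^\dagger Z U]^{\otimes k}$ only holds for even $k$, which your identity guarantees).
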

\begin{proof}
Similar to our derivation of the measurement channel, evaluating the shadow map for the global $\text{Cl}(2)$ ensemble has two parts: (i) evaluating $O_{ABC}^{\V} = O_{ABC}^{\textnormal{Global } \textnormal{Cl}(2)}$ and (ii) substituting the newly-derived expression for $O_{ABC}^{\textnormal{Global } \textnormal{Cl}(2)}$ into our expression for the shadow map. First, we will evaluate $O_{ABC}^{\textnormal{Global } \textnormal{Cl}(2)}$. We will start from~\eqref{eqn:OABC} in Lemma \ref{lem:NewFormShadowMapchannel}:
\begin{equation} 
    O_{ABC}^{\textnormal{Global Cl}(2)}  = \frac{1}{4^n} \Expect_{V \sim \V}    \sum_{P_1, P_2 \in \text{Pauli}_{\mathds{1},Z}(n)}  V^{\dagger}P_1V \otimes V^{\dagger}P_2 V \otimes V^{\dagger}P_1 P_2 V\,.
\end{equation}
Each term within this sum has some number, $k$, of $Z$ Paulis across the $3n$ qubit system. Setting $V = U^{\otimes n}$ where $U \in \text{Cl}(2)$, we can simplify the expression by evaluating $\mathbb{E}_{U \sim \textnormal{Cl}(2)} [U^{\dagger}ZU]^{\otimes k}$ as
\begin{equation}
    \mathbb{E}_{U \sim \textnormal{Cl}(2)} [U^{\dagger}ZU]^{\otimes k} = \frac{1}{3} \big( X^{\otimes k} + Y^{\otimes k} + Z^{\otimes k}\big)\,.
\end{equation}
We have a probability of $\frac{1}{3}$ of measuring in the $X$, $Y$, or $Z$ bases. Therefore, the operator $O_{ABC}^{\textnormal{Global Cl}(2)}$ becomes 
\begin{equation}
    O_{ABC}^{\textnormal{Global Cl}(2)} = \frac{1}{3 \cdot 4^n} \sum_{\sigma \in \{X,Y,Z\}} \sum_{P_1, P_2 \in \text{Pauli}_{\mathds{1},\sigma}(n)}  P_1 \otimes P_2 \otimes P_1 P_2\,,
\end{equation}
where $\text{Pauli}_{\mathds{1},\sigma}(n)$ is the set of $n$-qubit Pauli strings constructed from only $\mathds{1}$'s and $\sigma$'s. Plugging this back into equation \eqref{eqn:fullshadowmaplemma3}, we find the desired expression  
\begin{equation}
    \Lambda_{\text{\rm Cl}(2)}(O) = \frac{1}{3 \cdot 4^n} \sum_{\sigma \in \{X,Y,Z\}} \sum_{P_1, P_2 \in \text{Pauli}_{\mathds{1},\sigma}(n)}  \tr(P_1 O) \tr( P_2 O) P_1 P_2\,.
\end{equation}


\end{proof}

Now that we have an analytic expression for the global $\text{Cl}(2)$ measurement channel and shadow map, we can evaluate the shadow norm of any operator $O$. For example, consider the operator $O = X^{\otimes k}$. This operator is an eigenvector of $\mathcal{M}_{\textnormal{Cl}(2)}$ with eigenvalue $\frac{1}{3}$. One can easily check this by evaluating $\mathcal{M}_{\textnormal{Cl}(2)}(X^{\otimes k})$ using~\eqref{eqn:CL2Mch}. Moreover, we also find that $\Lambda_{\textnormal{Cl}(2)}(X^{\otimes k}) = \frac{1}{3} \mathds{1}$, which one can check using equation \eqref{eqn:CL2shadowmap}. Therefore, we find that the shadow norm of $X^{\otimes k}$ is
\begin{eqnarray}
    \|X^{\otimes k}\|^2_{\textnormal{shadow}} &=& \max_{\sigma \,:\, \textnormal{state}} \tr \big( \sigma \Lambda(\mathcal{M}^{-1}(X^{\otimes k})) \big) \\
    &=& 3^2 \max_{\sigma \,:\, \textnormal{state}} \tr \big( \sigma \Lambda(X^{\otimes k}) \big) \\
    &=& 3 \max_{\sigma \,:\, \textnormal{state}} \tr \big( \sigma \mathds{1} \big) \\
    &=& 3\,.
\end{eqnarray}
Accordingly, the shadow norm of $X^{\otimes k}$ is $\mathcal{O}(1)$ independent of $k$. This makes sense: with our set of implementable unitaries $\V$, we measure all qubits in the $X$ basis, all qubits in the $Y$ basis, or all qubits in the $Z$ basis. Therefore, whenever we measure all qubits in the $X$ basis, we can reconstruct the expectation value of $X$ just as easily as for $X^{\otimes n}$.


\section{\label{app:OptimizingBiasVar}Minimizing on the bias-variance tradeoff}

Here we discuss how to estimate $\tr(\rho O)$ with the smallest error given a fixed number of measurements $N$ and a fixed $p(V)$ with support $\V$.  Using Theorem~\ref{thm:shadowtomtheorem}, instead of estimating $\tr(\rho O)$ directly, we can estimate $\tr(\rho \widetilde{O})$ for some $\widetilde{O} \in \text{\textsf{VisibleSpace}}(\V)$ close to $O$. Our $N$-measurement estimate for $\tr(\rho \widetilde{O})$ takes the form 
\begin{equation}
    \tr(\rho \widetilde{O}) \approx \frac{1}{N} \sum_{i=1}^N \mathcal{K}^{\widetilde{O}}(V_i,b_i)
\end{equation}
(see Appendix~\ref{app:VisibleSpace} for details), and via Theorem \ref{thm:shadowtomtheorem} the corresponding estimation error is upper bounded by
\begin{equation}\label{eqn:fullestimationerror}
    \left|\tr(\rho O) - \frac{1}{N}\sum_{i=1}^N \mathcal{K}^{\widetilde{O}}(V_i,b_i) \right| \leq \|O-\widetilde{O}\|_{\infty} + \epsilon_{\widetilde{O}}\,.
\end{equation}
The left side of inequality~\ref{eqn:fullestimationerror}, namely the estimation error, is upper bounded by two terms. The first term, the spectral norm $\|O-\widetilde{O}\|_{\infty}$, bounds the error of estimating the \textit{biased} operator $\widetilde{O}$ instead of $O$.  The $\epsilon_{\widetilde{O}}$ term arises from using a finite number of measurements to estimate $\tr(\rho\widetilde{O})$ and is related to the \textit{variance} of $\mathcal{K}^{\widetilde{O}}$. 

Therefore, to reduce the estimation error on $\tr(\rho O)$, we want to find the $\widetilde{O}$ that minimizes the right hand side of~\eqref{eqn:fullestimationerror}.
In other words, we want to perturb $O \rightarrow \widetilde{O}$ such that the variance term $\epsilon_{\widetilde{O}}$ substantially decreases, but we do not want to perturb it so much that the bias term $\|O-\widetilde{O}\|_{\infty}$ explodes. Accordingly, we refer to finding the optimal $\widetilde{O}$ as ``minimizing on the bias-variance tradeoff,'' and we mathematically define $\widetilde{O}$ as the operator that minimizes the cost function
\begin{equation}\label{eqn:thecostfunctionappx}
    \textnormal{Cost}(\widetilde{O}) 
    := 
    \|O-\widetilde{O}\|_{\infty} 
    +
    \sqrt{\frac{2}{N}\, \textnormal{Var}_{V,b\sim P_{\mathds{1}/2^n}}[\mathcal{K}^{\widetilde{O}}(V, b)]  
    \hspace{1mm} \log\!\left(\frac{M}{2\delta}\right)}\,.
\end{equation}
Here $M$ is the number of observables we wish to estimate using the randomized measurement data, and the probability of obtaining the outcome $(V,b)$ in a randomized measurement is $P_\rho(V,b) = p(V)\bra{b}V\rho V^\dagger\ket{b}$.  Observe that in the variance term we have opted to sample with respect to $P_{\mathds{1}/2^n}$.  This choice 
makes our cost function convex -- notice that the cost function would not be convex if we used $\textnormal{Var}_\textnormal{max}$.  
Our choice should be regarded as a heuristic since we are interested in computing expectation values $\text{tr}(\rho O)$ where $\rho$ need not be maximally mixed.  When we consider numerical examples later on, we will find that our heuristic leads to favorable bias-variance tradeoffs in practice when $\rho$ is not maximally mixed.

The cost function above can be motivated by solving for $\epsilon_{\widetilde{O}}$ in~\eqref{eqn:fullestimationerror} using Theorem~\ref{thm:shadowtomtheorem} and then  discarding the $\frac{1}{3}Q\epsilon$ term.  This discarding is heuristically justified if we are in a regime where it is very small compared to the variance term. 

\vspace{3mm}
\begin{adjustwidth}{1cm}{}
\noindent\textit{Method for sampling operators on the bias-variance tradeoff.}  One can always tune the cost function to allow more or less bias by introducing a parameter $\alpha$ in front of the bias term:
\begin{equation}\label{eqn:alphaparameterizedcostfcn}
    \textnormal{Cost}_\alpha(\widetilde{O}) 
    := 
    \alpha \,\|O-\widetilde{O}\|_{\infty} 
    +
    \sqrt{\frac{2}{N}\, \textnormal{Var}_{V,b\sim P_{\mathds{1}/2^n}}[\mathcal{K}^{\widetilde{O}}(V, b)]  
    \hspace{1mm} \log\!\left(\frac{M}{2\delta}\right)}.
\end{equation}
Introducing $\alpha$ allows one to \textit{choose} where the estimator sits along bias-variance tradeoff. Choosing $\alpha \ll 1$ will yield a highly-biased estimator, while choosing  $\alpha \gg 1$ forces the estimator's bias to be very small. To find 
a $\widetilde{O}$ with favorable estimation properties, consider the following steps:
\begin{enumerate}
    \item Scan over many $\alpha$. 
    \item For each $\alpha$, find the operator $\widetilde{O}_\alpha$ that minimizes the $\alpha$-parameterized cost function~\eqref{eqn:alphaparameterizedcostfcn} and calculate the corresponding $\textnormal{Var}_\textnormal{max}$ estimation error (e.g. the right side of ~\eqref{eqn:fullestimationerror} where the variance in $\epsilon_{\widetilde{O}}$ is upper bounded with $\textnormal{Var}_\textnormal{max}$)
    \item Choose the $\widetilde{O}_\alpha \in \{\widetilde{O}_\alpha\}_\alpha$ that yields the smallest $\textnormal{Var}_\textnormal{max}$ estimation error.
\end{enumerate}

\end{adjustwidth}

\vspace{6mm}
The rest of this section provides methods for minimizing the cost function with respect to the operator $\widetilde{O}$. 
Since we parameterize operators $\widetilde{O}$ in the visible space via
\begin{equation}
    \widetilde{O} = \int_{V \in \V} dV \, p(V) \sum_b \mathcal{K}^{\widetilde{O}}(V,b) \hspace{1mm} V^{\dagger} \ketbra{b}{b} V,
\end{equation}
finding the optimal $\widetilde{O}$ corresponds to finding the $\mathcal{K}^{\widetilde{O}}(V,b)$ that minimizes the estimation error cost function. In general, it is impossible to completely optimize for the best $\mathcal{K}^{\widetilde{O}}(V,b)$; since in the most general setting we would have to define $\mathcal{K}^{\widetilde{O}}(V,b)$ for all $V \in \V$ and computational basis states $|b\rangle$, the number of parameters over which we need to optimize grows exponentially with the system size. However, for certain cases we can employ tricks to find a $\mathcal{K}^{\widetilde{O}}(V,b)$ that minimizes our estimation error.
We will discuss two optimization methods in the subsections which follow. The first and simpler method works for small system sizes. The second method derives a new parameterization for $\widetilde{O}$ in order to be able to efficiently optimize on the bias variance tradeoff for local $O$ on a large system so long as each $V \in \V$ can be written as a low-depth quantum circuit.

\subsection{Optimization for small system sizes}\label{subsec:appBVSmallSystems}
For small system sizes, we can optimize on the bias variance tradeoff to find the $\mathcal{K}^{\widetilde{O}}(V,b)$ such that the operator $\widetilde{O}$ minimizes the estimation error ``cost function''~\eqref{eqn:thecostfunctionappx}. If $\V$ is a continuous set, it is useful to replace it with a finite set, which can be achieved by subsampling.  For the moment, let us assume that $\V$ is a discrete and finite set. We parameterize $\widetilde{O}$ as
\begin{align}
\widetilde{O}(\mathcal{K}) = \sum_{V \in \V} p(V) \sum_b \mathcal{K}(V,b) \, V^\dagger |b\rangle \langle b| V\,,
\end{align}
where here $\mathcal{K}(V,b)$ is a $|\V| \times 2^n$ vector.
To find the optimal $\widetilde{O}$ which minimizes our cost function, we can treat the cost function as being a function of $\mathcal{K}$,
\begin{equation}\label{eqn:piecewisekcostfunctionk}
    \textnormal{Cost}(\mathcal{K}) = \|O-\widetilde{O}(\mathcal{K})\|_{\infty}
    +
    \sqrt{\frac{2}{N}\, \textnormal{Var}_{V,b\sim P_{\mathds{1}/2^n}}\left[\mathcal{K}(V, b)\right]  
    \hspace{1mm} \log\left(\frac{M}{2\delta}\right)}\,,
\end{equation} 
and then minimize over $\mathcal{K}$.  Since the cost function is convex in the entries of $\mathcal{K}$,  we simply need to employ a convex optimization algorithm (e.g.~gradient descent). Solving this convex optimization problem is feasible when $|\V|$ and the number of qubits $n$ are sufficiently small.



\subsection{Optimization for local operators and low-depth implementable unitaries}\label{subsec:appBVLargeSystems}

If we have a large system size, we can still efficiently optimize on the bias-variance tradeoff as long as our operator $O$ acts on a small, local subsystem and all $V \in \V$ can be implemented via low-depth circuits. 
Just as in the previous section, we optimize on the bias variance tradeoff by finding the $\widetilde{O}$ that minimizes the cost function~\eqref{eqn:thecostfunctionappx}
given some fixed number of measurements. We can employ a convex optimization algorithm (e.g.~gradient descent) to find the minimum of this cost function. 

The trick for doing this optimization efficiently on a large system will arise from how we parameterize $\widetilde{O}$.  As before, let us suppose that $\V$ is a finite set, and if it is not, we can render a finite set via sub-sampling.
Then we will parameterize $\widetilde{O}$ as follows:
\begin{equation}\label{eqn:parameterizeOIZstrings}
    \widetilde{O} = \sum_{V \in \V} p(V) \sum_{P \in \text{Pauli}_{\mathds{1},Z}(n)} \mathcal{J}^{\widetilde{O} }(V,P) \hspace{1mm}  V^{\dagger}P V\, ,
\end{equation}
 where $\text{Pauli}_{\mathds{1},Z}(n)$ is the set of $n$-qubit Paulis constructed from only $\mathds{1}$'s and $Z$'s.
 $\mathcal{J}^{\widetilde{O}}(V,P)$ provides a  parameterization of $\widetilde{O}$ related to that of $\mathcal{K}^{\widetilde{O}}(V,b)$. Below we will show how the two parameterizations are related, but first we will develop some convenient notation in order to show why this parameterization is so useful.
 
Supposing that we have $n$ qubits, let $L$ denote a subset of the sites $\{1,...,n\}$.  We say that an operator is $L$-local if it has identity elements on all sites except those corresponding to $L$.  A $k$-local operator in the usual sense (i.e.~$k$ is an integer, and not a set) has a set $L$ with $|L| = k$. Let $L_O$ be the subset of sites on which the operator $O$ is supported.  Define the potentially larger set of sites
\begin{equation}
L_{O,\V} = \bigcup_{V \in \V} L_{V O V^\dagger} \, 
\end{equation}
that $O$ can be spread to by $\V$.
When we derive our expression for $\mathcal{J}^{ \widetilde{O} }(V,P)$, we will see that
$\mathcal{J}^{ \widetilde{O} }(V,P) = 0$ when $P$ is not 
$L_{\widetilde{O},\V}\text{-local}\,$.
As such, we have
\begin{equation}
    \widetilde{O} = \sum_{V \in \V} p(V) \sum_{\substack{\{P \in \text{Pauli}_{\mathds{1},Z}(n) \, : \\\, P\text{ is }L_{\widetilde{O},\V}\text{-local}\}}}\mathcal{J}^{ \widetilde{O} }(V,P) \hspace{1mm}  V^{\dagger}P V\,.
\end{equation} 
This new expression for $\widetilde{O}$ reduces our parameterization from $\sim 2^n$ to $\sim 2^{L_{\widetilde{O},\V}}$ parameters, making it it feasible to estimate sufficiently-local operators on large system sizes. 



Finally we will discuss how to relate the $\mathcal{K}^{\widetilde{O}}(V,b)$ parameterization to the $\mathcal{J}^{\widetilde{O}}(V,P)$ parameterization of $\widetilde{O}$ over the all strings in $\text{Pauli}_{\mathds{1},Z}(n)$.  In establishing this connection, we show that the two parameterizations are equivalent. Starting with the $\mathcal{K}^{\widetilde{O}}(V,b)$ expression, we have 
\begin{equation}
    \widetilde{O} = \int_{V \in \V} dV \, p(V) \sum_{b} \mathcal{K}^{\widetilde{O}}(V,b) V^{\dagger}\ketbra{b}{b}V\,.
\end{equation}
The $n$-qubit computational basis state $\ketbra{b}{b}$ can be expanded in terms of strings of $\mathds{1}$ and $Z$ Paulis. For a single qubit, $\ketbra{0}{0} = \frac{1}{2} (\mathds{1} +Z)$ and $\ketbra{1}{1} = \frac{1}{2} (\mathds{1} -Z)$. Expanding this for an $n$-qubit $\ketbra{b}{b}$, we can write
\begin{equation}
    \ketbra{b}{b} = \frac{1}{2^n} \sum_{P \in \text{Pauli}_{\mathds{1},Z}(n)} (-1)^{f(b,P)}P,
\end{equation}
where $f(b,P)$ counts the number of sites that have \emph{both} a $\ket{1}$ in the computational basis state $|b\rangle$ \textit{and} a $Z$ in the Pauli string $P$. Then our equation for $\widetilde{O}$ becomes
\begin{eqnarray}
    \widetilde{O} &=& \int_{V \in \V} dV \, p(V) \sum_{P \in \text{Pauli}_{\mathds{1},Z}(n)} \sum_{b} \mathcal{K}^{\widetilde{O}}(V,b) \frac{1}{2^n} (-1)^{f(b,P)} V^{\dagger}PV \\
    &=& \int_{V \in \V} dV \, p(V) \sum_{P \in \text{Pauli}_{\mathds{1},Z}(n)} \mathcal{J}^{\widetilde{O}}(V,P) V^{\dagger}PV\,,
\end{eqnarray}
where
\begin{align}
\mathcal{J}^{\widetilde{O}}(V,P) = \frac{1}{2^n}\sum_{b} \mathcal{K}(V,b) \,(-1)^{f(b,P)}\,.
\end{align}

\section{\label{app:Adaptivity} Adaptive probability density functions}
Here we discuss how to construct an optimal probability density function $p(V)$ with support $\V$ for estimating $\tr(\rho O)$. We answer the following question: for fixed $\V$ and $O$, what is a good choice of $p(V)$?
We refer to the \textit{optimal} probability density function as the one which minimizes the number of measurements required to estimate $\tr(\rho O)$. At a high level, this corresponds to assigning a probability $p(V)$ to each $V \in \V$ where unitaries that are more ``helpful'' in estimating $\tr(\rho O)$ are assigned a higher probability.
For example, consider some $V_*$ where $\mathcal{K}^{O}(V_*,b)$ is zero for all $b$. The unitary $V_*$ gives us no information about $\tr(\rho O)$, and therefore, measuring with it is wasteful. 
Therefore, we want $V_*$ to have a small probability of being sampled in our randomized measurements. 
We can imagine finding a probability density function over the unitaries $\V$ that most efficiently estimates $\tr(\rho O)$.
Spending less time measuring with wasteful unitiaries, we are more efficient with our measurements and, therefore, estimate $\tr(\rho O)$ with fewer total measurements.

Let us notate our optimal probability density function by $q(V)$.  The first section below explains how to update $\mathcal{K}^O\rightarrow\mathcal{K}_q^O$, where $\mathcal{K}^O$ is defined for some distribution $p(V)$ and $\mathcal{K}_q^O$ for the optimal probability density function $q(V)$. 
Notice that finding the optimal probability density function for estimating $\tr(\rho O)$ depends on both the state $\rho$ and the operator $O$. Since we may not have any a priori information about $\rho$, we will find the probability density function corresponding to the fewest number of measurements for the worst case $\rho$.
The second section below derives and gives intuition for the optimal probability density function when the variance used in Theorem~\ref{thm:shadowtomtheorem} is the max variance over states. 
The third section extends these results and derives the optimal probability density function when estimating multiple operators. For completeness, we also provide a the fourth section, where we derives the optimal probability density function in the special case that $\rho$ is the maximally mixed state.

For ease of notation, let us suppose that $\V$ is a finite set.  However, we note that all of our results generalize to continuous sets $\V$.

\vspace{3mm}
\subsection{How to update $\mathcal{K}^O$ for a new probability density function} \label{sec:UpdateKforPDF}

Consider some 
$\mathcal{K}^O(V,b)$ defined for the operator $O$ and the probability density function $p(V)$. We have our usual formula
\begin{equation}
    O = \int_{V \in \V} dV \, p(V) \sum_b \mathcal{K}^O(V,b) V^\dagger \ketbra{b}{b} V.
 \end{equation}
We will define a new, optimal probability density function $q(V)$ in the following section, and we want to be able to use this probability density function to construct our operator $O$ as we did in the expression above. Therefore, we will need to define a new and suitable $\mathcal{K}$, which we call $\mathcal{K}_q$.  Starting with our usual formula we can derive $\mathcal{K}_q$,
\begin{eqnarray}
    O &=& \sum_{V \in \V} q(V) \frac{p(V)}{q(V)} \sum_b \mathcal{K}^O(V,b) V^\dagger \ketbra{b}{b} V \\
    &=&  \sum_{V \in \V} q(V) \sum_b \mathcal{K}_q(V,b) V^\dagger \ketbra{b}{b} V \\
     &=&  \mathbb{E}_{V \sim q} \sum_b \mathcal{K}_q(V,b) V^\dagger \ketbra{b}{b} V\,, \\
 \end{eqnarray}
where
\begin{equation}
\mathcal{K}_q(V,b) = \mathcal{K}(V,b)\, \frac{p(V)}{q(V)}\,.
\end{equation}
As such, if we sample $V$ with probability $q(V)$, we can estimate $\tr(\rho O)$ with the empirical average
\begin{equation}\label{eqn:appNewEstimator}
    \tr(\rho O) \approx
    \frac{1}{N} \sum_{i=1}^N \mathcal{K}_q (V,b)\,.
\end{equation}

\subsection{Optimal probability density function} \label{sec:PDFmaxvar}

Next we will solve for the $q(V)$ that gives the smallest $\textnormal{Var}_{\textnormal{max}} [\mathcal{K}_{q}(V,b)]$.
Writing out this variance as
\begin{align}\label{eqn:varToMinimize}
    \textnormal{Var}_{\textnormal{max}}[\mathcal{K}_{q}(V,b)] &= \sum_{V \in \V} q(V) \max_b \hspace{1mm} \mathcal{K}_q(V,b)^2 \nonumber \\
    &= \sum_{V \in \V} q(V) \max_b \frac{p(V)^2}{q(V)^2}\,\mathcal{K}(V,b)^2\,,
\end{align}
we see that we would like to minimize
\begin{equation}
    \sum_{V \in \V} q(V) \max_b \frac{p(V)^2}{q(V)^2}\mathcal{K}(V,b)^2 \hspace{1mm} - \lambda\left(1-\sum_{V \in \V} q(V)\right),
\end{equation}
where $\lambda$ is a Lagrange multiplier. 
This expression is simply $\textnormal{Var}_{\textnormal{max}}$ with an additional term enforcing the normalization of $q(V)$. 
Taking the derivative of the above expression with respect to $q(V)$ and setting the derivative to zero, we find that
\begin{equation}\label{eqn:appxOptimalGammarho}
    q(V) = \frac{1}{\sqrt{\lambda}} \, p(V) \max_b  |\mathcal{K}(V,b)|\,.
\end{equation}
In this expression our Lagrange multiplier provides a normalization factor, which we find to be
\begin{equation}\label{eqn:appxLagrangeMultrho}
    \sqrt{\lambda} = \sum_{\{V\}} p(V) \max_b  |\mathcal{K}(V,b)|\,.
\end{equation}
These are the results quoted in the main text, where the normalization factor $\mathcal{N}$ is simply $\mathcal{N} = \sqrt{\lambda}$. 

\subsection{Optimal probability density function for many operators} \label{sec:PDFmanyop}

Our formulas generalize to the setting of multiple operators $\{O_1,O_2,...,O_M\}$. 
In this context we would like to find the $q(V)$ that minimizes
\begin{equation}
    \max_i \textnormal{Var}_{\textnormal{max}}[\mathcal{K}^{O_i}_{q}(V,b)] = \sum_{\{V\}} q(V) \max_{i,b} \hspace{1mm} \mathcal{K}^{O_i}_q(V,b)^2\,. \hspace{1mm}
\end{equation}
An essentially identical derivation as above gives
\begin{equation}
    q(V) = \frac{1}{\mathcal{N}}  \,p(V) \max_{i,b}  |\mathcal{K}^{O_i}(V,b)|\,,
\end{equation}
where
\begin{equation}
    \mathcal{N} = \sum_{V \in \V} p(V) \max_{i,b}  |\mathcal{K}^{O_i}(V,b)|\,.
\end{equation}

\subsection{Probability density function for the infinite temperature state}\label{sec:PDFmaxmixed}

Instead of using the maximum variance over all states $\rho$, if we happen to know information about our state we can tailor our $q$-optimization accordingly.  As a toy example, suppose that $\rho$ is the maximally mixed state.
Then we have
\begin{eqnarray}
    \textnormal{Var}_{V,b} \mathcal{K}_{q}(V,b)
    &=&\sum_{V \in \V} q(V) \sum_b \bra{b}V\rho V^\dagger \ket{b} \frac{p(V)^2}{q(V)^2} \mathcal{K}(V,b)^2 \hspace{1mm} - \tr(\rho O)^2 \\ 
    &=& \frac{1}{2^n} \sum_{V \in \V} q(V) \sum_b \frac{p(V)^2}{q(V)^2} \mathcal{K}(V,b)^2 \hspace{1mm} - \tr\!\left(\frac{1}{2^n} O\right)^2\,.
\end{eqnarray}
In the second line, we have substituted $\rho = \frac{1}{2^n} \mathds{1}$ into the variance $\textnormal{Var}_{V,b\,\sim\,P_\rho}[\mathcal{K}^{O}_{q}(V, b)]$ defined in Appendix~\ref{app:VisibleSpace}. 
We again minimize the variance using a Lagrange multiplier to enforce the normalization and find the optimal $q(V)$ to be 
\begin{equation}\label{eqn:appxOptimalGammaMIXED} 
    q(V) = \frac{1}{\mathcal{N}} \,p(V) \left(\sum_b \mathcal{K}(V,b)^2\right)^{\frac{1}{2}}.
\end{equation}
with
\begin{equation}
    \mathcal{N} = \sum_{V \in \V} p(V) \left(\sum_b \mathcal{K}(V,b)^2\right)^{\frac{1}{2}}.
\end{equation}

\section{\label{app:NumericsLGT} Estimating the energy density of a $U(1)$ lattice gauge theory}

This Appendix provides details on the lattice gauge theory numerics discussed in the Applications section of the main text. 
We consider 2+1 $U(1)$ lattice gauge theory, with the bosonic degrees of freedom truncated to finite dimensions.  The lattice lives in two spatial
dimensions, tiled by spatial triangles with periodic boundary conditions. For the moment, let us begin with the case of one spatial triangle, and we will eventually generalize to a full two-dimensional spatial triangular lattice.
In this setting, our one spatial triangle has three points, and the corresponding Hamiltonian 
\begin{equation}
H_{3\,\text{points}} = \sum_{s} \left(H_\Delta^{(s)} + \sum_{j=1}^3 H_{\leftrightarrow}^{(s,j)}\right)
\end{equation}
has two types of terms: (i) individual triangular plaquette terms $H_{\triangle}$ that represent the magnetic part of the energy,
 \begin{equation}\label{eqn:TriangleH}
    H_{\Delta}^{(s)} :=  - \frac{1}{24g^2}(X_{1,s}X_{2,s}X_{3,s} - Y_{1,s}Y_{2,s}X_{3,s}  - Y_{1,s}X_{2,s}Y_{3,s} - X_{1,s}Y_{2,s}Y_{3,s})\,,
\end{equation}
and (ii) link terms $H_{\leftrightarrow}$ that represent the electric part of the energy and couple adjacent plaquettes,
\begin{equation}\label{eqn:LinkH}
    H_{\leftrightarrow}^{(s,j)} :=  \frac{g^2}{3}Z_{j,s}Z_{j,s+1}  +  \frac{\alpha}{12g^2}(X_{j,s}X_{j,s+1} + Y_{j,s}Y_{j,s+1})\,.
\end{equation}
Here $s$ labels the bosonic modes. We can visualize the one spatial triangle system as a stack of triangular plaquettes each labelled by $s$, and adjacent triangles are coupled to one another by edge degrees of freedom indexed by $j$.  More specifically, the index $j$ enumerates over the pairs of adjacent vertices between two triangular plaquettes. See Supplementary Figure~\ref{fig:LGTfiglattice} for a depiction.  If $s_{\text{max}}$ is the maximum number of bosonic modes, then $H_{3\,\text{points}}$ acts on the Hilbert space $\mathcal{H} \simeq (\mathbb{C}^2)^{\otimes (3 s_{\text{max}})}$. Moreover, we assume periodic boundary conditions between the top plaquette $s=s_\textnormal{max}$ and the bottom plaquette $s=s_1$.

Now suppose that we have a triangular lattice $\mathscr{T} = \{(a_i,b_i,c_i)\}_i$, where $(a_i,b_i,c_i)$ label vertices of a single spatial triangle and the index $i$ enumerates over the spatial triangles.  To each spatial triangle, we associate a stack of triangular plaquettes of height $s_{\text{max}}$, and a corresponding Hilbert space $\mathcal{H}_{(a_i, b_i, c_i)} \simeq (\mathbb{C}^2)^{\otimes (3 s_{\text{max}})}$.  Then $H_{3\,\text{points}}^{(a_i,b_i,c_i)}$ is the same as $H_{3\,\text{points}}$, where the vertices in $H_{3\,\text{points}}$ are ${(a_i,b_i,c_i)}$. The total Hamiltonian for the triangular lattice $\mathscr{T}$ is
\begin{equation}
H = \sum_{(a_i, b_i, c_i) \in \mathscr{T}} H_{3\,\text{points}}^{(a_i,b_i,c_i)}\,.
\end{equation}
Since we require this spatial triangular lattice to have periodic boundary conditions, it lives on a torus, and thus for $|\mathscr{T}|$ spatial triangles and $s_\textnormal{max}$ bosonic modes, our system has $\frac{3}{2}|\mathscr{T}|s_\textnormal{max}$ total qubits. Note that in order to enforce periodic boundary conditions, we assume an even number of spatial triangles $|\mathscr{T}|$.

\begin{figure*}[ht]
\centering
\includegraphics[scale = 0.33]{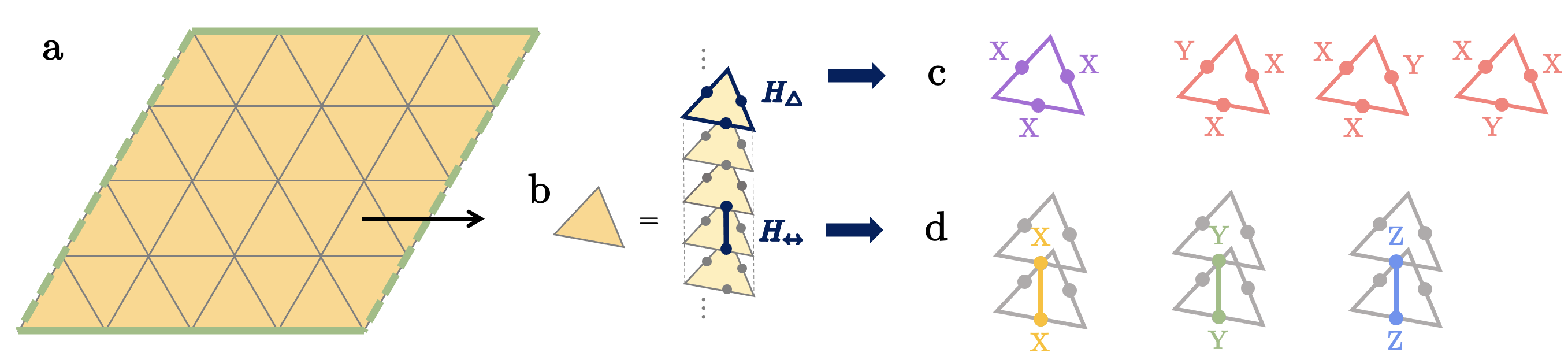}
\caption{\emph{Triangular lattice representing a $2+1$ $U(1)$ lattice gauge theory.} (a) The spatial triangular lattice lives in two spatial dimensions, and we assume periodic boundary conditions (connect the solid and dashed green lines). (b) A single spatial triangle is made up of a stack of triangular plaquettes, which also have periodic boundary conditions. The Hamiltonian on this stacked triangular lattice has two types of terms, $H_\triangle$ and $H_\leftrightarrow$, each of which live in the global $SU(2)$ visible space. We can express both of these terms using the basis defined in Proposition \ref{prop:basisGlobalSu2VisInvis}, and here we depict (b) $H_\triangle$  and (c) $H_\leftrightarrow$ terms' decomposition into visible basis elements. Each visible basis element is represented by a different color. }
\label{fig:LGTfiglattice}
\end{figure*}

\begin{observation}
    The Hamiltonian $H$ lives entirely in the global $SU(2)$ visible space.
\end{observation}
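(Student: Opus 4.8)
The plan is to use linearity of the visible space together with the explicit basis from Proposition~\ref{prop:basisGlobalSu2VisInvis}. Writing $H = \sum_{(a_i,b_i,c_i)\in\mathscr{T}} H_{3\,\text{points}}^{(a_i,b_i,c_i)}$, every summand is itself a finite linear combination of the two elementary operator types $H_\Delta^{(s)}$ and $H_\leftrightarrow^{(s,j)}$ of Eqs.~\eqref{eqn:TriangleH}--\eqref{eqn:LinkH}. Since $\textnormal{\textsf{VisibleSpace}}(\text{Global }SU(2))$ is a linear subspace, it suffices to show that each of these two operator types lies in $\mathrm{span}\,\{B_S\}$. I will regard each such operator as acting on the full system of $N = \tfrac32|\mathscr{T}|s_\text{max}$ qubits, padded with identities outside its stated support; the relevant fixed-$\mathds{1}$ permutation-invariant sets then have $R$ equal to the complement of that support.

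For the link terms, I would observe that each of $Z_{j,s}Z_{j,s+1}$, $X_{j,s}X_{j,s+1}$, $Y_{j,s}Y_{j,s+1}$ contains only a single non-identity Pauli type, hence is the unique element of its fixed-$\mathds{1}$ permutation-invariant set, i.e.\ equals $\sqrt{2^N}\,B_S$ for the appropriate singleton set $S$ (cf.\ Observation~\ref{obs:BiAndBiPERPBasis}, and also Appendix~\ref{app:globalCL2}, since these are precisely the global $\text{Cl}(2)$ visible operators and $\text{Cl}(2)\subset SU(2)$). Thus $H_\leftrightarrow^{(s,j)}$ is a linear combination of three visible basis elements and is therefore visible.

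For the triangle terms, $X_{1,s}X_{2,s}X_{3,s}$ is again the unique element of its set, hence a multiple of $B_{S_{R,3,0,0}}$. The three remaining Paulis in $H_\Delta^{(s)}$, namely $Y_{1,s}Y_{2,s}X_{3,s}$, $Y_{1,s}X_{2,s}Y_{3,s}$, $X_{1,s}Y_{2,s}Y_{3,s}$, are exactly the $\binom{3}{1,2,0}=3$ Pauli strings on sites $\{1,2,3\}$ of mode $s$ having one $X$ and two $Y$'s; they comprise the entire set $S_{R,1,2,0}$, so their sum equals $\sqrt{3\cdot 2^N}\,B_{S_{R,1,2,0}}$. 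Hence $H_\Delta^{(s)} = -\tfrac{1}{24g^2}\big(\sqrt{2^N}\,B_{S_{R,3,0,0}} - \sqrt{3\cdot 2^N}\,B_{S_{R,1,2,0}}\big)$ is visible, and summing over all spatial triangles, all modes $s$, and all links $j$ completes the argument. The one nontrivial point — and the step I would emphasize — is that the three ``$XYY$-type'' terms occur in $H_\Delta^{(s)}$ with a \emph{common} coefficient: any single one of them in isolation has nonzero overlap with the invisible basis elements $B^\perp_{S_{R,1,2,0},k}$ and would not be visible, so it is essential that they assemble into the fully permutation-symmetric sum. Everything else is bookkeeping about which singleton and non-singleton fixed-$\mathds{1}$ sets appear.
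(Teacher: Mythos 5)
Your proof is correct and takes essentially the same route as the paper's: decompose $H$ into link and plaquette terms and identify each as a linear combination of visible basis elements $B_S$ with $(n_X,n_Y,n_Z)\in\{(2,0,0),(0,2,0),(0,0,2)\}$ for the links and $(3,0,0),(1,2,0)$ for the plaquettes. Your version is simply more explicit — working out the normalizations and highlighting that the three $XYY$-type strings must appear with a common coefficient to assemble into the permutation-symmetric $B_{S_{R,1,2,0}}$ — which is a worthwhile point the paper's proof leaves implicit.
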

\begin{proof}
The Hamiltonian $H$ is made up of link terms $H_{\leftrightarrow}$ and plaquette terms $H_{\Delta}$ acting on various sets of qubits. We will show that both types of terms are in $\text{\textsf{VisibleSpace}}(\text{Global }SU(2))$, and so the total Hamiltonian is as well.
The link terms $H_{\leftrightarrow}$ are in $\text{\textsf{VisibleSpace}}(\text{Global }SU(2))$ because they are a linear combination of $2$-local $B_S$ basis elements (see~\eqref{eqn:defineBi} in Appendix~\ref{app:globalSU2}) defined by fixed-$\mathds{1}$ permutation-invariant sets with $(n_X,n_Y,n_Z) = (2,0,0)$, $(n_X,n_Y,n_Z) = (0,2,0)$, and $(n_X,n_Y,n_Z) = (0,0,2)$. Similarly, the plaquette terms $H_{\Delta}^{(s)}$ live in $\text{\textsf{VisibleSpace}}(\text{Global }SU(2))$ because they are a linear combination of two $3$-local $B_i$ elements corresponding to the fixed-$\mathds{1}$ permutation-invariant sets $(n_X,n_Y,n_Z) = (3,0,0)$ and $(n_X,n_Y,n_Z) = (1,2,0)$.
\end{proof}

Since the Hamiltonian $H$ lives entirely in the global $SU(2)$ visible space, we can estimate its energy density by utilizing our $\mathcal{K}(V,b)$ formalism with \textit{only} global $SU(2)$ control. In order to obtain the energy density, we estimate the energy of all local plaquette and link terms, $H_{\triangle}$ and $H_{\leftrightarrow}$.

For this problem of estimating the energy density, we wish to compare our probability density function adaptivity and bias-variance optimizing methods to ordinary classical shadow tomography, and so we will set
$\mathcal{K}^{\leftrightarrow}=\mathcal{K}^{\leftrightarrow}_\textnormal{CS}$ and $\mathcal{K}^{\triangle}=\mathcal{K}^{\triangle}_\textnormal{CS}$ for our link and plaquette terms, respectively. These are the ``classical shadow versions''  of $\mathcal{K}$ (Definition \ref{def:shadowversionofK} in Appendix~\ref{app:VisibleSpace}), and using these $\mathcal{K}$-representations of our operators, we will optimize on the bias-variance tradeoff and adapt the probability density function. Rather than working with the full $SU(2)$ group in our simulations, we sample a subset of $25$ unitaries $\{V'\} = \V'$ from $SU(2)$. It is on this set $\V'$ that we define $\mathcal{K}^{\leftrightarrow}_\textnormal{CS}(V',b)$ and $\mathcal{K}^{\triangle}_\textnormal{CS}(V',b)$.

Note that for an operator $O$, the classical shadow kernel $\mathcal{K}_{\textnormal{CS}}^O$ assumes that the visible space contains all unitaries in $SU(2)$, whereas in our simulations we utilize a discrete subset $\V' = \{V'\}$ of $25$ unitaries sampled from $SU(2)$. As a result, we cannot construct $\mathcal{K}_\textnormal{CS}^O$ exactly as formulated in~\eqref{eqn:shadowVersionKVb}, and so we come up with a modified expression for $\mathcal{K}_\textnormal{CS}^O$. In particular, define the kernel $K$ by
\begin{equation}\label{eqn:LGTapproxshadowK}
    \Vec{K} = M^{-1} \Vec{O}
    \hspace{5mm}\textnormal{ where }
    \hspace{3mm}
    M=
    \begin{bmatrix} 
	| &   &   \\
	V'^\dagger\ketbra{b}{b}V' & \cdots & \\
	| &   &    \\
	\end{bmatrix}\,.
\end{equation}
$\Vec{K}$ is a length $25 \times 2^n$ vector, where each entry $K_{(V',b)}$ corresponds to some $(V',b)$ pair such that $V' \in \V'$ and $b \in \{0,1\}^n$. 
$\Vec{O}$ is the flattened version of the $n$-qubit operator $O$, meaning its $2^n \times 2^n$ entries are now represented in a length-$4^n$ vector.
Similarly, the columns of the matrix $M$ are the flattened operators $V'^\dagger \ketbra{b}{b} V'$, where the ordering of $V'^\dagger \ketbra{b}{b} V'$s is the same as the indexing of $(V',b)$ pairs in $\Vec{k}$. 
As a result, since each entry $K_{(V',b)}$ of $\Vec{K}$ corresponds to some pairing of $V'$ and $b$, we set 
\begin{equation}\label{eqn:E4LGTappendix_ktoK}
    \mathcal{K}^O(V',b) = K_{(V',b)}\,.
\end{equation}
This expression has an inverse channel similar to the inverse measurement channel $\mathcal{M}^{-1}_{SU(2)}$ and retains a similar structure to the kernel $\mathcal{K}_\textnormal{CS}^O$ defined over all of $SU(2)$.
Using the version of $\mathcal{K}^O$ in~\eqref{eqn:E4LGTappendix_ktoK} we will showcase our techniques of adapting the probability density function and optimizing on the bias-variance tradeoff.

In the two Appendix subsections below, we discuss the numerics for Figures 3(b) and 3(c).
In each we consider the lattice gauge theory described above with our constants $g = \alpha = 1$ and work with quantities that are a function of variance $\textnormal{Var}_{V',b\,\sim\,P_\rho}[\mathcal{K}^O(V', b)]$. 
We calculate the following quantities in 3(b) and 3(c) respectively: estimation error~\eqref{eqn:thecostfunctionappx} and the number of measurements (Theorem \ref{thm:shadowtomtheorem}).
Recall that the variance depends on $\rho$ because $P_\rho(V',b) = p(V') \bra{b}V'\rho V'^\dagger\ket{b} $, and since we do not know the state $\rho$, we upper bound variance with
\begin{equation}\label{eqn:varupperboundLGT}
    \textnormal{Var}_{V',b\,\sim\,P_\rho} \mathcal{K}^O(V',b) \leq \sum_{V' \in \V'} p(V') \max_b \mathcal{K}^O(V',b)^2\,.
\end{equation}
Therefore, the data represented in both plots represents an upper bound on the estimation error (Figure 3(b)) and number of measurements (Figure 3(c)). 

\vspace{3mm}

\subsection{Showcasing the bias-variance tradeoff}\label{appx:LGT_fig3b}

In order to construct the energy density profile of our lattice gauge theory, we have to estimate $\expval{H_\leftrightarrow}$ on many pairs of qubits. 
Figure 3(b) showcases the utility of bias-variance tradeoffs for this task: estimating $\expval{\widetilde{H}_\leftrightarrow}$ instead of $\expval{H_\leftrightarrow}$ will reduce our estimation error when $\widetilde{H}_\leftrightarrow$ is close to $H_\leftrightarrow$ but has much smaller variance.
As discussed in Appendix~\ref{app:OptimizingBiasVar}, the estimation error has a bias term and a variance term. Using $\widetilde{H}_\leftrightarrow$ rather than $H_\leftrightarrow$ allows us to significantly decrease the variance term by slightly increasing the bias term. 
However, at some point the operator $\widetilde{H}_\leftrightarrow$ becomes so biased that the bias outweighs the gains made by reducing the variance.
Hence, we have a bias-variance `tradeoff'. 
This behavior can be observed in Figure 3(b)'s bowl shape. 

In the present subsection, we will discuss how we find the biased operators $\widetilde{H}_\leftrightarrow$ represented as points in Figure 3(b). 
Finding a biased operator $\widetilde{H}_\leftrightarrow$ corresponds to finding a kernel that represents $\widetilde{H}_\leftrightarrow$. 
Consider~\eqref{eqn:E4LGTappendix_ktoK} above, where we set $O = H_\leftrightarrow$. 
To obtain this kernel,
one solves for the \textit{unbiased} $\Vec{K}^\leftrightarrow$ that satisfies $M \Vec{K}^\leftrightarrow = \Vec{H}_\leftrightarrow$, where $\Vec{H}_\leftrightarrow$ is the flattened version of $H_\leftrightarrow$, and sets $\mathcal{K}^\leftrightarrow(V',b) = K^\leftrightarrow_{(V',b)}$. 
Here, we will solve for a \textit{biased} vector $\Vec{K}^\leftrightarrow (\lambda)$, where $\lambda$ parameterizes the bias, and set $\mathcal{K}^{\leftrightarrow}_\lambda(V',b)$ to be the $(V',b)$ entry of $\Vec{K}^\leftrightarrow (\lambda)$.
We obtain $\Vec{K}^\leftrightarrow (\lambda)$ by solving the equation
\begin{equation}\label{eqn:biasedksolveLS}
    M^\dagger M \cdot \Vec{K}^\leftrightarrow + \lambda \mathds{1}= M^\dagger \cdot \Vec{H}_\leftrightarrow\,.
\end{equation}
First, notice that the \textit{unbiased} $\Vec{K}^\leftrightarrow$ that is a solution to $M \Vec{K}^\leftrightarrow = \Vec{H}_\leftrightarrow$ will also be a solution to~\eqref{eqn:biasedksolveLS} when $\lambda = 0$. Therefore, $\Vec{K}^\leftrightarrow (\lambda=0)$ is unbiased and defined as 
\begin{equation}\label{eqn:zerolambdakvec}
    \Vec{K}^\leftrightarrow (0) = M^{-1}\Vec{H}_\leftrightarrow\,.
\end{equation}
Next, notice that as $\lambda$ grows, our solution $\Vec{K}^\leftrightarrow (\lambda)$ corresponds to a more biased $\widetilde{H}_\leftrightarrow$.
We generated the points in Figure 3(b) by solving for $\Vec{K}^\leftrightarrow(\lambda)$ for many different $\lambda$'s; each point corresponds to a different $\lambda$, with increasing $\lambda$ corresponding to increasing bias.

Sine our goal is to optimize the bias-variance tradeoff,  
we should bias our operator $H_\leftrightarrow \rightarrow \widetilde{H}_\leftrightarrow$ in a way that reduces the variance. Therefore, given the set of all vectors that solve~\eqref{eqn:biasedksolveLS}, we choose the vector with the smallest variance. 
That is, we choose the $\Vec{K}^\leftrightarrow(\lambda)$  with the smallest variance within this degenerate solution space using Numpy's least squares regression \cite{harris2020array}. The least squares function (numpy.linalg.lstsq) returns the solution to~\eqref{eqn:biasedksolveLS} that has the smallest 2-norm, and
assuming we have the maximally-mixed state, the 2-norm $\|\Vec{K}^\leftrightarrow(\lambda)\|$ is an upper bound on the variance. See Appendix~\ref{app:OptimizingBiasVar} for comments on why we use the heuristic of a maximally-mixed state. Our implementation enables us to choose the biased operator with the smallest variance and, therefore, to optimize the bias-variance tradeoff.
Before optimizing the tradoeff, we make sure our $25$ unitary subset of $SU(2)$ allows us to construct $H_\leftrightarrow$ exactly. In other words, we guarantee that there exists a vector $\Vec{K}^\leftrightarrow$ that satisfies~\eqref{eqn:biasedksolveLS} for $\lambda = 0$. As a result, there will exist a nontrivial solution space for every $\lambda$.

Once we have the $\Vec{K}^\leftrightarrow(\lambda)$ vectors, whose $(V',b)$ entries correspond to $\mathcal{K}^{\leftrightarrow}_\lambda(V', b)$, we plot their estimation error in Figure 3(b). As mentioned earlier in this Appendix, the estimation error we plot is an upper bound. For $N$ measurements we plot 
\begin{equation}
    \textnormal{Error}(\mathcal{K}_\lambda^\leftrightarrow) = \|H_\leftrightarrow-\widetilde{H}(\mathcal{K}_\lambda^\leftrightarrow)\|_{\infty}
    +
    \sqrt{\frac{2}{N}\, \log\left(\frac{1}{2\delta}\right)
    \sum_{V'} p(V') \max_b \mathcal{K}_\lambda^{\leftrightarrow}(V', b)^2}\,,
\end{equation}
where $\widetilde{H}(\mathcal{K}_\lambda^\leftrightarrow)$ represents the biased operator $\widetilde{H}^\leftrightarrow$ constructed with $\mathcal{K}_\lambda^\leftrightarrow$ via equation \ref{eqn:OvisKVb2}.
Here $\textnormal{Error}(\mathcal{K}_\lambda^\leftrightarrow)$ is the estimation error from Appendix~\ref{app:OptimizingBiasVar} (see~\eqref{eqn:thecostfunctionappx}) with the variance upper bounded by~\eqref{eqn:varupperboundLGT}. 
We can use the expression from Appendix~\ref{app:OptimizingBiasVar} because the $\frac{1}{3}Q_\leftrightarrow \epsilon$ term in~\eqref{eqn:thm1NumMmts} is small relative to the variance in this case.

\subsection{Reducing the required number of measurements}\label{appx:LGT_fig3c}

Our goal is to estimate the energy density of our $U(1)$ lattice gauge theory by estimating the expectation value of each $H_\triangle$ and $H_\leftrightarrow$ term throughout our lattice. Using~\eqref{eqn:thm1NumMmts} in Theorem 1 (Appendix~\ref{app:VisibleSpace}), if we want to estimate $M$ terms to precision $\epsilon$ with probability at least $1-\delta$, we require a number of measurements
\begin{equation}\label{eqn:LGTnummmtsExpression}
    N = 2  \log\!\left(\frac{M}{2\delta}\right) \max_{i\in \{\triangle, \leftrightarrow\}} \frac{\textnormal{Var}_{\textnormal{max}}[\mathcal{K}^{i}] + \frac{1}{3}Q_i(\epsilon - \|H_i - \widetilde{H}(\mathcal{K}^i)\|_\infty)}{(\epsilon - \|H_i - \widetilde{H}(\mathcal{K}^i)\|_\infty)^2}\,.
\end{equation}
In our numerics, we calculate  $\mathcal{K}^\leftrightarrow$ and $\mathcal{K}^\triangle$ for four different cases: (1) ordinary classical shadow tomography, (2) solely optimizing the bias variance tradeoff, (3) solely adapting the probability density function, and (4) optimizing the bias variance tradeoff \textit{and} adapting the probability density function. For each of these cases, the $\max$ in~\eqref{eqn:LGTnummmtsExpression} was dominated by $H_\leftrightarrow$. Therefore, the link term sets the total number of measurements $N$ needed to estimate the energy density:
\begin{equation}\label{eqn:nummmtsweplotLGT}
    N \leq  \frac{2  \log \!\left(\frac{M}{2\delta}\right)}{(\epsilon - \|H_\leftrightarrow - \widetilde{H}(\mathcal{K}^\leftrightarrow)\|_\infty)^2} \left[    \sum_{V' \in \V'} p(V') \max_b \mathcal{K}^{\leftrightarrow}(V', b)^2    + \frac{1}{3}Q_\leftrightarrow(\epsilon - \|H_\leftrightarrow - \widetilde{H}(\mathcal{K}^\leftrightarrow)\|_\infty)\right]\,.
\end{equation}
In this equation, $\textnormal{Var}_{\textnormal{max}}$ is upper bounded using the inequality~\eqref{eqn:varupperboundLGT}, and we set $\epsilon = \delta = 0.1$.  
Following Theorem \ref{thm:shadowtomtheorem}, we also set $Q_\leftrightarrow$ in~\eqref{eqn:nummmtsweplotLGT} to be 
\begin{equation}
    Q_\leftrightarrow = \max_{V',b} |\mathcal{K}^{\leftrightarrow}(V', b)|\,.
\end{equation}
The number of terms $M$ we need to estimate is the total number of plaquette terms $H_{\triangle}$ and link terms $H_{\leftrightarrow}$ in our Hamiltonian. This depends on our system size: increasing system size corresponds to adding more \textit{spatial} triangles. We assume the number of stacked plaquettes $s_\textnormal{max}$  (i.e. the range of bosonic modes~\cite{brower2019lattice}) is fixed, and we choose $s_\textnormal{max}=5$. 
A lattice with $|\mathscr{T}|$ total spatial triangles has $ \frac{5}{2} |\mathscr{T}| s_{\text{max}}$ total terms of type $H_{\triangle}$ and $H_{\leftrightarrow}$. In our numerics we let the number of spatial triangles run from $|\mathscr{T}| = 2$ to $|\mathscr{T}| = 30$, and recall that since we enforce periodic boundary conditions, we only have an even number of spatial triangles $|\mathscr{T}|$.

In Figure 3(c), we plot the number of measurements (i.e.~\eqref{eqn:nummmtsweplotLGT}) for the four cases just described, and for each we obtain $\mathcal{K}^{\leftrightarrow}$ by solving for $\Vec{K}_\leftrightarrow$.  Recall that $\mathcal{K}^{\leftrightarrow}(V',b)$ is the entry in $\Vec{K}_\leftrightarrow$ corresponding to $(V',b)$.
Now, let us discuss how we obtain $\Vec{K}_\leftrightarrow$ for each case. 

\begin{enumerate}
    \item The bare shadow $\Vec{K}_\leftrightarrow$ is simply the zero-bias $\Vec{K}^\leftrightarrow(\lambda=0)$ vector we found in~\eqref{eqn:zerolambdakvec} of this Appendix.
    \item The bias-variance optimized  $\Vec{K}_\leftrightarrow$ is the vector from Figure 3(b) which yielded the smallest estimation error. 
    \item The probability density function-adapted  $\Vec{K}_\leftrightarrow$ is obtained by performing the adaptivity procedure on the bare shadow $\Vec{K}_\leftrightarrow(\lambda=0)$. The adaptivity procedure we use is given in~\eqref{eqn:appxOptimalGammarho} in Appendix~\ref{app:Adaptivity}. 
    \item Finally, the $\Vec{K}_\leftrightarrow$ that is both bias-variance optimized and probability density function adapted is created by performing the adaptivity procedure on the bias-variance optimized $\Vec{K}_\leftrightarrow$.
\end{enumerate}


\section{\label{app:NumericsPhases} Classifying topological phases with global $SU(2)$ control}

In the Applications section of the main text, we performed our learning protocol with global $SU(2)$ control and, equipped with kernel principal component analysis (PCA), were able to distinguish the toric code topological phase from the trivial phase in a 200-qubit system. See Figure~\ref{fig:Figure3} in the main text. This Appendix first describes our results, their significance, and how they fit into the broader literature (Subsection \ref{appx:phases_significance}). Then, in the following subsections, we describe in detail the numerical simulations. Subsection \ref{appx:phases_globalsu2data} describes how we construct the states in our two phases and predict their expectation values in the global $SU(2)$ visible space, and Subsection \ref{appx:phases_kernelconstructionPCA} describes how to use these expectation values with kernel PCA to classify topological phases. 
All code is available at \url{https://github.com/katherinevankirk/hardware-efficient-learning}.

\subsection{Results and their significance}\label{appx:phases_significance}

Given a gapped topological phase and a state in the phase, we can obtain other states in the same phase using low-depth geometrically-local quantum circuits \cite{chen2010local, wen2017colloquium, zeng2019quantum}.  In order to generate random states belonging to these our two desired topological phases, we utilize low-depth geometrically-local random Clifford unitaries \cite{huang2022provably}, which are efficient to simulate classically.  We apply these circuits to a product state (to generate a random state in the trivial phase) and to Kitaev’s toric code state with code distance 10 \cite{kitaev2003fault} (to generate a random state in the topological phase).

Once we have a collection of states for each of the two topological phases, we want to obtain information about them and then, using this newly obtained information, learn to classify their phases. Ref.~\cite{huang2022provably} performed tomographically-complete classical shadow tomography to construct a classical shadow $\hat{\rho}$ for each state $\rho$. The classical shadows enable access to (approximations of) nonlinear functions of $\rho$, which can act as order parameters for the topological phases and thus allow one to distinguish between the phases.

We explore how global $SU(2)$ control can be used to distinguish between topological phases.  This is particularly novel since global $SU(2)$ is far from being tomographically complete.
Our numerics show that global $SU(2)$ control collects enough information about the state $\rho$ to distinguish between the trivial phase and the toric code phase. In other words, there exists a nonlinear function on the visible space of global $SU(2)$ for classifying trivial and topologically-ordered phases.  

These results are of interest because they indicate that restricted controls can still access non-trivial order parameters. Furthermore, even for devices with tomographically complete control, our results indicate that one \textit{does not} need to utilize the entire operator space. One can estimate a smaller number $M$ of operators and still distinguish between the toric code topological phase and the trivial phase. Since our protocol's number of measurements scales as $O(\log M)$ with the number of quantities $M$ we wants to estimate, our results suggest one needs fewer measurements.

\subsection{Extracting information about states using global $SU(2)$}\label{appx:phases_globalsu2data}

We start with a square lattice of size $L \times L$ with periodic boundary conditions, where we choose $L=10$. Each edge of the lattice is assigned one qubit.  As such this lattice supports a $n=200$ qubit system, which we prepare in one of two states. We can prepare the $200$-qubit system in a product state; this state lives in the trivial phase. Or, we can prepare the system in Kitaev’s toric code state with code distance 10 \cite{kitaev2003fault}, which is in the toric code topological phase. Since we are consider two states in distinct gapped phases, we can generate other states in the same, respective phases by applying a low-depth, geometrically-local quantum circuits~\cite{chen2010local, wen2017colloquium, zeng2019quantum}. Therefore, to create $50$ random states in each phase, we create $50$ copies of each type of state (i.e.~$100$ states total) and apply a different random Clifford circuit of depth $d$ to each. 
\vspace{3mm}

\begin{figure*}[ht]
\centering
\includegraphics[scale = 0.33]{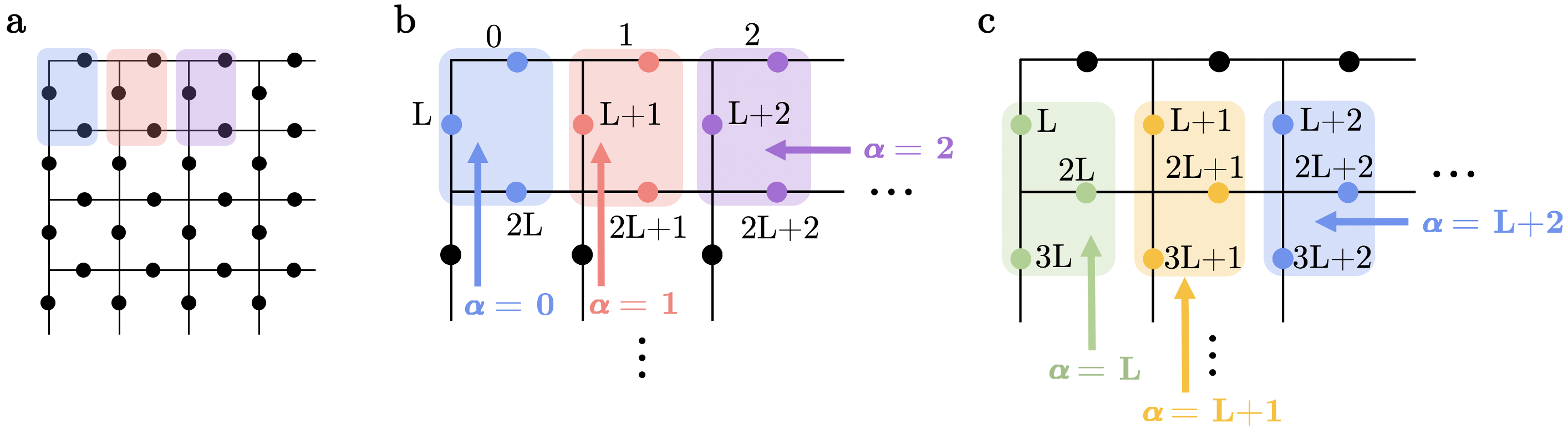}
\caption{\emph{Enumerating over all 3-qubit patches.} (a) In a 200 qubit system on an $L \times L$ square lattice ($L=10$), we consider local, $3$-qubit patches with the index $\alpha \in [0,2L^2 -2L)$, where $\alpha$ corresponds to the index of the top-most qubit in the patch. The blue, red, and purple zones represent the first three patches. (b) We label the qubits in the first three patches. (c) Next, we consider the first three patches in the second row. These patches take a slightly different form, but still the patch $\alpha$ contains qubits $\alpha$, $\alpha + L$, and $\alpha + 2L$. The third row of patches will again look like the first row.}
\label{fig:FigurePhases1}
\end{figure*}

For each state $s$ in this set of $100$ states, we want to estimate observables in the global $SU(2)$ visible space. 
Rather than estimating every global $SU(2)$ visible space observable on the $200$-qubit system, we will estimate all visible space observables on local, $3$-qubit patches. This procedure is more efficient as the (worst case) $3$-local visible space operators have lower sample complexity than the (worst case) $n$-local ones.
We enumerate over the local, $3$-qubit patches with the index $\alpha \in [0,2L^2 -2L)$, where $\alpha$ corresponds to the index of the topmost qubit in the patch. See Supplementary Figure 2. The patch $\alpha = x$ contains qubits $x$, $x + L$, and $x + 2L$. 
For each state $s$, we construct local reduced density matrices on these patches by performing random Pauli classical shadow tomography \cite{huang2020predicting} with $N_{\text{RP}}=10,000$ measurements; we repurposed code from \cite{huang2022provably}. As a result, we obtain $3$-site, high-fidelity reduced density matrices $\hat{\rho}^{(s)}_\alpha \approx \rho_\alpha^{(s)}$ for all $\alpha \in [0,2L^2 -2L)$. See Supplementary Figure 3 for a flow chart of this process.

\begin{figure*}[ht]
\centering
\includegraphics[scale = 0.33]{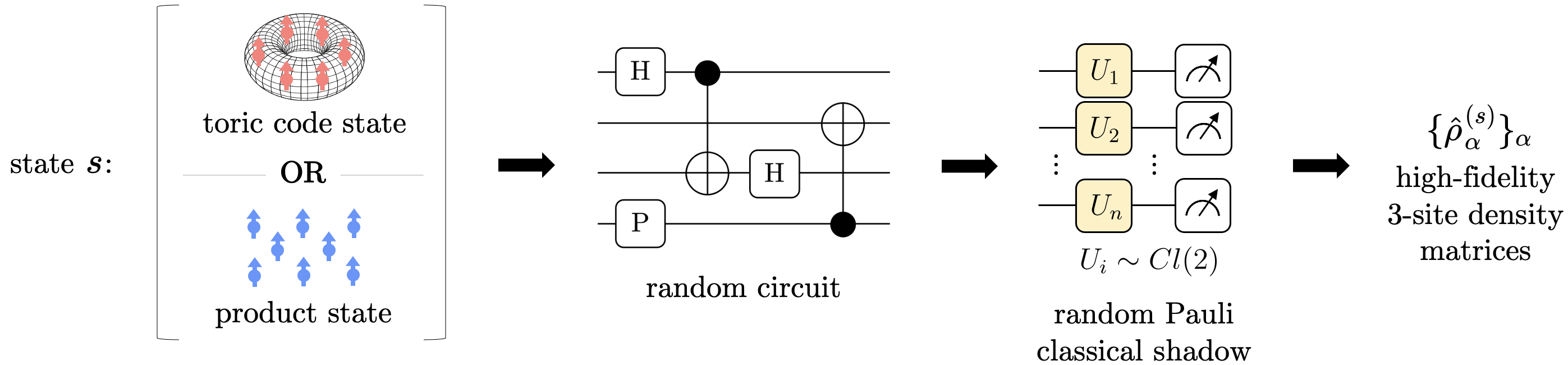}
\caption{\emph{Flow chart for creating $3$-site, high-fidelity reduced density matrices on all patches $\alpha$.} Each state $s$ is prepared in either Kitaev's toric code state \cite{kitaev2003fault} or a product state. Then, to create another state in the topological phase or trivial phase, respectively, we apply a low-depth
geometrically-local quantum circuit \cite{chen2010local, wen2017colloquium, zeng2019quantum}. By performing random Pauli classical shadow tomography \cite{huang2020predicting} on this state, we can reconstruct high-fidelity reduced density matrices $\hat{\rho}^{(s)}_\alpha$ for all patches $\alpha$.}
\label{fig:FigurePhases2}
\end{figure*}

On these newly obtained, high-fidelity reduced density matrices $\hat{\rho}^{(s)}_\alpha$, we perform $N_{SU(2)}=1000$ randomized global $SU(2)$ measurements. We will use this data to estimate our global $SU(2)$ visible space observables. Our resulting measurement data takes the form $\{(V_m,b_m)\}_{m=1}^{N_{SU(2)}}$. For each patch $\alpha$ on state $s$, we estimate all visible space expectation values $o_{\alpha,S}^{(s)} \approx \tr(\hat{\rho}^{(s)}_\alpha B_S)$ with the estimator
\begin{equation} \label{eqn:phaseestimatorvisspace}
    o_{\alpha,S}^{(s)} = \frac{1}{N_{SU(2)}} \sum_{m=1}^{N_{SU(2)}} \mathcal{K}_{\text{CS}}^{B_S}(V_m,b_m)\,.
\end{equation}
The global $SU(2)$ visible space basis $\{B_S\}_S$ was defined in Appendix \ref{app:globalSU2}, and $\mathcal{K}_{\text{CS}}$ was defined in Appendix \ref{sec:KVbAndClassicalShadows}. This $\mathcal{K}_{\text{CS}}$ requires the inverse measurement channel $\mathcal{M}^{-1}_{SU(2)}$. Since the $\mathcal{M}_{SU(2)}$ superoperator matrix (defined in Proposition \ref{prop:globalsu2mmtchannel}) is block diagonal, we obtain $\mathcal{M}^{-1}_{SU(2)}$ by inverting each block.

\subsection{Kernel PCA}\label{appx:phases_kernelconstructionPCA}

In the previous subsection, we described how we estimate the global $SU(2)$ visible space expectation values of local patches $\alpha$ in the state $\rho^{(s)}$. Recall we are working with $100$ states $\rho^{(s)}$, half of which are in the trivial phase and half in the toric code topological phase, and we have $2L^2-2L$ $3$-qubit patches $\alpha$. In this subsection we describe how we use this global $SU(2)$ visible space information to classify the phase of each state. 

For each patch $\alpha$ of the state $\rho^{(s)}$, we can construct a vector $\Vec{o}^{\hspace{1mm}(s)}_\alpha$ of the estimated visible space expectation values,  
\begin{equation}
    \Vec{o}^{\hspace{1mm}(s)}_\alpha = \left[\hspace{1mm} o_{\alpha,S_1}^{(s)}, \hspace{1mm} o_{\alpha,S_2}^{(s)}, \hspace{1mm} o_{\alpha,S_3}^{(s)}, \hspace{1mm} \cdots \hspace{1mm} \right].
\end{equation}
This vector will be used to construct a kernel $K$, a matrix where each entry $K_{s_1,s_2}$ represents the correlation between states $s_1$ and $s_2$. For our setup of $100$ states across the two phases, we will construct a $100 \times 100$ matrix, where each entry $K_{s_1,s_2}$ will be the average (exponentiated) inner product between all the patches $\alpha \in [0,2L^2 -2L)$.  In particular,
\begin{equation} \label{eqn:kernelfunctions1s2}
    K_{s_1,s_2} = \frac{1}{2L^2 - 2L} \sum_{\alpha} e^{\lambda (\Vec{o}^{\hspace{1mm}(s_1)}_\alpha \cdot \Vec{o}^{\hspace{1mm}(s_2)}_\alpha)}
\end{equation}
There is one free parameter $\lambda$, which is the same across all kernel matrix entries. Choosing the `best' $\lambda$ is a form of hyperparameter tuning. In practice, we test a variety of $\lambda$'s and in Figure~\ref{fig:Figure3} we used the following value which gave the best results: 
\begin{equation}
    \frac{1}{\lambda} = \frac{3}{100(2L^2 - 2L)} \sum_{s,\alpha} \Vec{o}^{\hspace{1mm}(s_1)}_\alpha \cdot \Vec{o}^{\hspace{1mm}(s_2)}_\alpha.
\end{equation}
Ultimately, this kernel matrix (or a projected version of it) is be fed into PCA, which is what is referred to as kernel PCA. To highlight the significance of this methodology and the format of~\eqref{eqn:kernelfunctions1s2}, we will compare what we are doing to `normal' PCA. In normal PCA, one would feed in a covariance matrix \cite{mohri2012foundations} where the entry $K_{s_1,s_2}$ would be the inner product of the data vectors of states $s_1$ and $s_2$.  However, since our kernel exponentiates the patch inner products, we are feeding nonlinear functions of our global $SU(2)$ data into PCA. This allows us to extract nonlinear features in the global $SU(2)$ data.  

Once we use our global $SU(2)$ randomized measurement data to construct our kernel, we renormalize the entries. This renormalization can also be thought of as a projection onto the unit sphere in the high dimensional feature space. Our new kernel entries take the form $K'_{s_1,s_2}$ where
\begin{equation}
    K'_{s_1,s_2} = \frac{K_{s_1,s_2}}{\sqrt{K_{s_1,s_1} K_{s_2,s_2}}}\,.
\end{equation}
While this projection yields the best results for our purposes, in general this renormalization is not required. This projection simply helps rearrange the data in the high-dimensional feature space, putting it in a form that PCA has an easier time clustering.

The renormalized kernel $K'$ is then fed into PCA; for Figure~\ref{fig:Figure3}(b) we used the scikit-learn implementation of PCA. PCA finds the axes (the ``principal components'') in the high dimensional feature space that maximize the variance of the data \cite{mohri2012foundations}. The $1$st principal component is the axis along which the data has largest variance.  For phase classification we only care about the first principal component and project the data along this axis. This one-dimensional projection of the feature space, plotted in Figure~\ref{fig:Figure3}(b), creates a low dimensional representation of each quantum state. And in this low-dimensional representation, we find that the quantum states are clustered according to their topological phase.

Moreover, we notice that as the depth of the applied random circuit increases, the low dimensional projections of the states are less clustered. While the low-depth random circuits are meant to generate additional states in the same phase~\cite{zeng2019quantum}, at a certain point, the depth becomes large enough that a nontrivial portion of the total $200$-qubit system becomes correlated. We begin to observe this for states generated via depth $d = 4$ random circuits. Since individual qubits become correlated with a substantial portion of the full system, the separation in Figure~\ref{fig:Figure3}(b) is not as strong.

\end{appendix}

\end{document}